\def\genfootnote{\gdef\@thefnmark{}\@footnotetext}
\let\@authorsaddresses\@empty
  \providecommand\BibTeX{{%
    \normalfont B\kern-0.5em{\scshape i\kern-0.25em b}\kern-0.8em\TeX}}}
\begin{document}

\title{Debt Swapping for Risk Mitigation in Financial Networks}

\author{Pál András Papp}
\email{apapp@ethz.ch}
\affiliation{%
  \institution{ ETH Zürich}
  \country{apapp@ethz.ch}
  \vspace{3pt}
}

\author{Roger Wattenhofer}
\email{wattenhofer@ethz.ch}
\affiliation{%
  \institution{ETH Zürich}
  \country{wattenhofer@ethz.ch}
}

\renewcommand{\shortauthors}{Pál András Papp and Roger Wattenhofer}

\begin{abstract}
We study financial networks where banks are connected by debt contracts. We consider the operation of \textit{debt swapping} when two creditor banks decide to exchange an incoming payment obligation, thus leading to a locally different network structure. We say that a swap is \textit{positive} if it is beneficial for both of the banks involved; we can interpret this notion either with respect to the amount of assets received by the banks, or their exposure to different shocks that might hit the system.
		
We analyze various properties of these swapping operations in financial networks. We first show that there can be no positive swap for any pair of banks in a static financial system, or when a shock hits each bank in the network proportionally. We then study worst-case shock models, when a shock of given size is distributed in the worst possible way for a specific bank. If the goal of banks is to minimize their losses in such a worst-case setting, then a positive swap can indeed exist. We analyze the effects of such a positive swap on other banks of the system, the computational complexity of finding a swap, and special cases where a swap can be found efficiently. Finally, we also present some results for more complex swapping operations when the banks swap multiple contracts, or when more than two banks participate in the swap.
\end{abstract}

\begin{CCSXML}
<ccs2012>
<concept>
<concept_id>10010405.10010455.10010460</concept_id>
<concept_desc>Applied computing~Economics</concept_desc>
<concept_significance>500</concept_significance>
</concept>
<concept>
<concept_id>10003752.10010070.10010099.10010106</concept_id>
<concept_desc>Theory of computation~Market equilibria</concept_desc>
<concept_significance>100</concept_significance>
</concept>
<concept>
<concept_id>10003752.10010070.10010099.10010109</concept_id>
<concept_desc>Theory of computation~Network games</concept_desc>
<concept_significance>100</concept_significance>
</concept>
</ccs2012>
\end{CCSXML}

\ccsdesc[500]{Applied computing~Economics}
\ccsdesc[100]{Theory of computation~Market equilibria}
\ccsdesc[100]{Theory of computation~Network games}

\maketitle

\section{Introduction} \label{sec:Intro} \genfootnote{The short version of the paper is published in the \textit{Proceedings of the 22nd ACM Conference on Economics and Computation (EC ’21), in 2021, Budapest, Hungary}, with ACM DOI 10.1145/3465456.3467638.}

Nowadays the world's financial system forms a highly interconnected network, where banks and other financial institutions are connected by various kinds of debt contracts. These interdependencies between the banks often introduce a systemic risk into the financial system, when e.g. the default of a single bank can cause a cascading effect through the network. These effects also played a major role in the financial crisis of 2008; as such, there has been a rapidly growing interest in the network-based analysis of these interbank systems over the last few years.

One of the most fundamental problems in these financial systems is the \textit{clearing problem}: given a specific amount of funds for each bank, and network of debt contracts between the banks, we need to decide how much of the these payment obligations the banks can fulfill. The solution of this problem, which is essentially a payment configuration over the network, is of crucial importance for the banks, since it specifies the amount of assets that they receive in the system.

Given such a financial network, it is natural to assume that banks would try to use different tools to influence the network in order to end up in a more beneficial situation. Since the payment rules are often fixed in a given network, one natural approach for that is to execute some modification on the network structure. The involved banks can have different possible motivations to support such a reorganization: it might directly increase the amount of assets they receive, or it might make their situation more resilient to an external shock that could hit the network.

A very natural candidate for such a network operation is \textit{debt swapping}. Given a debt contract from bank $u_1$ towards $v_1$, and another debt contract from $u_2$ to $v_2$, the creditor banks $v_1$ and $v_2$ might decide to swap their roles as the recipient of these contracts: $u_1$ will still owe the same amount of money as before, but now to $v_2$ instead of $v_1$, and $u_2$ will now owe the same amount of money to $v_1$ instead of $v_2$.

This swapping operation is a minor change in the network structure that only affects these two contracts. Since the debtors $u_1$ and $u_2$ have the same amount of liabilities as before, they do not have a direct reason to object to the operation. On the other hand, the two acting banks $v_1$ and $v_2$ only agree to execute the operation if it is mutually beneficial for them (according to some specific objective); in this case, we say that the swap is \textit{positive}.

Our main goal in the paper is to study the properties of debt swapping operations in financial networks. We first consider static financial systems without any kind of shock, analyzing whether two specific banks $v_1$ and $v_2$ can execute a swap that improves both of their situations. We show that such a positive swap is not possible in any network structure, i.e. any swap can only increase the assets of one of the acting banks.

We then consider different models of shock that might hit the financial network, and we analyze their effects from the perspective of a specific bank $v$. We consider $3$ different shock models in detail: (i) when each bank is hit proportionally by the shock, (ii) when $k$ specific banks are hit by the shock, but in the worst possible way for $v$, and (iii) when a shock of total size $\rho$ is distributed among the banks, again in the worst possible way for $v$. Our main goal is to investigate whether bank $v$ can reduce its exposure to such shocks with the debt swapping operation.

With respect to a proportional shock, we can again show that a positive swap is not possible in any network. However, in the other two (worst-case) shock models, a positive swap is indeed possible in some cases, so we study the properties of positive swaps in these models in more detail.

We first show that while a positive swap is beneficial for the two acting banks $v_1$ and $v_2$, it can result in a strictly worse situation for some third party banks in the network (with respect to shock exposure). Also, since computing the worst-case shock for a bank $v$ is in general an NP-hard problem, we study the special case of tree networks where the effects of a shock can still be computed efficiently. Finally, we show that the debt swapping operation is not sufficient to find every improvement opportunity in the network: there are cases where there is no positive swap for $v_1$ and $v_2$, but the banks could still improve their situation by executing a more sophisticated operation, e.g. swapping multiple contracts in one step, or also including a third bank in the debt reorganization.

\section{Related Work}

We study a popular model of financial systems with banks and debt contracts that was originally developed by Eisenberg and Noe \cite{model1}. It is known that in these networks, there exists a maximal clearing vector and it can be found in polynomial time \cite{model1, veraart}. Many works have later also studied the extensions of this base model by further aspects, such as default costs \cite{veraart}, cross-ownership relations \cite{cross1, cross2} or credit default swaps \cite{cds1, coveredCDS}.

The study of different properties of these network models has been rapidly gaining attention in the past decade. The most popular line of work focuses on the propagation of shocks through these networks, and whether larger connectivity amplifies or reduces these cascading effects \cite{cross2, gen1, gen2, gen3}. Others study how the clearing vector depends on a minor perturbation of a liability in the network \cite{sensitivity, worsttotal}. Some papers take a significantly different approach, such as analyzing the topic from a computational perspective \cite{cds2, arxiv} or as a dynamic process \cite{seqclear1, itcs}.

However, there are only a few works that focus on operations that banks could execute on the network in order to improve their situation. The most well-known approach to reduce systemic risks is the use of CCPs (central clearing counterparties), when a group of banks distribute risks among themselves by essentially introducing a new entity into the network \cite{CCP1, CCP2, CCP3, CCP4}. However, this assumes a major and centralized reorganization in the network, as opposed to our approach.

The closest line of work to our results is the analysis of portfolio compression by Schuldenzucker \textit{et al.} \cite{cycles}, also studied by \cite{cycles2, cycles3}; this is a technique where entire cycles of debts are removed from the network. The work of \cite{cycles, cycles2} extensively studies when such a compression operation is beneficial for the banks within the cycle, and for the remaining banks of the system. We later discuss the relationship of debt swapping to portfolio compression in more detail.

The work of Bertschinger \textit{et al.} \cite{gametheo} studies the motivation of banks in a different setting: instead of executing changes to the topology, banks are allowed to decide the order in which they fulfill their payment obligations. The authors of \cite{gametheo} conduct a thorough investigation of this scenario from a game-theoretic perspective. The work of \cite{icalp} takes a similar game-theoretic approach in a network model with more complex derivatives.

Our paper also has a connection to previous works that study different models of external shocks which might hit a financial network; most of these assume some stochastic shock distribution in the network \cite{gen1, gen2}. We mention the work of Hemenway \textit{et al.} in particular \cite{worsttotal}, which introduces the shock model that we refer to as worst-sum shock in our paper, and shows that finding the worst shock in this model is NP-hard.

We point out that the clearing problem also attracts significant attention in more practical projects, e.g. in the European Central Bank stress test framework \cite{ecb}.

\section{Model and motivation}

\subsection{Financial networks} \label{sec:def}

We consider the financial network model originally developed by Eisenberg and Noe, and studied thoroughly in the past two decades. In this model, the system consists of a set of \textit{banks} $B$, also referred to as nodes. We usually denote individual banks by $u$, $v$ or $w$, and the number of banks by $n:=|B|$. Each bank $v$ has a specific amount of \textit{funds} (sometimes also called external assets), denoted by $e_v$.

The banks in the system are connected by debt contracts. Each debt contract is between two specific banks $u$ and $v$, and obliges the debtor $u$ to pay a specific amount of money (the \textit{weight} or notional of the contract) to the creditor $v$. We use $l_{u,v}$ to denote this liability from $u$ to $v$, and understand it to be $0$ if there is no debt contract from $u$ to $v$. Similarly to previous work, we assume that $l_{u,u}=0$ for each $u \in B$, i.e. no bank enters into a contract with itself. The outgoing contracts for a bank $u$ define a total liability of $l_u = \sum_{v \in B} l_{u,v}$ for $u$.

However, the payment on a contract from $u$ to $v$ can be less then $l_{u,v}$ if $u$ is not able to fulfill all of its payment obligations, or in other words, if bank $u$ is \textit{in default}. In this case the \textit{recovery rate} of $u$ (denoted by $r_u$) is the portion of payment obligations that $u$ is still able to fulfill; hence $r_u<1$ exactly if $u$ is in default. When a bank $u$ is in default, the model assumes that $u$ must use all of its assets to make payments, and it must make these payments \textit{proportionally} to the payment obligations. Given a recovery rate of $r_u$ for $u$, this principle of proportionality implies that the actual payment $p_{u,v}$ on each outgoing contract will be $p_{u,v} = r_u \cdot l_{u,v}$.

Given the payments on each contract, the assets of a bank $u$ are defined as $a_u=e_u + \sum_{v \in B} p_{v,u}$. With bank $u$ having assets of $a_u$, according to our assumptions, the recovery rate $r_u$ must satisfy
\begin{displaymath}
 r_u=
\begin{cases}
    \: \, 1, & \quad \text{if } a_u \geq l_u \quad \text{  (i.e. if } u \text{ is not in default)} \\
		\: \, \frac{a_u}{l_u}, & \quad \text{if } a_u < l_u \quad \text{  (i.e. if } u \text{ is in default)}.
\end{cases}
\end{displaymath}
We say that a vector of recovery rates $r \in [0,1]^n$ is a \textit{clearing vector} (or \textit{equilibrium}) of the system if this property is fulfilled for each bank $u$, i.e. the recovery rate vector is consistent with the assets it defines in the network.

Previous work has shown that in any financial network, there is a clearing vector $r$ which maximizes the assets of each bank simultaneously, and this vector can be found in polynomial time \cite{veraart}. We will refer to this maximal clearing vector as the \textit{solution} of the system. Throughout the paper, we will understand $a_v$ to always refer to the assets of $v$ in this solution of the network.

For an example, consider the financial network shown in Figure \ref{fig:example}. In this system, bank $v_1$ has $e_{v_1}=4$ and a liability of $2$, so it will always be able to fulfill this payment obligation, regardless of the payment received from $v_4$. Bank $v_5$ is not in default either, since it has no liabilities at all. One the other hand, banks $v_3$ and $v_4$ are not able to fulfill their obligations even if they receive full payment on their incoming contracts, so they are certainly in default. This also means that the payment $p_{v_4, v_2}$ will be less than $2$, which also sends $v_2$ into default.

Recall that defaulting banks make payments proportionally to the liabilities; this implies that the assets of $v_2$, $v_3$, $v_4$ must satisfy $a_{v_2}=\frac{1}{2} \cdot a_{v_4}+2+1$, $a_{v_3}=a_{v_2}+2$ and $a_{v_4}=\frac{1}{3} \cdot a_{v_3}$. This leads to the payment configuration shown in the brackets in the right side of Figure \ref{fig:example}, and a solution with $a_{v_1}=5$, $a_{v_2}=4$, $a_{v_3}=6$, $a_{v_4}=2$ and $a_{v_5}=5$.

\begin{figure}
\centering
\minipage{0.45\textwidth}
\centering
\definecolor{dgray}{gray}{0.35}

\begin{tikzpicture}

	\draw[very thick, dgray, arrows=-latex] (0pt,0pt) -- (52pt,0pt);
	\draw[very thick, dgray, arrows=-latex] (60pt,0pt) -- (112pt,0pt);
	\draw[very thick, dgray, arrows=-latex] (120pt,0pt) -- (146pt,45pt);
	\draw[very thick, dgray, arrows=-latex] (120pt,0pt) -- (100pt,50pt) -- (83pt,50pt);
	\draw[very thick, dgray, arrows=-latex] (75pt,50pt) -- (64pt,5pt);
	\draw[very thick, dgray, arrows=-latex] (75pt,50pt) -- (30pt,50pt) -- (4pt,5pt);
	
	\draw[black, fill=white] (0pt,0pt) circle (8pt);
	\draw[black, fill=white] (60pt,0pt) circle (8pt);
	\draw[black, fill=white] (120pt,0pt) circle (8pt);
	\draw[black, fill=white] (75pt,50pt) circle (8pt);
	\draw[black, fill=white] (150pt,50pt) circle (8pt);
	
	\node[anchor=center] at (0.5pt,-0.5pt) {\normalsize $v_1$};
	\node[anchor=center] at (60.5pt,-0.5pt) {\normalsize $v_2$};
	\node[anchor=center] at (120.5pt,-0.5pt) {\normalsize $v_3$};
	\node[anchor=center] at (75.5pt,49.5pt) {\normalsize $v_4$};
	\node[anchor=center] at (150.5pt,49.5pt) {\normalsize $v_5$};
	
	\node[anchor=center] at (30pt,-8pt) {\small $2$};
	\node[anchor=center] at (90pt,-8pt) {\small $5$};
	\node[anchor=center] at (139pt,21pt) {\small $6$};
	\node[anchor=center] at (113pt,35pt) {\small $3$};
	\node[anchor=center] at (63pt,28pt) {\small $2$};
	\node[anchor=center] at (18pt,42pt) {\small $2$};

	
	\draw [fill=white] (3.5pt,-3.5pt) rectangle (9.5pt,-13.5pt);
	\node[anchor=center] at (6.5pt,-8pt) {\small $4$};
	\draw [fill=white] (63.5pt,-3.5pt) rectangle (69.5pt,-13.5pt);
	\node[anchor=center] at (66.5pt,-8pt) {\small $1$};
	\draw [fill=white] (123.5pt,-3.5pt) rectangle (129.5pt,-13.5pt);
	\node[anchor=center] at (126.5pt,-8pt) {\small $2$};
	\draw [fill=white] (78.5pt,46.5pt) rectangle (84.5pt,36.5pt);
	\node[anchor=center] at (81.5pt,42pt) {\small $0$};
	\draw [fill=white] (153.5pt,46.5pt) rectangle (159.5pt,36.5pt);
	\node[anchor=center] at (156.5pt,42pt) {\small $1$};

\end{tikzpicture}
\endminipage\hfill
\hspace{0.02\textwidth}
\minipage{0.52\textwidth}
\centering
\definecolor{dgray}{gray}{0.35}

\begin{tikzpicture}

	\draw[very thick, dgray, arrows=-latex] (0pt,0pt) -- (52pt,0pt);
	\draw[very thick, dgray, arrows=-latex] (60pt,0pt) -- (112pt,0pt);
	\draw[very thick, dgray, arrows=-latex] (120pt,0pt) -- (146pt,45pt);
	\draw[very thick, dgray, arrows=-latex] (120pt,0pt) -- (100pt,50pt) -- (83pt,50pt);
	\draw[very thick, dgray, arrows=-latex] (75pt,50pt) -- (64pt,5pt);
	\draw[very thick, dgray, arrows=-latex] (75pt,50pt) -- (30pt,50pt) -- (4pt,5pt);
	
	\draw[black, fill=white] (0pt,0pt) circle (8pt);
	\draw[black, fill=white] (60pt,0pt) circle (8pt);
	\draw[black, fill=white] (120pt,0pt) circle (8pt);
	\draw[black, fill=white] (75pt,50pt) circle (8pt);
	\draw[black, fill=white] (150pt,50pt) circle (8pt);
	
	\node[anchor=center] at (0.5pt,-0.5pt) {\normalsize $v_1$};
	\node[anchor=center] at (60.5pt,-0.5pt) {\normalsize $v_2$};
	\node[anchor=center] at (120.5pt,-0.5pt) {\normalsize $v_3$};
	\node[anchor=center] at (75.5pt,49.5pt) {\normalsize $v_4$};
	\node[anchor=center] at (150.5pt,49.5pt) {\normalsize $v_5$};
	

	\node[anchor=center] at (30pt,-8pt) {\scriptsize $[2]$};
	\node[anchor=center] at (90pt,-8pt) {\scriptsize $[4]$};
	\node[anchor=center] at (140pt,21pt) {\scriptsize $[4]$};
	\node[anchor=center] at (113.5pt,35pt) {\scriptsize $[2]$};
	\node[anchor=center] at (62.5pt,28pt) {\scriptsize $[1]$};
	\node[anchor=center] at (17pt,42pt) {\scriptsize $[1]$};
	
	\draw [fill=white] (3.5pt,-3.5pt) rectangle (9.5pt,-13.5pt);
	\node[anchor=center] at (6.5pt,-8pt) {\small $4$};
	\draw [fill=white] (63.5pt,-3.5pt) rectangle (69.5pt,-13.5pt);
	\node[anchor=center] at (66.5pt,-8pt) {\small $1$};
	\draw [fill=white] (123.5pt,-3.5pt) rectangle (129.5pt,-13.5pt);
	\node[anchor=center] at (126.5pt,-8pt) {\small $2$};
	\draw [fill=white] (78.5pt,46.5pt) rectangle (84.5pt,36.5pt);
	\node[anchor=center] at (81.5pt,42pt) {\small $0$};
	\draw [fill=white] (153.5pt,46.5pt) rectangle (159.5pt,36.5pt);
	\node[anchor=center] at (156.5pt,42pt) {\small $1$};

\end{tikzpicture}
\endminipage\hfill
\caption{Example financial network on $5$ banks, with the funds of banks shown in rectangles besides the banks. The left side shows the liabilities, while the right side shows the payments $p$ in the solution of the network (with the liabilities removed to avoid confusion).}
\label{fig:example}
\end{figure}
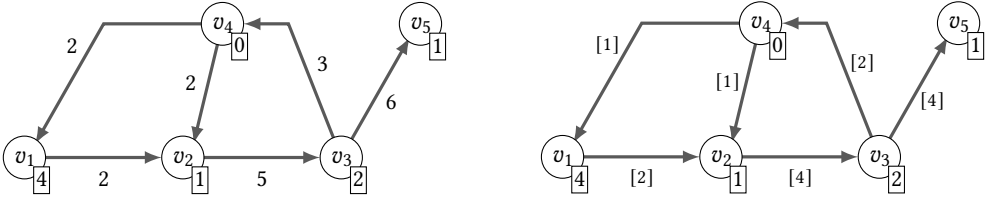

Throughout the paper, we use a simplified version of this notation to keep our figures easier to follow. We assume that all debt contracts have a weight of $1$ by default, and we only show the weight of a contract $(u,v)$ in our figures when $l_{u,v} \neq 1$. Similarly, whenever a bank $u$ has no funds at all ($e_u=0$), we do not explicitly show this in the figures. Occasionally, we also write $\infty$ to indicate a very large liability, but this can always be replaced by an appropriate constant value. Note that in order to have simple examples with unit-weight contracts, we will often work with fractional funds and assets; however, one can always scale up such networks by a constant factor to avoid this.

We assume that when executing operations on the network, the goal of a bank $v$ is to maximize $a_v$, i.e. its amount of assets in the solution of the resulting network. Since the total liabilities of a bank will never change in our setting, this is equivalent to maximizing the recovery rate (for defaulting banks) or the money remaining after payments (non-defaulting banks). The latter case is somewhat more motivated in practice, since it directly translates to more equity for bank $v$; we note that our results also apply in this more restricted setting. In particular, our non-existence results in Section \ref{sec:base} hold for either of the two cases, while most of our constructions in Section \ref{sec:worstcase} correspond to the second case of providing more equity for the acting banks.

Finally, we point out that in practice, the debt contracts in the network might be connected to earlier transactions between the banks, e.g. a debt from $u$ to $v$ might be due to a loan previously given by $v$ to $u$. However, the model assumes that in such cases, the amount received with this loan is already implicitly represented in the funds $e_u$, and thus the external assets and the liabilities are together sufficient to describe the current state of the system.

\subsection{Shock models}

While we also analyze how swapping can be used in a static network, our main goal is to study how this operation can mitigate the effects of an external shock that might hit the financial system. As such, we first define some simplified models of shock that we study in our networks.

In each of these shock models, we assume that some unforeseen event partially or completely removes the funds of some of the banks in the network; as such, the system will have a different solution after the shock where the assets of some banks are possibly lower than before. We then study the impact of a shock on a specific node $v$ as a \textit{shock function}, describing the assets $a_v$ remaining at $v$ in the solution of the network as a function of the size of the shock.

For the formal definitions of these shock models, let us use $a_v^{\,(G')}$ to denote to the assets of $v$ in some modified version $G'$ of our original financial network.

One natural model for market shocks is to assume that each bank is hit proportionally, i.e. each bank $u$ in the network loses a $\lambda$ portion of its original funds (or, equivalently, only retains a $(1-\lambda)$ portion of its original funds), for some parameter $\lambda$.

\begin{definition}
The \emph{proportional shock model} has a shock function $f_v : [0,1] \rightarrow \mathbb{R}^+_0$. For any $\lambda \in [0, 1]$, we consider the modified network $G_{\lambda}$ where each node $u \in B$ only has $e_u^{\, (G_{\lambda})} := (1-\lambda) \cdot e_u$, and we define $f_v(\lambda) = a_v^{\, (G_{\lambda})}$.
\end{definition}

Another approach is to study the system from a worst-case perspective. For instance, we can consider the cases when exactly $k$ specific banks are hit by the shock, but these banks lose all of their funds. Then from all the different possible $k$-tuples of banks that can be hit, we consider the combination which is the worst for bank $v$, i.e. the case where $v$ ends up with the smallest amount of assets.

\begin{definition}
The \emph{worst-set shock model} has a shock function $f_v : \{0, 1, ..., n\} \rightarrow \mathbb{R}^+_0$. For any integer $k \in [0, n]$, we consider the family $\mathcal{G}$ of all networks $G_k$ that can be obtained from $G$ by selecting a subset of $k$ banks $U$, and setting $e_u^{\, (G_k)} := 0$ if $u \in U$, and $e_u^{\, (G_k)} := e_u$ otherwise. We define
\begin{displaymath}
 f_v(k) = \min_{\, G_k \in \mathcal{G}} \: a_v^{\, (G_k)} \, .
\end{displaymath}
\end{definition}

Finally, another model for worst-case analysis is the worst-sum shock model, which has already been studied before in \cite{worsttotal}. In this case, we consider all shocks of total size $\rho$, i.e. where banks of the system lose a total of $\rho$ funds altogether, and we assume that these $\rho$ losses are distributed among the banks in the worst possible way in terms of the outcome for $v$.

\begin{definition}
The \emph{worst-sum shock model} assumes a shock function $f_v : [0, \sum_{u \in G} e_u] \rightarrow \mathbb{R}^+_0$ and a parameter $\rho \in [0, \sum_{u \in G} e_u]$. We consider the family $\mathcal{G}$ of all networks $G_{\rho}$ that satisfy
\begin{displaymath}
 \sum_{u \in G} e_u - \sum_{u \in G_{\rho}} \: e_u^{\, (G_{\rho})} = \rho \, ,
\end{displaymath}
while also having $0 \leq e_u^{\: (G_{\rho})} \leq e_u$ for each bank $u \in G_{\rho}$. We define
\begin{displaymath}
 f_v(\rho) = \min_{G_{\rho} \in \mathcal{G}} \: a_v^{\, (G_{\rho})} \, .
\end{displaymath}
\end{definition}

Naturally, all these shock functions are monotonically decreasing. As such, for simplicity, we will often only discuss or illustrate the shock functions until the point where they first decrease to $0$, since the function value always remains $0$ from this point.

It is also natural to consider more realistic variants of these models by assuming that the shock will be of limited size. For example, we can introduce a \textit{limited worst-set} model where there is an upper limit $K$ on the number of banks that are hit by a shock, for some integer parameter $K \leq n$. This model comes with the same shock function, but restricted to the domain $\{0, 1, ..., K\}$.

\subsection{Motivation for swapping} \label{sec:motive}

Introducing these shock models already allows us to demonstrate the motivation behind debt swapping with a simple example. Consider the financial network in Figure \ref{fig:motive}, where the assets of banks $v_1$ and $v_2$ are indirectly dependent on the well-being of banks $s_1$ and $s_2$, respectively.

In this initial state of the system, the shock functions of both $v_1$ and $v_2$ are as shown in the upper row diagrams. For example, in the proportional shock model, $v_1$ only starts losing assets when $s_1$ is unable to pay all of its liabilities, which happens at $\lambda>\frac{1}{2}$ (since this implies $(1-\lambda) \cdot e_{s_1}<2$); after this point, $v_1$ keeps continuously losing assets until $\lambda=1$. In the worst-set model, $v_1$ already loses all of its assets for $k=1$, since the worst-case for $v_1$ is when $s_1$ is hit by the shock (and set to $e_{s_1}=0$). In the worst-sum model, the first $2$ units of loss at $s_1$ do not affect $v_1$, since $s_1$ can still fulfill its obligations; from this point, $a_{v_1}$ decreases linearly until $\rho=4$.

Now assume that banks $v_1$ and $v_2$ decide to swap one of their incoming debt contracts. That is, they agree that the payment obligation of $1$ from $u_1$ should now go towards banks $v_2$ instead of $v_1$, and in return, the payment obligation of $1$ from $u_2$ should now go towards bank $v_1$ instead of $v_2$. Note that at first glance, this does not affect the situation of any bank in the system: each bank will have the same amount of liabilities and the same total of incoming payment obligations as initially. In this particular case, the assets of $v_1$ and $v_2$ do not change either: we will also have $a_{v_1}'=a_{v_2}'=2$ in the solution of the new network obtained after the operation.

However, the swap does improve the situation of the nodes with respect to the different shock models, as shown in the lower row diagrams. Intuitively, both $v_1$ and $v_2$ has now diversified their dependencies on the rest of the network, so a shock at a specific point in the network does not affect them as drastically as before.

\begin{figure}
\centering
\minipage{0.45\textwidth}
\centering
	\resizebox{1.0\textwidth}{!}{\definecolor{dgray}{gray}{0.35}

\begin{tikzpicture}

	\draw[very thick, dgray, arrows=-latex] (0pt,15pt) -- (28pt,2pt);
	\draw[very thick, dgray, arrows=-latex] (0pt,15pt) -- (28pt,28pt);
	\draw[very thick, dgray, arrows=-latex] (0pt,70pt) -- (28pt,57pt);
	\draw[very thick, dgray, arrows=-latex] (0pt,70pt) -- (28pt,83pt);
	\draw[very thick, dgray, arrows=-latex] (35pt,0pt) -- (73pt,18pt);
	\draw[very thick, dgray, arrows=-latex] (35pt,30pt) -- (73pt,26pt);
	\draw[very thick, dgray, arrows=-latex] (35pt,55pt) -- (73pt,59pt);
	\draw[very thick, dgray, arrows=-latex] (35pt,85pt) -- (73pt,67pt);

	\draw[black, fill=white] (0pt,15pt) circle (8pt);
	\draw[black, fill=white] (0pt,70pt) circle (8pt);
	\draw[black, fill=white] (35pt,0pt) circle (8pt);
	\draw[black, fill=white] (35pt,30pt) circle (8pt);
	\draw[black, fill=white] (35pt,55pt) circle (8pt);
	\draw[black, fill=white] (35pt,85pt) circle (8pt);
	\draw[black, fill=white] (80pt,22pt) circle (8pt);
	\draw[black, fill=white] (80pt,63pt) circle (8pt);
	
	\node[anchor=center] at (0.5pt,14.5pt) {\normalsize $s_2$};
	\node[anchor=center] at (0.5pt,69.5pt) {\normalsize $s_1$};
	\node[anchor=center] at (35.5pt,-0.5pt) {\normalsize $u_3$};
	\node[anchor=center] at (35.5pt,29.5pt) {\normalsize $u_2$};
	\node[anchor=center] at (35.5pt,54.5pt) {\normalsize $u_1$};
	\node[anchor=center] at (35.5pt,84.5pt) {\normalsize $u_0$};
	\node[anchor=center] at (80.5pt,21.5pt) {\normalsize $v_2$};
	\node[anchor=center] at (80.5pt,62.5pt) {\normalsize $v_1$};
	
	\draw [fill=white] (3.5pt,11.5pt) rectangle (9.5pt,1.5pt);
	\node[anchor=center] at (6.5pt,7pt) {\normalsize $4$};
	\draw [fill=white] (3.5pt,66.5pt) rectangle (9.5pt,56.5pt);
	\node[anchor=center] at (6.5pt,61.5pt) {\normalsize $4$};
	
	
	\draw[line width=3pt, arrows=-stealth] (105pt,42.5pt) -- (130pt,42.5pt);
	
	
	\draw[very thick, dgray, arrows=-latex] (150pt,15pt) -- (178pt,2pt);
	\draw[very thick, dgray, arrows=-latex] (150pt,15pt) -- (178pt,28pt);
	\draw[very thick, dgray, arrows=-latex] (150pt,70pt) -- (178pt,57pt);
	\draw[very thick, dgray, arrows=-latex] (150pt,70pt) -- (178pt,83pt);
	\draw[very thick, dgray, arrows=-latex] (185pt,0pt) -- (223pt,18pt);
	\draw[very thick, dgray, arrows=-latex] (185pt,30pt) -- (223pt,59pt);
	\draw[very thick, dgray, arrows=-latex] (185pt,55pt) -- (223pt,26pt);
	\draw[very thick, dgray, arrows=-latex] (185pt,85pt) -- (223pt,67pt);

	\draw[black, fill=white] (150pt,15pt) circle (8pt);
	\draw[black, fill=white] (150pt,70pt) circle (8pt);
	\draw[black, fill=white] (185pt,0pt) circle (8pt);
	\draw[black, fill=white] (185pt,30pt) circle (8pt);
	\draw[black, fill=white] (185pt,55pt) circle (8pt);
	\draw[black, fill=white] (185pt,85pt) circle (8pt);
	\draw[black, fill=white] (230pt,22pt) circle (8pt);
	\draw[black, fill=white] (230pt,63pt) circle (8pt);
	
	\node[anchor=center] at (150.5pt,14.5pt) {\normalsize $s_2$};
	\node[anchor=center] at (150.5pt,69.5pt) {\normalsize $s_1$};
	\node[anchor=center] at (185.5pt,-0.5pt) {\normalsize $u_3$};
	\node[anchor=center] at (185.5pt,29.5pt) {\normalsize $u_2$};
	\node[anchor=center] at (185.5pt,54.5pt) {\normalsize $u_1$};
	\node[anchor=center] at (185.5pt,84.5pt) {\normalsize $u_0$};
	\node[anchor=center] at (230.5pt,21.5pt) {\normalsize $v_2$};
	\node[anchor=center] at (230.5pt,62.5pt) {\normalsize $v_1$};
	
	\draw [fill=white] (153.5pt,11.5pt) rectangle (159.5pt,1.5pt);
	\node[anchor=center] at (156.5pt,7pt) {\normalsize $4$};
	\draw [fill=white] (153.5pt,66.5pt) rectangle (159.5pt,56.5pt);
	\node[anchor=center] at (156.5pt,61.5pt) {\normalsize $4$};
	
\end{tikzpicture}}
\endminipage\hfill
\hspace{0.02\textwidth}
\minipage{0.52\textwidth}
\centering
	\resizebox{1.0\textwidth}{!}{\definecolor{dgray}{gray}{0.35}

\begin{tikzpicture}

	\draw[thick, arrows=-latex] (0pt,-15pt) -- (0pt,40pt);
	\draw[thick, arrows=-latex] (-5pt,-10pt) -- (70pt,-10pt);
	
	\draw[very thick, blue] (0pt,30pt) -- (30pt,30pt) -- (60pt,-10pt);
	
	\node[anchor=center] at (-4pt,-15pt) {\small $0$};
	
	\draw[thick] (-2pt,10pt) -- (2pt,10pt);
	\draw[thick] (-2pt,30pt) -- (2pt,30pt);
	\node[anchor=center] at (-5pt,10pt) {\small $1$};
	\node[anchor=center] at (-5pt,30pt) {\small $2$};
	
	\draw[thick] (30pt,-12pt) -- (30pt,-8pt);
	\draw[thick] (60pt,-12pt) -- (60pt,-8pt);
	\node[anchor=center] at (30pt,-17pt) {\small $0.5$};
	\node[anchor=center] at (60pt,-17pt) {\small $1$};
	
	\node[anchor=center] at (30pt,-32pt) {\normalsize \textit{proportional}};
	
	
	\draw[thick, arrows=-latex] (100pt,-15pt) -- (100pt,40pt);
	\draw[thick, arrows=-latex] (95pt,-10pt) -- (170pt,-10pt);
	
	\node[anchor=center] at (96pt,-15pt) {\small $0$};
	
	\draw[thick] (98pt,10pt) -- (102pt,10pt);
	\draw[thick] (98pt,30pt) -- (102pt,30pt);
	\node[anchor=center] at (95pt,10pt) {\small $1$};
	\node[anchor=center] at (95pt,30pt) {\small $2$};
	
	\draw[thick] (130pt,-12pt) -- (130pt,-8pt);
	\draw[thick] (160pt,-12pt) -- (160pt,-8pt);
	\node[anchor=center] at (130pt,-17pt) {\small $1$};
	\node[anchor=center] at (160pt,-17pt) {\small $2$};
	
	\draw[blue, fill=blue] (100pt,30pt) circle (2pt);
	\draw[blue, fill=blue] (130pt,10pt) circle (2pt);
	\draw[blue, fill=blue] (160pt,-10pt) circle (2pt);
	
	\node[anchor=center] at (130pt,-32pt) {\normalsize \textit{worst-set}};
	
	
	\draw[thick, arrows=-latex] (200pt,-15pt) -- (200pt,40pt);
	\draw[thick, arrows=-latex] (195pt,-10pt) -- (290pt,-10pt);
	
	\draw[very thick, blue] (200pt,30pt) -- (220pt,30pt) -- (240pt,10pt) -- (260pt,10pt) -- (280pt,-10pt);
	
	\node[anchor=center] at (196pt,-15pt) {\small $0$};
	
	\draw[thick] (198pt,10pt) -- (202pt,10pt);
	\draw[thick] (198pt,30pt) -- (202pt,30pt);
	\node[anchor=center] at (195pt,10pt) {\small $1$};
	\node[anchor=center] at (195pt,30pt) {\small $2$};
	
	\draw[thick] (220pt,-12pt) -- (220pt,-8pt);
	\draw[thick] (240pt,-12pt) -- (240pt,-8pt);
	\draw[thick] (260pt,-12pt) -- (260pt,-8pt);
	\draw[thick] (280pt,-12pt) -- (280pt,-8pt);
	\node[anchor=center] at (220pt,-17pt) {\small $2$};
	\node[anchor=center] at (240pt,-17pt) {\small $4$};
	\node[anchor=center] at (260pt,-17pt) {\small $6$};
	\node[anchor=center] at (280pt,-17pt) {\small $8$};
	
	\node[anchor=center] at (240pt,-32pt) {\normalsize \textit{worst-sum}};
	
	
	\draw[thick, arrows=-latex] (0pt,75pt) -- (0pt,130pt);
	\draw[thick, arrows=-latex] (-5pt,80pt) -- (70pt,80pt);
	
	\draw[very thick, blue] (0pt,120pt) -- (30pt,120pt) -- (60pt,80pt);
	
	\node[anchor=center] at (-4pt,75pt) {\small $0$};
	
	\draw[thick] (-2pt,100pt) -- (2pt,100pt);
	\draw[thick] (-2pt,120pt) -- (2pt,120pt);
	\node[anchor=center] at (-5pt,100pt) {\small $1$};
	\node[anchor=center] at (-5pt,120pt) {\small $2$};
	
	\draw[thick] (30pt,78pt) -- (30pt,82pt);
	\draw[thick] (60pt,78pt) -- (60pt,82pt);
	\node[anchor=center] at (30pt,73pt) {\small $0.5$};
	\node[anchor=center] at (60pt,73pt) {\small $1$};
	
	\node[anchor=center] at (30pt,58pt) {\normalsize \textit{proportional}};
	
	
	\draw[thick, arrows=-latex] (100pt,75pt) -- (100pt,130pt);
	\draw[thick, arrows=-latex] (95pt,80pt) -- (170pt,80pt);
	
	\node[anchor=center] at (96pt,75pt) {\small $0$};
	
	\draw[thick] (98pt,100pt) -- (102pt,100pt);
	\draw[thick] (98pt,120pt) -- (102pt,120pt);
	\node[anchor=center] at (95pt,100pt) {\small $1$};
	\node[anchor=center] at (95pt,120pt) {\small $2$};
	
	\draw[thick] (130pt,78pt) -- (130pt,82pt);
	\draw[thick] (160pt,78pt) -- (160pt,82pt);
	\node[anchor=center] at (130pt,73pt) {\small $1$};
	\node[anchor=center] at (160pt,73pt) {\small $2$};
	
	\draw[blue, fill=blue] (100pt,120pt) circle (2pt);
	\draw[blue, fill=blue] (130pt,80pt) circle (2pt);
	\draw[blue, fill=blue] (160pt,80pt) circle (2pt);
	
	\node[anchor=center] at (130pt,58pt) {\normalsize \textit{worst-set}};
	
	
	\draw[thick, arrows=-latex] (200pt,75pt) -- (200pt,130pt);
	\draw[thick, arrows=-latex] (195pt,80pt) -- (290pt,80pt);
	
	\draw[very thick, blue] (200pt,120pt) -- (220pt,120pt) -- (240pt,80pt);
	
	\node[anchor=center] at (196pt,75pt) {\small $0$};
	
	\draw[thick] (198pt,100pt) -- (202pt,100pt);
	\draw[thick] (198pt,120pt) -- (202pt,120pt);
	\node[anchor=center] at (195pt,100pt) {\small $1$};
	\node[anchor=center] at (195pt,120pt) {\small $2$};
	
	\draw[thick] (220pt,78pt) -- (220pt,82pt);
	\draw[thick] (240pt,78pt) -- (240pt,82pt);
	\draw[thick] (260pt,78pt) -- (260pt,82pt);
	\draw[thick] (280pt,78pt) -- (280pt,82pt);
	\node[anchor=center] at (220pt,73pt) {\small $2$};
	\node[anchor=center] at (240pt,73pt) {\small $4$};
	\node[anchor=center] at (260pt,73pt) {\small $6$};
	\node[anchor=center] at (280pt,73pt) {\small $8$};
	
	\node[anchor=center] at (240pt,58pt) {\normalsize \textit{worst-sum}};
	
\end{tikzpicture}}
\endminipage\hfill
\caption{Example where the acting nodes $v_1$ and $v_2$ swap one of their incoming debt contracts. The diagrams show the shock functions of the acting nodes before swapping (top row) and after swapping (bottom row).}
\label{fig:motive}
\end{figure}
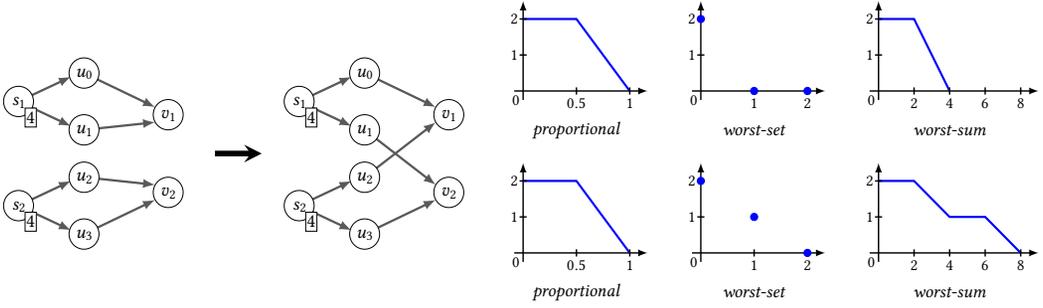

More specifically, in case of proportional shocks, the shock function of swapping nodes does not improve: they still start losing assets at the same pace from $\lambda=\frac{1}{2}$, since the proportional shock model always hits both $s_1$ and $s_2$ to the same extent. However, in the worst-set model, a shock of $k=1$ can now only affect one of the source nodes, so even with such a shock, both $v_1$ and $v_2$ will receive half of their assets. Similarly, in the worst-sum model, a shock of $\rho=4$ can now only remove the assets of either $s_1$ or $s_2$ in the worst case, but after this, the assets of $v_1$ and $v_2$ will again not decrease further until $\rho=6$, when the shock becomes large enough and the other source bank also becomes unable to meet its obligations. As such, in these latter two cases, the operation results in a strict improvement in the shock function of the acting nodes.

This example demonstrates that even in a relatively simple network, a swapping operation can easily ensure that banks are less exposed to an unforeseen event in the financial system. However, in the general case of more complex and interconnected networks, it is more difficult to identify such opportunities that mutually improve the banks' situation, or even to evaluate how volatile a specific network configuration is under given shock models.

\paragraph{Comparison to portfolio compression.}
We point out that in some sense, debt swapping is a similar operation portfolio compression, which was studied in detail in the works of \cite{cycles, cycles2}. In portfolio compression, the main idea is to locate debt cycles in the network topology, and remove these cycles to reduce the total amount of debt in the network; as such, this is also an operation that executes a minor reorganization step in the network structure. 

However, we argue that debt swapping is a significantly more justifiable operation from a fairness perspective. It is known that portfolio compression can yield a worse outcome for banks that are not contained in the cycle (the work of \cite{cycles} discusses this in a more general network model with default costs, but one can create similar examples in our model). Intuitively, the removal of a liability cycle from the network essentially means that the participating nodes pay these cyclic debts ``in advance'', i.e. earlier than their remaining liabilities, which implicitly contradicts the principle of proportionality; as such, a creditor of these nodes who is outside the cycle may end up receiving less money when a cycle node goes into default.

In contrast to this, debt swapping ensures that the total amount of incoming and outgoing liabilities remains the same at every bank in the network. The debtor banks of the swapped contracts still have the same amount of liabilities on these contracts, only now towards a different recipient. The acting banks (the creditors of the swapped contracts) only execute the operation if they both agree to it voluntarily. As such, no bank has a direct reason to object to this operation; in this situation, any (positive or possibly negative) effect on banks is really due to more subtle reasons originating from the change of the network structure. Hence we can argue that if many legal frameworks even allow the more questionable practice of portfolio compression, then debt swapping for risk mitigation should also be a reasonable and permitted tool for banks in practice.

On the other hand, portfolio compression also has a slight advantage over debt swapping from a regulatory perspective: it implicitly ensures that the debtor banks also approve the operation, since the banks in the cycle are both the creditors and the debtors of the removed contracts. In contrast to this, debt swapping assumes that the approval of the creditor banks is already sufficient to execute the operation.

\subsection{Swapping and terminology}

Finally, let us formally define debt swapping, as well as some other closely related operations.

\begin{definition}[Swapping]
Assume we have a network $G$ and four distinct nodes $v_1$, $v_2$, $u_1$ and $u_2$ that satisfy $l_{u_1, v_1}=l_{u_2, v_2}=d$ for some value $d$, and also $l_{u_1, v_2}=l_{u_2, v_1}=0$.

The \emph{swapping} of contracts $(u_1, v_1)$ and $(u_2, v_2)$ produces a new network $G'$, where the funds and liabilities are the same as in $G$, with the following exceptions: $l_{u_1, v_2}'=l_{u_2, v_1}'=d$, and $l_{u_1, v_1}'=l_{u_2, v_2}'=0$.
\end{definition}

We will refer to the nodes $v_1$ and $v_2$ as \textit{acting nodes} or swapping nodes. In general, we use the prime symbol (e.g. $a'_u$, $e_u'$ or $l_{u,v}'$) to refer to the properties of the network obtained after swapping.

We note that it would also be possible to extend this definition to allow swapping a pair of contracts with different weights; such a generalization would not affect most of our results. The acting nodes may still be willing to execute such a swap, since it can still improve their situation if the debtors are known to be in default, e.g. if we modify Figure \ref{fig:motive} such that the swapped contracts have arbitrary different weights that are both larger than 1. However, such an operation might seem less desirable from a regulator's perspective, since the total amount of incoming obligations changes for the banks. As such, we focus on the more convenient case where we only swap contracts of the same weight.

We also need to define whether we consider a specific operation beneficial, i.e. the condition when banks are willing to agree to a specific operation. However, this is not necessarily straightforward. Since all of our shocks are functions of a parameter, it can naturally happen that e.g. a swap of node $v_1$ provides $a_{v_1}'>a_{v_1}$ for a proportional shock of $\lambda_1=0.2$, but $a_{v_1}'<a_{v_1}$ for a shock of $\lambda_2=0.5$. In practice, $v_1$ might still agree to such an operation, e.g. if it is expecting that a shock of $\lambda_1=0.2$ will soon hit the market.

However, we assume that this is not the case, i.e. that banks do not have any assumptions on the values of the shock parameters. That is, we take a stricter stance, and assume that banks only consider an operation beneficial if their situation improves (or remains the same) for any possible value of the shock parameter.

More formally, given functions $f, g$ over the same domain $D$, let us say that $f \geq g$ if $\forall x \in D $ we have $ f(x) \geq g(x)$, and let us say that $f > g$ if $f \geq g$ and $\exists x \in D$ such that $f(x) > g(x)$. In this terminology, we assume that bank $v$ is only in favor of an operation if this provides a new shock function $f_v'$ such that $f_v' > f_v$: in this case, it is clear that the new situation is strictly more favorable to $v$.

\begin{definition}[Positive Swap]
We say that a swap of banks $v_1$ and $v_2$ is \emph{positive} according to a shock function $f$ if $f_{v_1}'>f_{v_1}$ and $f_{v_2}'>f_{v_2}$. We say that a swap is \emph{semi-positive} if $f_{v_1}' \geq f_{v_1}$ and $f_{v_2}' \geq f_{v_2}$, and at least one of the inequalities is strict.
\end{definition}

Finally, we can also consider some generalizations of swapping. One natural candidate for such an operation is when the banks not only exchange a pair of incoming contracts, but two sets of incoming contracts; we refer to this as \textit{portfolio swapping}. That is, given a set of banks $U_1$ who have debts towards $v_1$ (with a total weight of $d$), and a set of banks $U_2$ who have debts towards $v_2$ (also of total weight $d$), the operation creates a new network $G'$ where the creditor of all these contracts from $U_1$ (and $U_2$) becomes $v_2$ (and $v_1$, respectively). We can then define positivity (or semi-positivity) for such an operation in an analogous way.

Another possible generalization is to consider a multi-party swap operation where more than two banks participate; we will refer to this as \textit{debt reorganization}. That is, given a set of contracts $(u_i, v_i)$ of the same weight $l_{u_i, v_i}=d$, the acting nodes $v_i$ switch to a different permutation of the recipients of these contracts, e.g. changing the original liabilities $l_{u_1,v_1}=l_{u_2,v_2}=l_{u_3,v_3}=d$ into $l'_{u_1,v_2}=l'_{u_2,v_3}=l'_{u_3,v_1}=d$. Once again, a debt reorganization is positive if we have $f_{v_i}'>f_{v_i}$ for all the acting banks $v_i$, and semi-positive if $\forall i \; f_{v_i}' \geq f_{v_i}$ and $\exists i \; f_{v_i}' > f_{v_i}$.

For completeness, we also provide a formal definition of these more general operations in Appendix \ref{App:A}.

Finally, let us note that a natural generalization of this setting would be to also allow banks to swap only \textit{a portion} of their debts; more formally, we could select a value $d$ such that $d \leq l_{u_1, v_1}$, $d \leq l_{u_2, v_2}$, and then define the network $G'$ after swapping as $l_{u_1, v_1}'=l_{u_1, v_1\!}-d$, $l_{u_2, v_2}'=l_{u_2, v_2\!}-d$ and $l_{u_1,v_2}=l_{u_2,v_1}=d$. This provides significantly more opportunities for finding a positive swap in our networks; on the other hand, it requires us to split some payment contracts, thus possibly raising other problems in practice. We point out that most of our results can also be extended to this more general setting, sometimes with the extra technical step of introducing auxiliary nodes. However, for simplicity, we only focus on swapping entire debts in the paper.

\section{Swapping without shocks} \label{sec:base}

We begin our analysis by discussing debt swapping in financial networks without any kind of shock; that is, we consider a static financial system as defined in Section \ref{sec:def}, and we investigate whether banks can use swapping to ensure that they receive more assets in the modified network. This not only gives valuable insights into the properties of swapping in general, but it will also have direct implications on swapping in the proportional shock model.

Formally, we can describe this \textit{base model} as trivial shock function $f_v : \{ 0 \} \rightarrow \mathbb{R}^+_0$ that assigns $f_v(0)=a_v$ to the single point of its domain; in this model, a swap is simply positive if we have $a'(v_1)>a(v_1)$ and $a'(v_2)>a(v_2)$.

\subsection{Properties of financial systems}

We first establish some basic properties of financial systems that will be essential tools in the proofs that follow. Most of these properties have been mentioned or discussed to some extent in previous works, either in our model or an extended model with default costs; however, we formalize these properties for completeness, and we discuss them in more detail in Appendix \ref{App:A}.

In all of these properties, we consider a source node $s$ and a sink node $t$. We increase the funds $e_s$ of $s$ to some new value $\hat{e}_s$, and we study how this affects $t$, i.e. how the new assets $\hat{a}_t$ of $t$ relate to its original assets $a_t$. Note that we use this new notation $\hat{a}_t$ to make a clear distinction from $a_t'$, which will always refer to a state after swapping.

\begin{lemma}[Monotonicity] \label{lem:monot}
Assume we increase the funds of $s$ to $\hat{e}_s > e_s$. Then we have $\hat{a}_t \geq a_t$.
\end{lemma}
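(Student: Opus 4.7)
The plan is to use the Eisenberg--Noe fixed point characterization of the clearing vector together with a monotonicity argument on the associated operator. I would define $\Phi:[0,1]^n\to[0,1]^n$ by
\[
\Phi(r)_u \;=\; \min\!\left(1,\; \tfrac{e_u + \sum_{v\in B} r_v\, l_{v,u}}{l_u}\right),
\]
(with value $1$ when $l_u=0$), so that a clearing vector is precisely a fixed point of $\Phi$, and the maximal clearing vector $r^\star$ cited in the text arises as the limit of the decreasing iteration $r^{(0)}=(1,\dots,1)$, $r^{(k+1)}=\Phi(r^{(k)})$. This characterization is standard from \cite{model1, veraart} (it underlies the fictitious-default polynomial-time algorithm), so I would quote it rather than reprove it.

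I would then verify two easy monotonicities: (i) $\Phi$ is coordinatewise monotone in $r$, since increasing any $r_v$ only makes each coordinate of the argument of $\min$ larger; (ii) replacing $e_s$ by $\hat e_s > e_s$ yields a new operator $\hat\Phi$ with $\hat\Phi(r)\geq \Phi(r)$ pointwise for every $r$, because only the constant term in the $s$-th coordinate is increased. A short induction on $k$ then shows $\hat r^{(k)} \geq r^{(k)}$ for the iterates started at $(1,\dots,1)$: the base case is equality, and
\[
\hat r^{(k+1)} \;=\; \hat\Phi\bigl(\hat r^{(k)}\bigr) \;\geq\; \hat\Phi\bigl(r^{(k)}\bigr) \;\geq\; \Phi\bigl(r^{(k)}\bigr) \;=\; r^{(k+1)}
\]
by (ii) and then (i). Passing to the limit gives $\hat r^\star \geq r^\star$ coordinatewise.

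The conclusion for the target node $t$ is then immediate: since $\hat e_t \geq e_t$ (with equality when $t\neq s$) and $\hat r^\star_v \geq r^\star_v$ for every $v$,
\[
\hat a_t \;=\; \hat e_t + \sum_{v\in B} \hat r^\star_v\, l_{v,t} \;\geq\; e_t + \sum_{v\in B} r^\star_v\, l_{v,t} \;=\; a_t.
\]
The only real subtlety is justifying that the top-down iteration indeed converges to the maximal clearing vector; this is a textbook Knaster--Tarski argument on the complete lattice $[0,1]^n$ with the pointwise order, so I would cite it from the prior work rather than prove it from scratch. Everything else is a one-line induction and a one-line inequality, so I expect no further obstacle.
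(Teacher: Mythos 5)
Your proof is correct, but it takes a genuinely different route from the paper's. You invoke the standard Eisenberg--Noe lattice-theoretic characterization: the clearing operator $\Phi$ is monotone in $r$, raising $e_s$ raises $\Phi$ pointwise, and a one-line induction on the fictitious-default iteration from $(1,\dots,1)$ gives $\hat r^\star \geq r^\star$, from which $\hat a_t \geq a_t$ follows. The paper instead constructs a \emph{difference system} $G_{\text{diff}}$: it takes the residual liabilities $l_{u,v} - p_{u,v}$ from the original solution, gives $s$ funds $\hat e_s - e_s$ and all other nodes zero, and argues that the new solution equals the old solution plus the solution of $G_{\text{diff}}$, so $\hat a_t = a_t + a_t^{(G_{\text{diff}})} \geq a_t$. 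Your argument is cleaner and more self-contained for this particular lemma, and it does not actually need the source/sink hypothesis at all. The paper's construction is heavier for a single monotonicity claim, but it is not wasted effort: the difference-system device is reused repeatedly throughout Appendix~B to prove non-expansivity, linearity, and the core Lemmas~\ref{lem:bothinc} and~\ref{lem:hard}, where one needs to \emph{track where the extra funds flow}, not merely that the clearing vector goes up. A lattice argument gives the sign of the change but no handle on its distribution, so while your proof is a fine substitute here, you would still need something like the difference system for the downstream results.
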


This property can also be extended to non-source and non-sink nodes, as well as the case when we increase the funds of multiple banks.

\begin{lemma}[Non-expansivity] \label{lem:nonexp}
Assume we set $\hat{e}_s=e_s + \Delta$ for some $\Delta>0$. Then $\hat{a}_t \leq a_t + \Delta$.
\end{lemma}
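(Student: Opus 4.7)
The plan is to view the extra payments $\delta_{u,v} := \hat{p}_{u,v} - p_{u,v}$ on each edge as a sort of ``flow'' that originates at $s$ and is never amplified at any other node. To start, Lemma~\ref{lem:monot} gives $\hat{a}_u \geq a_u$ for every node $u$, which forces $\hat{r}_u \geq r_u$ and hence $\delta_{u,v} \geq 0$ on every edge. Writing $\delta_u^{\mathrm{in}} := \sum_v \delta_{v,u}$ and $\delta_u^{\mathrm{out}} := \sum_v \delta_{u,v}$, my goal reduces to showing $\delta_t^{\mathrm{in}} \leq \Delta$: since $t \neq s$ has unchanged funds, $\hat{a}_t - a_t = \delta_t^{\mathrm{in}}$.

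The core work consists of two local bounds, both proved by case analysis on default status. First, at the source, $s$ has no incoming contracts, so $a_s = e_s$ and $\hat{a}_s = e_s + \Delta$, and a short case check gives $\delta_s^{\mathrm{out}} \leq \Delta$: it equals $0$ if $s$ is never in default, equals $\Delta$ if $s$ is in default both before and after, and equals $l_s - e_s \leq \Delta$ if $s$ exits default (using $e_s + \Delta \geq l_s$). Second, for every $u \neq s$ I want $\delta_u^{\mathrm{out}} \leq \delta_u^{\mathrm{in}}$. If $u$ is originally not in default, then by monotonicity $u$ is still not in default afterwards and $\delta_u^{\mathrm{out}} = 0$. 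If $u$ is in default both times, then $r_u l_u = a_u$ and $\hat{r}_u l_u = \hat{a}_u$ give $\delta_u^{\mathrm{out}} = \hat{a}_u - a_u = \delta_u^{\mathrm{in}}$. If $u$ exits default, then $\delta_u^{\mathrm{out}} = l_u - a_u$, whereas $\delta_u^{\mathrm{in}} = \hat{a}_u - a_u \geq l_u - a_u$ since $\hat{a}_u \geq l_u$; the remaining case of $u$ entering default is excluded by monotonicity.

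To conclude, I use the identity $\sum_u \delta_u^{\mathrm{in}} = \sum_u \delta_u^{\mathrm{out}}$ (each edge contributes once to each side); together with $\delta_s^{\mathrm{in}} = 0$, this rearranges to $\delta_s^{\mathrm{out}} = \sum_{u \neq s}(\delta_u^{\mathrm{in}} - \delta_u^{\mathrm{out}})$, a sum of nonnegative terms by the intermediate-node bound above. Since $t$ is a sink, $\delta_t^{\mathrm{out}} = 0$, so $\delta_t^{\mathrm{in}}$ is one of those summands, whence $\delta_t^{\mathrm{in}} \leq \delta_s^{\mathrm{out}} \leq \Delta$, as required. I expect the most error-prone part to be the case analysis behind the intermediate-node inequality, in particular the ``exits default'' subcase, where one must combine the threshold $\hat{a}_u \geq l_u$ with the in-default identity $r_u l_u = a_u$ to compare $\delta_u^{\mathrm{out}} = l_u - a_u$ against $\delta_u^{\mathrm{in}} = \hat{a}_u - a_u$; once that bookkeeping is in place, the rest is a routine flow-conservation argument.
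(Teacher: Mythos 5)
Your proof is correct and is essentially the same flow-conservation argument the paper uses: both arguments establish that the payment increments $\delta_{u,v}$ are nonnegative, that the source emits at most $\Delta$, that every intermediate node absorbs at least as much as it passes on, and that $t$ absorbs but does not emit, then sum these inequalities. The paper obtains the intermediate-node inequality in one line by appealing to the previously introduced ``difference system'' abstraction (where intermediate nodes have zero funds and therefore cannot pay out more than they receive), whereas you verify the same inequality $\delta_u^{\mathrm{out}} \leq \delta_u^{\mathrm{in}}$ directly by case analysis on default status; your version is more self-contained but not a different idea.
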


Note that for this property to hold, it is crucial that $t$ is a sink of the network. For example, consider the system of Figure \ref{fig:expansive}. Given a parameter $x \in [0, \frac{1}{2}]$ for the funds of $s$, the solution of the system is to have a payment of $2x$ from $u_1$ to $u_2$, and a payment of $x$ on every other contract. This means that an increase of $\hat{e}_s=e_s + \Delta$ results in $\hat{a}_{u_1} = a_{u_1} + 2 \cdot \Delta$.

We also introduce a specific notion for the case when the increase is as high as possible.

\begin{definition} \label{def:lin}
If setting $\hat{e}_s=e_s + \Delta$ gives $\hat{a}_t = a_t + \Delta$, then we say that $s$ is \emph{$t$-linear} on $[e_s, e_s+\Delta]$.
\end{definition}

\begin{figure}
\centering
\hspace{0.05\textwidth}
\minipage{0.4\textwidth}
\centering
	\vspace{33pt}
	\definecolor{dgray}{gray}{0.35}

\begin{tikzpicture}

	\draw[very thick, dgray, arrows=-latex] (0pt,0pt) -- (32pt,0pt);
	\draw[very thick, dgray, arrows=-latex] (40pt,0pt) -- (72pt,0pt);
	\draw[very thick, dgray, arrows=-latex] (80pt,0pt) -- (112pt,0pt);
	\draw[very thick, dgray, arrows=-latex] (80pt,0pt) -- (64pt,35pt);
	\draw[very thick, dgray, arrows=-latex] (60pt,40pt) -- (44pt,5pt);

	\draw[black, fill=white] (0pt,0pt) circle (8pt);
	\draw[black, fill=white] (40pt,0pt) circle (8pt);
	\draw[black, fill=white] (80pt,0pt) circle (8pt);
	\draw[black, fill=white] (120pt,0pt) circle (8pt);
	\draw[black, fill=white] (60pt,40pt) circle (8pt);
	
	\node[anchor=center] at (0pt,0pt) {\normalsize $s$};
	\node[anchor=center] at (40.5pt,-0.5pt) {\normalsize $u_1$};
	\node[anchor=center] at (80.5pt,-0.5pt) {\normalsize $u_2$};
	\node[anchor=center] at (120pt,0pt) {\normalsize $t$};
	
	\draw [fill=white] (3pt,-3.5pt) rectangle (9.5pt,-12pt);
	\node[anchor=center] at (6.25pt,-7.5pt) {\small $x$};
	
	
\end{tikzpicture}
	\caption{Counterexample for non-expansivity on a non-sink node of the network.}
	\label{fig:expansive}
\endminipage\hfill
\hspace{0.13\textwidth}
\minipage{0.35\textwidth}
\centering
	\vspace{33pt}
	\definecolor{dgray}{gray}{0.35}

\begin{tikzpicture}

	\draw[very thick, dgray, arrows=-latex] (0pt,0pt) -- (62pt,0pt);
	\draw[very thick, dgray, arrows=-latex] (0pt,40pt) -- (0pt,8pt);
	\draw[very thick, dgray, arrows=-latex] (70pt,40pt) -- (70pt,8pt);

	\draw[black, fill=white] (0pt,0pt) circle (8pt);
	\draw[black, fill=white] (0pt,40pt) circle (8pt);
	\draw[black, fill=white] (70pt,0pt) circle (8pt);
	\draw[black, fill=white] (70pt,40pt) circle (8pt);
	
	\node[anchor=center] at (0.5pt,-0.5pt) {\normalsize $v_1$};
	\node[anchor=center] at (0.5pt,39.5pt) {\normalsize $u_1$};
	\node[anchor=center] at (70.5pt,-0.5pt) {\normalsize $v_2$};
	\node[anchor=center] at (70.5pt,39.5pt) {\normalsize $u_2$};
	
	
	\draw [fill=white] (3.5pt,36pt) rectangle (9.5pt,24pt);
	\node[anchor=center] at (6.5pt,30pt) {\footnotesize $\frac{1}{2}$};
	\draw [fill=white] (73.5pt,35.5pt) rectangle (79.5pt,25.5pt);
	\node[anchor=center] at (76.5pt,31pt) {\small $1$};
	
\end{tikzpicture}
	\vspace{5pt}
	\caption{Example for a semi-positive swap in the base model.}
	\label{fig:semipos}
\endminipage\hfill
\hspace{0.05\textwidth}
\end{figure}

Linearity is a very useful property because it means, intuitively speaking, that all the extra funds given to $s$ will end up at $t$ after traveling a shorter or longer route through the network. Similarly, we can also define linearity on multiple target nodes: e.g. we say that $s$ is $(t_1, t_2)$-linear if when setting $\hat{e}_s=e_s + \Delta$, we have $(\hat{a}_{t_1} + \hat{a}_{t_2}) - (a_{t_1} + a_{t_2}) = \Delta$. In this case, there exist coefficients $\alpha_1, \alpha_2 \in (0,1)$ with $\alpha_1 + \alpha_2 = 1$ such that for any $\delta \in [0, \Delta]$, setting $\hat{e}_s=e_s + \delta$ results in $\hat{a}_{t_1} = a_{t_1} + \alpha_1 \cdot \delta$ and $\hat{a}_{t_2} = a_{t_2} + \alpha_2 \cdot \delta$.

\begin{lemma}[Concavity] \label{lem:conc}
If $s$ is $t$-linear on $[x, x+\Delta]$ for some $x$, then $s$ is also $t$-linear on $[0, x+\Delta]$.
\end{lemma}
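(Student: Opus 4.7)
My plan is to study the function $F(y) := \hat{a}_t$ in the modified network where $\hat{e}_s := y$ and everything else is fixed, and to derive the lemma from the fact that $F$ is concave. We already know from Lemma \ref{lem:monot} and Lemma \ref{lem:nonexp} that $F$ is non-decreasing and $1$-Lipschitz; together with concavity, this gives the lemma almost immediately: if $F$ has slope exactly $1$ on a sub-interval $[x, x+\Delta]$, then on the prefix $[0,x]$ its slope is $\leq 1$ by non-expansivity and $\geq 1$ by concavity, hence $=1$ throughout $[0, x+\Delta]$, which is precisely $t$-linearity on $[0, x+\Delta]$.

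To establish concavity I would track the marginal transport from $s$ to $t$ as $y$ varies. The key observation is that a bank $u$ in default pays out any marginal dollar it receives to its creditors in the fixed proportions $l_{u,v}/l_u$ (independent of the current recovery rate), whereas a non-defaulting bank retains extra money. Hence the marginal contribution of one extra unit of $e_s$ to $\hat{a}_t$ depends only on the set $D$ of banks currently in default; concretely it equals the probability $\phi(D)$ that a random walk starting at $s$, with transition probabilities $l_{u,v}/l_u$ at each $u \in D$, absorbed at $t$ and at every non-defaulting $u \notin D \cup \{t\}$, eventually reaches $t$. A coupling argument shows that $\phi$ is monotone in $D$: enlarging $D$ only turns absorbing vertices into propagating ones, which can only raise the chance of reaching $t$. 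Moreover, by Lemma \ref{lem:monot} applied with $s$ as the source, the defaulting set $D(y)$ itself is monotone in the reverse direction, since lowering $y$ weakly lowers every $a_u$ and so $D(y_1) \supseteq D(y_2)$ whenever $y_1 \leq y_2$. Composing the two monotonicities, $\phi(D(y))$ is non-increasing in $y$, which is exactly concavity of $F$ at the level of its slopes.

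The main obstacle will be making the ``marginal flow'' picture rigorous: specifically, justifying that $F$ is piecewise-linear in $y$ with slope $\phi(D(y))$ on each piece. Since there are only finitely many possible defaulting sets, $F$ has finitely many breakpoints, and on each open piece the Eisenberg--Noe fixed-point equations reduce to a linear system in the recovery rates whose solution one can differentiate in $y$ explicitly; identifying this derivative with the walk-absorption probability $\phi(D(y))$ is where the bulk of the technical work sits, and needs some care at the breakpoints, where several banks may change default status simultaneously. Once this structural fact is in place, the concavity argument above closes the proof in one line. I expect that the piecewise-linear structure and the identification of slopes have already been set up in Appendix \ref{App:A} (since the earlier properties in this subsection are also developed there), in which case the proof becomes essentially a two-line consequence of the monotonicity of $\phi \circ D$ combined with Lemma \ref{lem:nonexp}.
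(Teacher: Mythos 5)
Your proof takes essentially the same route as the paper's: the paper likewise establishes Lemma~\ref{lem:conc} by first arguing (in Appendix~\ref{App:A}) that $a_t$ is a concave function of $e_s$—via the observation that marginal funds injected at $s$ follow the same paths through the network and stop sooner, at fully-paid edges (equivalently, at non-defaulting nodes), when the baseline funds are higher—and then combining concavity with non-expansivity and monotonicity to push linearity from $[x,x+\Delta]$ down to $[0,x+\Delta]$, exactly as you do. Your random-walk/absorption-probability framing, with the coupling argument for monotonicity of $\phi$ in $D$ and the monotonicity of the default set $D(y)$ via Lemma~\ref{lem:monot}, is simply a more explicit formalization of the paper's informal marginal-flow argument for concavity, not a genuinely different approach.
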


Intuitively, this is because linearity can only hold if for any $e_s<x+\Delta$, there are still unpaid liabilities on every edge that is contained in some directed path from $s$ to $t$. However, in this case, the first $x$ funds of $s$ are also distributed along these paths according to the same proportions; this also means that the first $x$ funds of $s$ also arrive at $t$.

\subsection{First observations on swapping}

The example system of Figure \ref{fig:motive} gives the impression that when swapping in the base model, the acting nodes simply exchange a fixed amount of payments that are incoming on the swapped contracts; this might suggest that a positive (or semi-positive) swap cannot exist at all. This is indeed true when there is no directed path between $v_1$ and $v_2$ in either direction. However, once there is a directed path of debts between the banks, then semi-positive swaps are already possible, even in the simple case when the network is a directed acyclic graph (DAG).

\begin{lemma}
In a DAG network, there can be a semi-positive swap in the base model.
\end{lemma}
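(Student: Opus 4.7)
The plan is purely constructive: I will exhibit the small DAG depicted in Figure \ref{fig:semipos} and verify by direct calculation that the swap of the two ``incoming'' contracts yields a semi-positive swap. The network has four banks $u_1, u_2, v_1, v_2$, with $e_{u_1} = \tfrac{1}{2}$, $e_{u_2} = 1$, all other funds zero, and three unit-weight debt contracts: $l_{u_1,v_1} = 1$, $l_{u_2,v_2} = 1$, and the additional edge $l_{v_1,v_2} = 1$. This is clearly a DAG, and the pair $(u_1,v_1), (u_2,v_2)$ satisfies the preconditions of swapping (equal weight, no pre-existing crossed contracts).

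Next I would compute the clearing vector in the original network. Since $u_1$ has funds $\tfrac12$ and liability $1$, it defaults and pays $\tfrac12$ to $v_1$. Bank $v_1$ thus has assets $\tfrac12$ against a liability of $1$, so it also defaults and forwards $\tfrac12$ to $v_2$. Bank $u_2$ is solvent and pays $1$ in full to $v_2$. Hence $a_{v_1} = \tfrac12$ and $a_{v_2} = \tfrac12 + 1 = \tfrac32$. Then I would repeat the computation after the swap, when the contracts are rewired to $l'_{u_1,v_2} = l'_{u_2,v_1} = 1$. Now $u_2$ pays its full $1$ to $v_1$, so $v_1$ is solvent, receives $1$, and forwards $1$ to $v_2$; meanwhile $u_1$ pays $\tfrac12$ directly to $v_2$. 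Therefore $a'_{v_1} = 1$ and $a'_{v_2} = 1 + \tfrac12 = \tfrac32$.

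Comparing the two solutions gives $a'_{v_1} = 1 > \tfrac12 = a_{v_1}$ and $a'_{v_2} = a_{v_2}$, so the swap satisfies $f'_{v_1} > f_{v_1}$ and $f'_{v_2} \geq f_{v_2}$ in the base model, which is exactly the definition of a semi-positive swap.

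There is no real obstacle here once the example is in hand; the only thing worth a sentence of intuition is why the swap helps $v_1$ strictly without hurting $v_2$. Informally, the downstream edge $v_1 \to v_2$ makes $v_2$ a common sink for every unit of funds that enters the system, so the aggregate $a_{v_1} + a_{v_2}$ is determined by the total funds of $u_1$ and $u_2$ regardless of where they flow in first; the swap merely reallocates a unit of ``direct'' payment between the two acting banks, letting $v_1$ absorb more of it (since it now receives from the solvent debtor $u_2$ instead of the defaulting $u_1$) while $v_2$ still collects every euro that eventually drains through the network.
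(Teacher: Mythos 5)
Your construction is exactly the paper's Figure~\ref{fig:semipos} example, and your clearing computations before and after the swap (giving $a_{v_1}=\tfrac12, a_{v_2}=\tfrac32$ and $a'_{v_1}=1, a'_{v_2}=\tfrac32$) agree with the paper's. The proof is correct and takes essentially the same approach.
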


\begin{proof}
Consider the system in Figure \ref{fig:semipos}. In the initial state, the acting nodes have $a_{v_1}=\frac{1}{2}$ and $a_{v_2}=\frac{3}{2}$. However, if $v_1$ and $v_2$ swap their incoming contracts from $u_1$ and $u_2$, then the new payments result in assets of $a_{v_1}=1$ and $a_{v_2}=\frac{3}{2}$, which is a strict improvement for $v_1$.
\end{proof}

Note that since the assets of both $v_1$ and $v_2$ are non-decreasing in a semi-positive swap, one can use the monotonicity property to show that such a semi-positive swap is also acceptable to every other bank in the system, i.e. all banks $w \in B$ will have $a_w' \geq a_w$.

However, arguing about a positive swap is a more difficult question. This is especially true if we consider a general network topology with cycles; as the topic of portfolio compression indicates, directed cycles are indeed often present and play an important role in financial networks in practice \cite{cycles, cycles2}. Once we have a cyclic network, a debt swap can significantly reorganize the network, e.g. it can create new cycles, and reconnect or remove old ones. Such a change can lead to a significantly different network configuration, and hence a very different solution than in the initial state.

For an example, consider the system in Figure \ref{fig:invariants}, where the acting nodes have $a_{v_1}=\frac{1}{2}$, $a_{v_2}=1$ before, and $a'_{v_1}=\frac{3}{4}$, $a'_{v_2}=\frac{1}{2}$ after the swap. One can observe that there are no obvious invariants in the system: the sum of assets of acting nodes, the sum of assets of all banks and the total payment on all contracts are all changing due to the operation. As such, the creation and removal of cycles can result in a very different solution for the system, and hence analyzing if a positive swap exists is a more challenging task in general.

\begin{figure}
\centering
	\definecolor{dgray}{gray}{0.35}

\begin{tikzpicture}

	\draw[very thick, dgray, arrows=-latex] (0pt,60pt) -- (0pt,8pt);
	\draw[very thick, dgray, arrows=-latex] (30pt,60pt) -- (30pt,8pt);
	\draw[very thick, dgray, arrows=-latex] (30pt,0pt) -- (51pt,25pt);
	\draw[very thick, dgray, arrows=-latex] (55pt,30pt) -- (50pt,60pt) -- (38pt,60pt);

	\draw[black, fill=white] (0pt,0pt) circle (8pt);
	\draw[black, fill=white] (0pt,60pt) circle (8pt);
	\draw[black, fill=white] (30pt,0pt) circle (8pt);
	\draw[black, fill=white] (30pt,60pt) circle (8pt);
	\draw[black, fill=white] (55pt,30pt) circle (8pt);
	
	\node[anchor=center] at (0.5pt,-0.5pt) {\normalsize $v_1$};
	\node[anchor=center] at (0.5pt,59.5pt) {\normalsize $u_1$};
	\node[anchor=center] at (30.5pt,-0.5pt) {\normalsize $v_2$};
	\node[anchor=center] at (30.5pt,59.5pt) {\normalsize $u_2$};
	
	\node[anchor=center] at (-6pt,33pt) {\scriptsize $\left[^{\!}\frac{1}{2}^{\!}\right]$};
	\node[anchor=center] at (24pt,33pt) {\scriptsize $[^{^{\!}}1^{^{\!}}]$};
	\node[anchor=center] at (46pt,9.5pt) {\scriptsize $[^{^{\!}}1^{^{\!}}]$};
	\node[anchor=center] at (57pt,52pt) {\scriptsize $[^{^{\!}}1^{^{\!}}]$};
	
	
	\draw [fill=white] (3.5pt,56pt) rectangle (9.5pt,44pt);
	\node[anchor=center] at (6.5pt,50pt) {\footnotesize $\frac{1}{2}$};
	\draw [fill=white] (33.5pt,56pt) rectangle (39.5pt,44pt);
	\node[anchor=center] at (36.5pt,50pt) {\footnotesize $\frac{1}{4}$};
	
	
	\draw[line width=2.5pt, arrows=-stealth] (75pt,30pt) -- (95pt,30pt);
	
	
	\draw[very thick, dgray, arrows=-latex] (115pt,60pt) -- (143pt,7pt);
	\draw[very thick, dgray, arrows=-latex] (145pt,60pt) -- (117pt,7pt);
	\draw[very thick, dgray, arrows=-latex] (145pt,0pt) -- (166pt,25pt);
	\draw[very thick, dgray, arrows=-latex] (170pt,30pt) -- (165pt,60pt) -- (153pt,60pt);

	\draw[black, fill=white] (115pt,0pt) circle (8pt);
	\draw[black, fill=white] (115pt,60pt) circle (8pt);
	\draw[black, fill=white] (145pt,0pt) circle (8pt);
	\draw[black, fill=white] (145pt,60pt) circle (8pt);
	\draw[black, fill=white] (170pt,30pt) circle (8pt);
	
	\node[anchor=center] at (115.5pt,-0.5pt) {\normalsize $v_1$};
	\node[anchor=center] at (115.5pt,59.5pt) {\normalsize $u_1$};
	\node[anchor=center] at (145.5pt,-0.5pt) {\normalsize $v_2$};
	\node[anchor=center] at (145.5pt,59.5pt) {\normalsize $u_2$};
	
	\node[anchor=center] at (141.5pt,25.5pt) {\scriptsize $\left[^{\!}\frac{1}{2}^{\!}\right]$};
	\node[anchor=center] at (118.5pt,25.5pt) {\scriptsize $\left[^{\!}\frac{3}{4}^{\!}\right]$};
	\node[anchor=center] at (162.5pt,9pt) {\scriptsize $\left[^{\!}\frac{1}{2}^{\!}\right]$};
	\node[anchor=center] at (173pt,52.5pt) {\scriptsize $\left[^{\!}\frac{1}{2}^{\!}\right]$};
	
	
	\draw [fill=white] (118.5pt,56pt) rectangle (124.5pt,44pt);
	\node[anchor=center] at (121.5pt,50pt) {\footnotesize $\frac{1}{2}$};
	\draw [fill=white] (148.5pt,56pt) rectangle (154.5pt,44pt);
	\node[anchor=center] at (151.5pt,50pt) {\footnotesize $\frac{1}{4}$};
	
\end{tikzpicture}
	\caption{Example for debt swapping in a network topology with cycles.}
	\label{fig:invariants}
\end{figure}
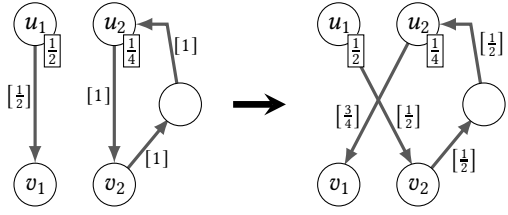

\subsection{No positive swap in the base model} \label{sec:nopos}

However, even with the reorganization of cycles, it turns out that we cannot have a positive swap in the base model. Proving this statement is more challenging than it seems at first glance; we outline the base idea of the proof below, and we discuss the technical details in Appendix \ref{App:B}.

\begin{theorem} \label{th:nopos_base}
There is no positive swap in the base model.
\end{theorem}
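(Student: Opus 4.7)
The plan is to argue by contradiction, assuming $a'_{v_1} > a_{v_1}$ and $a'_{v_2} > a_{v_2}$ and deriving an inconsistency by tracing the change in assets through the network.

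The first step is to set up bookkeeping for the change. Let $\Delta a_w = a'_w - a_w$ and $\Delta r_w = r'_w - r_w$ for every node $w$. Expanding $a_v = e_v + \sum_w r_w l_{w,v}$ in both $G$ and $G'$ and subtracting, the structural changes only affect incoming edges of $v_1$ and $v_2$, giving the linear relations
$$ \Delta a_v = \sum_w \Delta r_w \, l_{w,v} \qquad (v \notin \{v_1, v_2\}), $$
and $\Delta a_{v_1} = (r'_{u_2} - r_{u_1})\,d + \sum_{w \neq u_1, u_2} \Delta r_w\, l_{w, v_1}$, symmetrically for $v_2$. I would then decompose the direct term as a ``relabeling'' piece $(r_{u_2} - r_{u_1})d$ and an ``induced'' piece $\Delta r_{u_2}\,d$; without loss of generality assume $r_{u_1} \geq r_{u_2}$, so that the relabeling piece alone strictly hurts $v_1$. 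For both acting nodes to gain, the induced pieces together with the cascaded changes from other debtors must overcome this deficit at $v_1$.

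Next, I would tie $\Delta r_w$ to $\Delta a_w$ using the default status. Partition nodes according to whether they default in both, neither, or exactly one of $G, G'$. In the first case $\Delta r_w = \Delta a_w / l_w$; in the second case $\Delta r_w = 0$ (so all of $\Delta a_w$ is ``retained'' rather than forwarded); the two transition cases can be handled as boundary sub-cases. Combined with the conservation law $\sum_w \max(0, a_w - l_w) = \sum_v e_v$, which holds in every equilibrium because external funds equal total retention, one obtains rigid constraints on which nodes may gain or lose retention in the swap.

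The third and main step is to propagate bounds along cascade paths using Lemma~\ref{lem:monot} (monotonicity) and Lemma~\ref{lem:nonexp} (non-expansivity). The idea is that any gain at $v_1$ above the relabeling deficit must be traceable upstream to an increased recovery rate of some debtor $w \neq u_1, u_2$, which in turn requires $\Delta a_w > 0$ supported by yet further upstream increases, and so on until the chain terminates at $u_1$ or $u_2$ themselves. This should let me account for the ``total gain'' at $v_1$ and $v_2$ as a sum of weighted contributions originating at the swap edges, and Lemma~\ref{lem:conc} (concavity) will be used to rule out artificial intermediate linear segments where the tracing could stall. In the end, the symmetry of the swap forces the total induced gain at $v_1$ and the total induced gain at $v_2$ to be linked by the shared cascade, which contradicts the possibility of both being strictly positive simultaneously.

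The main obstacle I anticipate is the cyclic case illustrated in Figure~\ref{fig:invariants}, where a swap reorganizes whole cycles and naive invariants such as ``total assets'' or ``total payments on swapped edges'' fail. To address this, my plan is to work on the common node set of $G$ and $G'$ and construct an auxiliary candidate vector that combines $r$ and $r'$ in a controlled way; by the maximality of the clearing vector established in \cite{veraart}, this candidate must be dominated by the true clearing vector of one of the two networks, and I expect that comparison to pin down equality on either $v_1$ or $v_2$, contradicting the strict inequalities assumed for the positive swap.
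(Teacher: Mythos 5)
Your proposal correctly identifies the easy case (both swapped payments decrease: monotonicity alone suffices) and the ultimate contradiction target (maximality of the clearing vector), but there is a genuine gap in the central step, and it is exactly the step where the paper has to work hardest.

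Your ``propagate bounds along cascade paths'' step assumes the cascade can be traced upstream to a terminating chain at $u_1$ or $u_2$. In a cyclic network this fails: as Figure~\ref{fig:expansive} illustrates, an increase of $\delta$ in funds at a source can cause a gain strictly larger than $\delta$ at an intermediate node, so the ``total induced gain'' is not bounded by simple chain-tracing, and your conservation law $\sum_w \max(0, a_w - l_w) = \sum_w e_w$ (which is correct but purely global) does not localize the argument to $v_1$ and $v_2$. Indeed, Figure~\ref{fig:invariants} shows that all the natural local invariants (total assets of $\{v_1,v_2\}$, total payments, etc.) change under a swap, so the rigid constraints you hope the conservation law yields do not materialize. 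The paper's key device, absent from your proposal, is the \emph{open variant}: replacing the two swapped edges with fresh source nodes $s_1, s_2$ feeding $v_1, v_2$ and fresh sink nodes $t_1, t_2$ receiving the out-flows of $u_1, u_2$. This turns the before-/after-swap states into two parameter choices $(e_{s_1}, e_{s_2})$ on the \emph{same} acyclic-in-the-relevant-direction topology, so that the monotonicity, non-expansivity, concavity, and linearity lemmas (all phrased for sources and sinks) become directly applicable, and ``expansive'' behavior at $v_1$ is handled via a raw-asset decomposition rather than chain-tracing.

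Your final paragraph does point toward the right endgame — constructing a candidate vector and invoking maximality — but ``combines $r$ and $r'$ in a controlled way'' is not yet a proof step, and the direction of the comparison matters: the paper's contradiction comes from exhibiting a clearing configuration \emph{strictly larger} than the assumed-maximal one. Getting there requires first establishing that under a positive swap, $s_1$ and $s_2$ must both be $(t_1,t_2)$-linear on a small interval (Lemmas~\ref{lem:bothinc} and \ref{lem:hard}), which in the hard case $p_1' \leq p_1$, $p_2' > p_2$ needs the raw-asset flow decomposition and concavity; your proposal has no substitute for these. Also note a technical inaccuracy in your bookkeeping: your relabeling term should read $(r_{u_2}-r_{u_1})d$ using recovery rates in the same network (the cross-network term $r'_{u_2}$ already bundles both the relabeling and the induced change), and the transition cases at the default boundary are precisely where the piecewise-linear/concave structure must be invoked rather than deferred to ``boundary sub-cases.''
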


For the proof, let us use $p_1:=p_{u_1, v_1}$ and $p_2:=p_{u_2, v_2}$ to denote the payments on the swapped contracts in the initial state, and $p'_1:=p'_{u_2, v_1}$ and $p'_2:=p'_{u_1, v_2}$ to denote them after the swap.

Instead of analyzing a financial network directly, we will look at a slightly modified version by breaking up the cycles that contain the swapped contracts. That is, we create a so-called \textit{open variant} of the financial network: we remove the contracts $(u_1, v_1)$ and $(u_2, v_2)$, and instead we add a source $s_1$ that has a single liability $l_{s_1, v_1}=\infty$, and a source $s_2$ with a single liability $l_{s_2, v_2}=\infty$. Similarly, we redirect the outgoing side of the swapped contracts to two new target nodes $t_1$ and $t_2$ that only have a single incoming debt, i.e. $l_{u_1, t_1}=l_{u_1, v_1}$ and $l_{u_2, t_2}=l_{u_2, v_2}$.

We know that in the original (\textit{closed}) system, the payments $p_1, p_2$ and the payments $p_1', p_2'$ provide the maximal solution of the system before and after the swap, respectively. This implies that if we set $e_{s_1}=p_1$ and $e_{s_2}=p_2$, then the resulting solution in the open system must have $a_{t_1}=p_1$ and $a_{t_1}=p_2$. Similarly, if we set $e_{s_1}=p_1'$ and $e_{s_2}=p_2'$, then the resulting solution must have $a_{t_1}=p_2'$ and $a_{t_2}=p_1'$.  We illustrate this connection between the closed and open system in Figure \ref{fig:openclosed}.

Through these relations, the open variant allows us to study the closed network with the properties we have established for source and sink nodes. In the open system, the network topology does not have to be modified at all; the state after swapping simply corresponds to a different choice of $e_{s_1}$ and $e_{s_2}$.

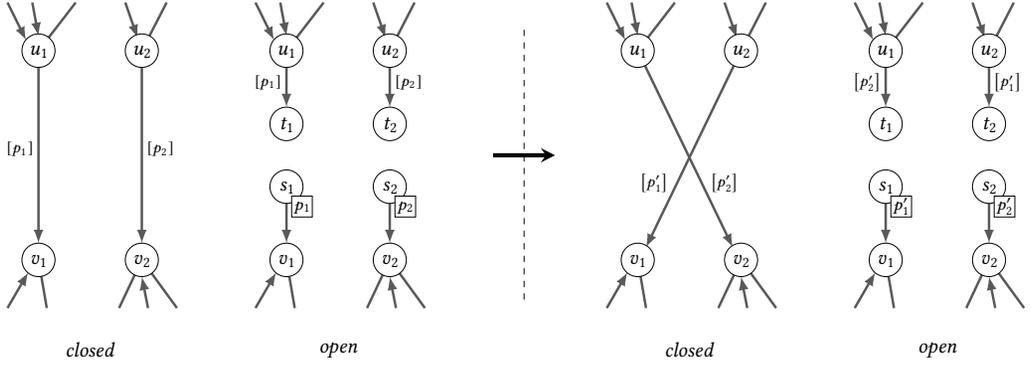
\begin{figure}
\centering
	\resizebox{1.0\textwidth}{!}{\definecolor{dgray}{gray}{0.35}
\definecolor{lgray}{gray}{0.8}

\begin{tikzpicture}

	\draw[very thick, dgray] (0pt,100pt) -- (18pt,123pt);
	\draw[very thick, dgray, arrows=-latex] (-15pt,123pt) -- (-6pt,105pt);
	\draw[very thick, dgray, arrows=-latex] (-3pt,123pt) -- (0pt,107pt);
	
	\draw[very thick, dgray] (50pt,100pt) -- (62pt,123pt);
	\draw[very thick, dgray, arrows=-latex] (42pt,123pt) -- (48pt,106pt);
	
	\draw[very thick, dgray, arrows=-latex] (-15pt,-23pt) -- (-5pt,-6pt);
	\draw[very thick, dgray] (0pt,0pt) -- (4pt,-23pt);
	
	\draw[very thick, dgray] (50pt,0pt) -- (39pt,-23pt);
	\draw[very thick, dgray, arrows=-latex] (53pt,-23pt) -- (50pt,-8pt);
	\draw[very thick, dgray] (50pt,0pt) -- (67pt,-23pt);

	\draw[very thick, dgray, arrows=-latex] (0pt,100pt) -- (0pt,8pt);
	\draw[very thick, dgray, arrows=-latex] (50pt,100pt) -- (50pt,8pt);
	
	\draw[black, fill=white] (0pt,100pt) circle (8pt);
	\draw[black, fill=white] (50pt,100pt) circle (8pt);
	
	\node[anchor=center] at (0.5pt,99.5pt) {\normalsize $u_1$};
	\node[anchor=center] at (50.5pt,99.5pt) {\normalsize $u_2$};
	
	\draw[black, fill=white] (0pt,0pt) circle (8pt);
	\draw[black, fill=white] (50pt,0pt) circle (8pt);
	
	\node[anchor=center] at (0.5pt,-0.5pt) {\normalsize $v_1$};
	\node[anchor=center] at (50.5pt,-0.5pt) {\normalsize $v_2$};
	
	\node[anchor=center] at (25pt,-43pt) {\normalsize \textit{closed}};
	
	\node[anchor=center] at (-9pt,53pt) {\scriptsize $\left[ p_1 \right]$};
	\node[anchor=center] at (59pt,53pt) {\scriptsize $\left[ p_2 \right]$};
	
	
	\draw[very thick, dgray] (120pt,100pt) -- (138pt,123pt);
	\draw[very thick, dgray, arrows=-latex] (105pt,123pt) -- (114pt,105pt);
	\draw[very thick, dgray, arrows=-latex] (117pt,123pt) -- (120pt,107pt);
	
	\draw[very thick, dgray] (170pt,100pt) -- (182pt,123pt);
	\draw[very thick, dgray, arrows=-latex] (162pt,123pt) -- (168pt,106pt);
	
	\draw[very thick, dgray, arrows=-latex] (105pt,-23pt) -- (115pt,-6pt);
	\draw[very thick, dgray] (120pt,0pt) -- (124pt,-23pt);
	
	\draw[very thick, dgray] (170pt,0pt) -- (159pt,-23pt);
	\draw[very thick, dgray, arrows=-latex] (173pt,-23pt) -- (170pt,-8pt);
	\draw[very thick, dgray] (170pt,0pt) -- (187pt,-23pt);
	
	\draw[very thick, dgray, arrows=-latex] (120pt,100pt) -- (120pt,73pt);
	\draw[very thick, dgray, arrows=-latex] (170pt,100pt) -- (170pt,73pt);
	\draw[very thick, dgray, arrows=-latex] (120pt,30pt) -- (120pt,8pt);
	\draw[very thick, dgray, arrows=-latex] (170pt,30pt) -- (170pt,8pt);
	
	\draw[black, fill=white] (120pt,100pt) circle (8pt);
	\draw[black, fill=white] (170pt,100pt) circle (8pt);
	
	\node[anchor=center] at (120.5pt,99.5pt) {\normalsize $u_1$};
	\node[anchor=center] at (170.5pt,99.5pt) {\normalsize $u_2$};
	
	\draw[black, fill=white] (120pt,65pt) circle (8pt);
	\draw[black, fill=white] (170pt,65pt) circle (8pt);
	
	\node[anchor=center] at (120.5pt,64.5pt) {\normalsize $t_1$};
	\node[anchor=center] at (170.5pt,64.5pt) {\normalsize $t_2$};
	
	\draw[black, fill=white] (120pt,0pt) circle (8pt);
	\draw[black, fill=white] (170pt,0pt) circle (8pt);
	
	\node[anchor=center] at (120.5pt,-0.5pt) {\normalsize $v_1$};
	\node[anchor=center] at (170.5pt,-0.5pt) {\normalsize $v_2$};
	
	\draw[black, fill=white] (120pt,35pt) circle (8pt);
	\draw[black, fill=white] (170pt,35pt) circle (8pt);
	
	\node[anchor=center] at (120.5pt,34.5pt) {\normalsize $s_1$};
	\node[anchor=center] at (170.5pt,34.5pt) {\normalsize $s_2$};
	
	\node[anchor=center] at (145pt,-43pt) {\normalsize \textit{open}};
	
	\draw [fill=white] (122.5pt,30.5pt) rectangle (132.5pt,20.5pt);
	\node[anchor=center] at (127.75pt,25pt) {\footnotesize $p_1$};
	\draw [fill=white] (172.5pt,30.5pt) rectangle (182.5pt,20.5pt);
	\node[anchor=center] at (177.75pt,25pt) {\footnotesize $p_2$};
	
	\node[anchor=center] at (111pt,85pt) {\scriptsize $\left[ p_1 \right]$};
	\node[anchor=center] at (179pt,85pt) {\scriptsize $\left[ p_2 \right]$};


	\draw[dashed] (235pt,110pt) -- (235pt,-20pt);
	\draw[line width=2pt, arrows=-stealth] (220pt,50pt) -- (250pt,50pt);
	
	
	\draw[very thick, dgray] (290pt,100pt) -- (308pt,123pt);
	\draw[very thick, dgray, arrows=-latex] (275pt,123pt) -- (284pt,105pt);
	\draw[very thick, dgray, arrows=-latex] (287pt,123pt) -- (290pt,107pt);
	
	\draw[very thick, dgray] (340pt,100pt) -- (352pt,123pt);
	\draw[very thick, dgray, arrows=-latex] (332pt,123pt) -- (338pt,106pt);
	
	\draw[very thick, dgray, arrows=-latex] (275pt,-23pt) -- (285pt,-6pt);
	\draw[very thick, dgray] (290pt,0pt) -- (294pt,-23pt);
	
	\draw[very thick, dgray] (340pt,0pt) -- (329pt,-23pt);
	\draw[very thick, dgray, arrows=-latex] (343pt,-23pt) -- (340pt,-8pt);
	\draw[very thick, dgray] (340pt,0pt) -- (357pt,-23pt);
	
	\draw[very thick, dgray, arrows=-latex] (290pt,100pt) -- (336pt,6pt);
	\draw[very thick, dgray, arrows=-latex] (340pt,100pt) -- (294pt,6pt);
	
	\draw[black, fill=white] (290pt,100pt) circle (8pt);
	\draw[black, fill=white] (340pt,100pt) circle (8pt);
	
	\node[anchor=center] at (290.5pt,99.5pt) {\normalsize $u_1$};
	\node[anchor=center] at (340.5pt,99.5pt) {\normalsize $u_2$};
	
	\draw[black, fill=white] (290pt,0pt) circle (8pt);
	\draw[black, fill=white] (340pt,0pt) circle (8pt);
	
	\node[anchor=center] at (290.5pt,-0.5pt) {\normalsize $v_1$};
	\node[anchor=center] at (340.5pt,-0.5pt) {\normalsize $v_2$};
	
	\node[anchor=center] at (315pt,-43pt) {\normalsize \textit{closed}};
	
	\node[anchor=center] at (298pt,37pt) {\scriptsize $\left[ p_1' \right]$};
	\node[anchor=center] at (332pt,37pt) {\scriptsize $\left[ p_2' \right]$};
	
	
	\draw[very thick, dgray] (410pt,100pt) -- (428pt,123pt);
	\draw[very thick, dgray, arrows=-latex] (395pt,123pt) -- (404pt,105pt);
	\draw[very thick, dgray, arrows=-latex] (407pt,123pt) -- (410pt,107pt);
	
	\draw[very thick, dgray] (460pt,100pt) -- (472pt,123pt);
	\draw[very thick, dgray, arrows=-latex] (452pt,123pt) -- (458pt,106pt);
	
	\draw[very thick, dgray, arrows=-latex] (395pt,-23pt) -- (405pt,-6pt);
	\draw[very thick, dgray] (410pt,0pt) -- (414pt,-23pt);
	
	\draw[very thick, dgray] (460pt,0pt) -- (449pt,-23pt);
	\draw[very thick, dgray, arrows=-latex] (463pt,-23pt) -- (460pt,-8pt);
	\draw[very thick, dgray] (460pt,0pt) -- (477pt,-23pt);
	
	\draw[very thick, dgray, arrows=-latex] (410pt,100pt) -- (410pt,73pt);
	\draw[very thick, dgray, arrows=-latex] (460pt,100pt) -- (460pt,73pt);
	\draw[very thick, dgray, arrows=-latex] (410pt,30pt) -- (410pt,8pt);
	\draw[very thick, dgray, arrows=-latex] (460pt,30pt) -- (460pt,8pt);
	
	\draw[black, fill=white] (410pt,100pt) circle (8pt);
	\draw[black, fill=white] (460pt,100pt) circle (8pt);
	
	\node[anchor=center] at (410.5pt,99.5pt) {\normalsize $u_1$};
	\node[anchor=center] at (460.5pt,99.5pt) {\normalsize $u_2$};
	
	\draw[black, fill=white] (410pt,65pt) circle (8pt);
	\draw[black, fill=white] (460pt,65pt) circle (8pt);
	
	\node[anchor=center] at (410.5pt,64.5pt) {\normalsize $t_1$};
	\node[anchor=center] at (460.5pt,64.5pt) {\normalsize $t_2$};
	
	\draw[black, fill=white] (410pt,0pt) circle (8pt);
	\draw[black, fill=white] (460pt,0pt) circle (8pt);
	
	\node[anchor=center] at (410.5pt,-0.5pt) {\normalsize $v_1$};
	\node[anchor=center] at (460.5pt,-0.5pt) {\normalsize $v_2$};
	
	\draw[black, fill=white] (410pt,35pt) circle (8pt);
	\draw[black, fill=white] (460pt,35pt) circle (8pt);
	
	\node[anchor=center] at (410.5pt,34.5pt) {\normalsize $s_1$};
	\node[anchor=center] at (460.5pt,34.5pt) {\normalsize $s_2$};
	
	\node[anchor=center] at (435pt,-43pt) {\normalsize \textit{open}};
	
	\draw [fill=white] (412.5pt,30.5pt) rectangle (422.5pt,20.5pt);
	\node[anchor=center] at (417.75pt,25.5pt) {\footnotesize $p_1'$};
	\draw [fill=white] (462.5pt,30.5pt) rectangle (472.5pt,20.5pt);
	\node[anchor=center] at (467.75pt,25.5pt) {\footnotesize $p_2'$};
	
	\node[anchor=center] at (401pt,85pt) {\scriptsize $\left[ p_2' \right]$};
	\node[anchor=center] at (469pt,85pt) {\scriptsize $\left[ p_1' \right]$};
	
\end{tikzpicture}}
	\caption{Relationships between the swapping banks in the closed and open version of the system, before swapping (left) and after swapping (right). Liabilities are not shown for simplicity.}
	\label{fig:openclosed}
\end{figure}

For the proof of Theorem \ref{th:nopos_base}, let us assume that there is a positive swap in the network, and let us consider separate cases according to the payments on the swapped contracts, i.e. the relations between $p_1$, $p_2$, $p_1'$ and $p_2'$. Note that a positive swap does not necessarily mean that $p_1' > p_1$ and $p_2' > p_2$, since the banks $v_1$ and $v_2$ also receive payments from other parts of the network. As such, it could also happen in a positive swap that $p_1' < p_1$, but we still have $a_{v_1}' > a_{v_1}$ because $v_1$ receives more payment on some other debt due to increased assets of $v_2$. 

\begin{lemma} \label{lem:monoton}
We cannot have a positive swap where $p_1' \leq p_1$ and $p_2' \leq p_2$.
\end{lemma}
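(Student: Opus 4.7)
The plan is to lift the statement into the open variant $G^{\circ}$ constructed just before the lemma, where the two swapped contracts are replaced by the auxiliary sources $s_1,s_2$ and sinks $t_1,t_2$, and the rest of the network is unchanged. By the correspondence sketched in Figure~\ref{fig:openclosed}, the maximal clearing vector of the pre-swap closed network is realized by the maximal clearing vector of $G^{\circ}$ with $e_{s_1}=p_1$, $e_{s_2}=p_2$, and the maximal clearing vector of the post-swap closed network is realized by the maximal clearing vector of $G^{\circ}$ with $e_{s_1}=p_1'$, $e_{s_2}=p_2'$. Crucially, in both cases the open solution agrees with the corresponding closed one on every bank other than the auxiliary $s_i,t_i$; in particular, $a_{v_i}$ and $a_{v_i}'$ are exactly the assets of $v_i$ in the open system under the two source assignments. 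The reason is that $s_i$ only exists to deliver to $v_i$ precisely the amount that used to arrive on the swapped contract (namely $p_i$ before, and $p_i'$ after, the swap), so the putative pre- and post-swap clearing vectors extend consistently between the two systems.

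With this identification in place, the lemma reduces to a single application of monotonicity. Under the hypothesis $p_1'\leq p_1$ and $p_2'\leq p_2$, passing from the post- to the pre-swap configuration in $G^{\circ}$ only weakly \emph{increases} both source funds. Applying Lemma~\ref{lem:monot} in two steps -- first raising $e_{s_1}$ from $p_1'$ to $p_1$, then raising $e_{s_2}$ from $p_2'$ to $p_2$ -- and using that monotonicity extends to intermediate (non-sink) nodes such as $v_1$ and $v_2$, one obtains $a_{v_1}'\leq a_{v_1}$ and $a_{v_2}'\leq a_{v_2}$. These inequalities directly contradict the strict inequalities $a_{v_1}'>a_{v_1}$ and $a_{v_2}'>a_{v_2}$ required for the swap to be positive, so no such swap can exist.

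The main obstacle I anticipate is the bookkeeping in the first step: one has to argue cleanly that the maximal closed clearing vector really does match the maximal open clearing vector at the prescribed source levels, on all shared nodes. Concretely, given the maximal closed solution before (resp.\ after) the swap, its recovery rates together with $r_{s_1}=r_{s_2}=1$ and suitably defined $r_{t_i}$ form a clearing vector of $G^{\circ}$ when the sources are set to $p_1,p_2$ (resp.\ $p_1',p_2'$); conversely, any open clearing vector restricts to a closed clearing vector. Since maximality is preserved under this transfer (any strictly better open solution would yield a strictly better closed one, and vice versa), the two maxima agree on $v_1,v_2$, which is all that is needed. Once this identification is in hand, the monotonicity argument above finishes the proof in one line.
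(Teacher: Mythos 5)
Your proposal is correct and follows essentially the same route as the paper: lift to the open variant where pre- and post-swap correspond to the two source assignments $(p_1,p_2)$ and $(p_1',p_2')$, then invoke monotonicity to conclude $a_{v_i}'\leq a_{v_i}$, contradicting positivity. The paper compresses this into a one-line application of monotonicity after the closed--open correspondence has been set up, whereas you spell out the correspondence and split the increase into two source-by-source steps, but the argument is the same.
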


\begin{proof}
If we have $p_1' \leq p_1$ and $p_2' \leq p_2$, then both $s_1$ and $s_2$ have less funds in the second state of the open system (i.e. after swapping). Then monotonicity implies that we cannot have $a_{v_1}' > a_{v_1}$ or $a_{v_2}' > a_{v_2}$.
\end{proof}

Hence for a positive swap, at least one of the acting nodes must receive more payment on the swapped contract after swapping. Assume w.l.o.g. that this is $v_2$, i.e. that $p_2'>p_2$. Based on the payment on the other contract, we consider the cases $p_1' > p_1$ and $p_1' \leq p_1$ separately.

\begin{lemma} \label{lem:bothinc}
We cannot have a positive swap where $p_1' > p_1$ and $p'_2 > p_2$.
\end{lemma}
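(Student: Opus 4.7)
\emph{Proof plan.} I pass to the open variant of the network. There the state before the swap is captured by setting the source funds to $(e_{s_1},e_{s_2})=(p_1,p_2)$, which yields $(a_{t_1},a_{t_2})=(p_1,p_2)$, and the state after the swap is captured by $(e_{s_1},e_{s_2})=(p_1',p_2')$, which yields $(a_{t_1},a_{t_2})=(p_2',p_1')$. The hypothesis $p_1'>p_1$ and $p_2'>p_2$ therefore means that both source funds strictly increase as we pass from the first state to the second.

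The key observation is that the combined gain at the two sinks is
\begin{displaymath}
(p_2'+p_1')-(p_1+p_2)\;=\;(p_1'-p_1)+(p_2'-p_2),
\end{displaymath}
which exactly matches the combined increase at the two sources; hence the multi-source, multi-sink form of non-expansivity (Lemma~\ref{lem:nonexp}) is saturated. By the corresponding extension of Lemma~\ref{lem:conc}, this tightness forces the map $(e_{s_1},e_{s_2})\mapsto(a_{t_1},a_{t_2})$ to be affine on the whole rectangle $[0,p_1']\times[0,p_2']$. Writing this affine dependence as $a_{t_j}=c_j+\alpha_{1j}\,e_{s_1}+\alpha_{2j}\,e_{s_2}$ with $\alpha_{i1}+\alpha_{i2}=1$, and substituting the four before/after values at the two sinks, the sum of the resulting equations forces $c_1+c_2=0$, so non-negativity of assets gives $c_1=c_2=0$; what remains are the two identities $\alpha_{12}\,p_1=\alpha_{21}\,p_2$ and $\alpha_{11}\,p_1'=\alpha_{22}\,p_2'$.

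Eliminating via the column-sum constraints yields the closed form
\begin{displaymath}
\alpha_{11}\;=\;\frac{p_2'\,(p_2-p_1)}{p_1'\,p_2-p_1\,p_2'}.
\end{displaymath}
A short case split on the signs of $p_2-p_1$ and of the denominator then shows $\alpha_{11}\notin[0,1]$ in every non-degenerate choice satisfying $p_1'>p_1$ and $p_2'>p_2$. The only borderline case is when the denominator vanishes; then the open system decouples into two independent channels $s_i\!\to\! t_i$, and the identity $\alpha_{11}\,p_1'=\alpha_{22}\,p_2'$ forces $p_1'=p_2'$, which combined with the separate pre-swap fixed-point equations contradicts $p_1'>p_1$ and $p_2'>p_2$ holding simultaneously (using the bounded capacity of each channel).

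The hardest step, I expect, is making the linearization rigorous: one has to carefully state the multi-source, multi-sink analogue of Lemma~\ref{lem:conc} and argue that saturating the joint non-expansivity bound really rules out any money being absorbed at a non-sink node throughout the entire rectangle, so that the map acts as a genuine column-stochastic linear map on $[0,p_1']\times[0,p_2']$. Everything after that reduces to algebra and a small case analysis.
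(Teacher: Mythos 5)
Your setup matches the paper's: you pass to the open network, observe that the total increase at the sinks equals the total increase at the sources, and hence conclude that both sources are $(t_1,t_2)$-linear. Your affine ansatz $a_{t_j}=c_j+\alpha_{1j}e_{s_1}+\alpha_{2j}e_{s_2}$ with column sums $1$, and the derivation $c_1=c_2=0$ together with the two relations $\alpha_{12}p_1=\alpha_{21}p_2$ and $\alpha_{11}p_1'=\alpha_{22}p_2'$, are all algebraically correct and yield the formula you state for $\alpha_{11}$.

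The gap is in the concluding step: the claim that this formula forces $\alpha_{11}\notin[0,1]$ for every admissible choice of $p_1<p_1'$, $p_2<p_2'$ is false. Take for instance $p_1=1$, $p_2=2$, $p_1'=4$, $p_2'=3$, which satisfies $p_1'>p_1$ and $p_2'>p_2$. Then $\alpha_{11}=\frac{p_2'(p_2-p_1)}{p_1'p_2-p_1p_2'}=\frac{3\cdot 1}{8-3}=\frac{3}{5}$, and one can check that the remaining three coefficients come out as $\alpha_{12}=\frac{2}{5}$, $\alpha_{21}=\frac{1}{5}$, $\alpha_{22}=\frac{4}{5}$, all in $[0,1]$. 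So the affine map is a perfectly legitimate column-stochastic matrix and there is no range violation. This is in fact the generic situation: whenever the two cross-coefficients $\alpha_{12}$ and $\alpha_{21}$ are strictly positive, nothing in the entry bounds alone prevents the two before/after states from coexisting. The contradiction has to come from somewhere else, namely from maximality of the original clearing vector. The paper's proof first rules out the degenerate cases $\alpha_{12}=0$ or $\alpha_{21}=0$ directly (each would already produce a larger clearing vector), and then, in the interior case you land in, chooses a small $\delta>0$, sets $\Delta_1=\delta/\alpha_{12}$ and $\Delta_2=\delta/\alpha_{21}$, and shows that injecting $(\Delta_1,\Delta_2)$ at $(s_1,s_2)$ yields $a_{t_1}=p_1+\Delta_1$ and $a_{t_2}=p_2+\Delta_2$, i.e.\ a strictly larger clearing vector in the closed system — a contradiction. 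That construction is the essential missing idea in your argument; the case analysis on the sign of $\alpha_{11}$ cannot replace it.

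A secondary issue: you propose to establish affinity on the entire rectangle $[0,p_1']\times[0,p_2']$, and you correctly flag this as the hardest step. It is in fact more than the paper proves or needs. Joint two-source linearity on a large box does not follow from the separate one-source linearities, because the linearity of $s_1$ can depend on the level of $e_{s_2}$; the paper only establishes (and only uses) linearity on a small joint box around $(p_1,p_2)$, which suffices for the perturbation argument. If you keep the small-box version you also lose the clean ``evaluate at $(0,0)$'' step that gave $c_1=c_2=0$; the paper avoids this by working incrementally with the local slopes rather than with a global affine form.
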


\renewcommand{\proofname}{Proof sketch.}

\begin{proof}
Consider the two states of the open system: for $e_{s_1}=p_1$, $e_{s_2}=p_2$ we have $a_{t_1}=p_1$ and $a_{t_2}=p_2$ (original state), and for $e_{s_1}=p_1'$, $e_{s_2}=p_2'$, we have $a_{t_1}=p_2'$ and $a_{t_2}=p_1'$ (after swapping). Between these two states, the funds at the sources and the assets at the sinks both increase by $(p_1'+p_2')-(p_1+p_2)$, so both $s_1$ and $s_2$ in the open system are $(t_1,t_2)$-linear (on $[p_1, p_1']$ and $[p_2, p_2']$, respectively).

Hence there exists $\alpha_1, \beta_1 \in [0,1]$ with $\alpha_1 + \beta_1 = 1$, such that for sufficiently small $\delta>0$, setting $e_{s_1}=a+\delta$ results in $a_{t_1}=p_1+\alpha_1 \cdot \delta$ and $a_{t_2}=p_2 + \beta_1 \cdot \delta$. Also, we cannot have $\alpha_1=1, \beta_1=0$; in this case, setting $e_{s_1}=p_1+\delta$ would result in $a_{t_1}=p_1 + \delta$, which gives a larger solution in the original system with $p_{u_1, v_1}=p_1+ \delta$, contradicting the fact that our initial clearing vector was maximal. Note that this argument also uses the fact that $l_{u_1, v_1} \geq p_1+ \delta$, which indeed holds for a small enough $\delta$; we discuss this in Appendix \ref{App:B}.

Similarly, there exists $\alpha_2, \beta_2 \in [0,1]$ with $\alpha_2 + \beta_2 = 1$, such that setting $e_{s_2}=p_2+\delta$ results in $a_{t_1}=p_1+\alpha_2 \cdot \delta$ and $a_{t_2}=p_2 + \beta_2 \cdot \delta$, and we must have $\alpha_2>0$.

Now let us select a small enough $\delta$, define $\Delta_1 = \delta / \beta_1$ and $\Delta_2 = \delta / \alpha_2$, and consider the increase $e_{s_1}=p_1 + \Delta_1$, $e_{s_2}=p_2 + \Delta_2$. This provides a higher solution in the original closed system, since
\begin{displaymath}
a_{t_1}=p_1 + \alpha_1 \cdot \Delta_1 + \alpha_2 \cdot \Delta_2 = p_1 + \delta \cdot \left( \frac{\alpha_1}{\beta_1} + 1 \right) = p_1 + \frac{\delta}{\beta_1} =p_1 + \Delta_1 \, ,
\end{displaymath}
\begin{displaymath}
 a_{t_2}=p_2 + \beta_1 \cdot \Delta_1 + \beta_2 \cdot \Delta_2 = p_2 + \delta \cdot \left( 1 + \frac{\beta_2}{\alpha_2}\right) = p_2 + \frac{\delta}{\alpha_2} = p_2 + \Delta_2 \, .
\end{displaymath}
This contradicts the fact that our original clearing was maximal.
\end{proof}

The only case that remains is when $p_1' \leq p_1$, i.e. when $v_1$ receives less money in the network through the swapped contract, but it is still compensated for this loss by a larger payment on some other incoming debts. The formal proof of this claim is significantly more technical; we outline the main idea here, and discuss the details in Appendix \ref{App:B}.

\begin{lemma} \label{lem:hard}
We cannot have a positive swap where $p_1' \leq p_1$ and $p'_2 > p_2$.
\end{lemma}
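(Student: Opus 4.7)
The plan is to extend the open-variant analysis used for Lemmas~\ref{lem:monoton} and~\ref{lem:bothinc}, focusing on the natural intermediate state $C := (p_1, p_2')$ in the open system. Since $p_1' \leq p_1$ and $p_2 < p_2'$, the state $C$ dominates both $A = (p_1, p_2)$ and $B = (p_1', p_2')$ componentwise, so Lemma~\ref{lem:monot} yields $a_{v_i}(C) \geq \max(a_{v_i}(A), a_{v_i}(B)) = a'_{v_i}$ for $i \in \{1,2\}$. As the swap is positive, this gives strict gains $a_{v_1}(C) > a_{v_1}$ and $a_{v_2}(C) > a_{v_2}$: merely raising $e_{s_2}$ from $p_2$ to $p_2'$ at fixed $e_{s_1} = p_1$ is already enough to strictly improve both acting banks.

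The next step is to sandwich $a_{t_2}(C)$ between two maximality constraints. Viewing $C$ as a perturbation of $A$ in direction $(0, p_2' - p_2)$ and invoking maximality of the original closed clearing, we cannot have both $a_{t_1}(C) \geq p_1$ and $a_{t_2}(C) \geq p_2'$; since monotonicity gives the former, we obtain $a_{t_2}(C) < p_2'$. Symmetrically, whenever $p_1 > p_1'$, viewing $C$ as a perturbation of $B$ in direction $(p_1 - p_1', 0)$ and invoking maximality of the after-swap clearing (whose closing condition is $e_{s_1} = a_{t_2}$ and $e_{s_2} = a_{t_1}$), we cannot have both $a_{t_1}(C) \geq p_2'$ and $a_{t_2}(C) \geq p_1$; the former is again monotone, so $a_{t_2}(C) < p_1$. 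The boundary case $p_1 = p_1'$ is handled separately: here $C = B$, and if in addition $p_1 = p_2'$ then $B$ itself is a post-fixed point of the original closing strictly above $A$, which is an immediate contradiction, while the remaining subcases reduce to the main argument after a small auxiliary perturbation of $e_{s_1}$.

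The technical crux is to turn the strict gain $a_{v_1}(C) > a_{v_1}$, achieved by raising $e_{s_2}$ alone, into a contradiction with the maximality of $A$. By continuity and monotonicity of the clearing, this strict gain must already appear for an arbitrarily small increase of $e_{s_2}$ above $p_2$, and the concavity principle (Lemma~\ref{lem:conc}) then forces a linear response on an initial subinterval $[p_2, p_2 + \eta]$. Tracing the extra funds along their flow paths from $s_2$, and combining with a simultaneous small increase of $e_{s_1}$ above $p_1$ whose linear response is also controlled by Lemma~\ref{lem:conc}, one calibrates coefficients $\delta_1, \delta_2 \geq 0$, not both zero, so that $(p_1 + \delta_1, p_2 + \delta_2)$ becomes a post-fixed point of the original diagonal closing with strictly larger values, contradicting the maximality of $A$ exactly in the spirit of the proof of Lemma~\ref{lem:bothinc}.

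The hard part will be this last conversion: the strict gain sits at the non-sink node $v_1$ rather than at a sink, so Lemma~\ref{lem:conc} does not apply verbatim with $v_1$ as the target. The resolution, detailed in Appendix~\ref{App:B}, is to follow the additional payments leaving $v_1$ along its outgoing debts and argue, via a case analysis over the default status of the nodes on the cycles through $v_1$ and $v_2$, that this extra value eventually either reaches $t_1$ or $t_2$, or accumulates at a non-defaulting node whose payment into $u_1$ or $u_2$ still raises $a_{t_1}(C)$ or $a_{t_2}(C)$; in each sub-case, the maximality of the appropriate clearing is violated.
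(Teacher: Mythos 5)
Your framing via the intermediate state $C = (p_1, p_2')$ is a legitimate variant of the paper's open-system decomposition, and the first two paragraphs are sound: monotonicity does give $a_{v_i}(C) \ge a'_{v_i} > a_{v_i}$, and the two maximality invocations correctly give $a_{t_2}(C) < p_2'$ and (for $p_1 > p_1'$) $a_{t_2}(C) < p_1$. But the proposal stalls at exactly the step you flag as the hard part, and what you sketch there does not work.

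Two concrete problems. First, you claim that concavity (Lemma~\ref{lem:conc}) ``forces a linear response on an initial subinterval $[p_2, p_2+\eta]$.'' It does not: concavity of $a_{v_1}$ in $e_{s_2}$ only says the marginal gain near $p_2$ is at least the average gain, and the statement of Lemma~\ref{lem:conc} concerns $t$-linearity with $t$ a sink, which $v_1$ is not. So nothing here yet gives the $(t_1,t_2)$-linearity of $s_1$ or $s_2$ that the Lemma~\ref{lem:bothinc}-style contradiction actually needs. Second, your resolution sketch --- ``this extra value eventually either reaches $t_1$ or $t_2$, or accumulates at a non-defaulting node whose payment into $u_1$ or $u_2$ still raises $a_{t_1}(C)$'' --- is internally inconsistent: a non-defaulting node retains any additional incoming value, so its outgoing payments do \emph{not} increase. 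This is precisely the failure mode the linearity machinery is designed to rule out, and a local case analysis on default status along cycles does not rule it out; the flow can leave the $v_1$--$v_2$ cycles and stall anywhere in the network. The paper's actual argument is a global accounting: it introduces the raw-asset quantity $\mu$ by making $v_1$ a sink in a difference system, proves $\mu > p_1-p_1'$ (Lemma~\ref{lem:mu}), decomposes the payment increase into $\phi_2$, $\phi_{1,1}$, $\phi_{1,2}$, and uses non-expansivity plus the comparison Lemma~\ref{lem:inferior} to force the last $\epsilon$ of raw assets to reach $t_1 \cup t_2$, whence $(t_1,t_2)$-linearity of $s_1$ and then $s_2$. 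None of that counting appears in your sketch.

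A smaller issue: the two upper bounds on $a_{t_2}(C)$ in your second paragraph are never paired with a lower bound and never feed into the final contradiction, so the ``sandwich'' is incomplete and, as written, dead weight. The boundary handling (``small auxiliary perturbation of $e_{s_1}$'') is also left unspecified, and since the degenerate subcases (e.g.\ $p_1' = p_1$ with both swapped edges saturated) are exactly where the maximality argument needs care, this cannot be waved off.
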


\begin{proof}
The proof idea is similar to above: we prove that even in this case, it still holds that both $s_1$ and $s_2$ are $(t_1,t_2)$-linear on a small interval above $p_1$ and $p_2$. This allows us to use the same argument as before; the challenging part is to prove $(t_1,t_2)$-linearity in this slightly different setting.

Intuitively, the proof works as follows: one can show that if $a_{v_1}' > a_{v_1}$, then when we increase $e_{s_2}$ from $p_2$ to $p_2'$, there must exist a ``flow'' of strictly more than $p_1-p_1'$ new assets from $s_2$ to $v_1$. Note that this is a non-trivial claim: an increase of $p_1-p_1'$ in $a_{v_1}$ could also be caused by less than $p_1-p_1'$ new funds, as in Figure \ref{fig:expansive}, since $v_1$ is not a sink node.
 
Assume there is a such a flow of $p_1-p_1'+\epsilon$ for some $\epsilon>0$. If we subtract the path of these $p_1-p_1'+\epsilon$ assets, then the remaining $(p_2'-p_2)-(p_1-p_1'+\epsilon) = (p_1'+p_2')-(p_1+p_2)-\epsilon$ new funds at $s_2$ can only contribute an increase of $(p_1'+p_2')-(p_1+p_2)-\epsilon$ to the sinks $t_1$ and $t_2$ altogether, due to non-expansivity.

However, the total increase of assets at the sink nodes is $(p_1'+p_2')-(p_1+p_2)$ between the two states of the open system, so the new assets going from $s_2$ to $v_1$ must also contribute at least $\epsilon$ to the sink nodes. The ``first'' $p_1-p_1'$ assets from $s_2$ to $v_1$ cannot have any such contribution, since they have (at most) the same effect as increasing $e_{s_1}$ from $p_1'$ to $p_1$; however, $s_1$ already had these extra funds in the original state, and thus they are already included in our initial state of $a_{t_1}=p_1$, $a_{t_2}=p_2$. Hence the remaining flow of $\epsilon$ assets must contribute $\epsilon$ to the sinks; however, these assets have the same effect as increasing $e_{s_1}$ to $p_1+\epsilon$ in the original state. This means that $s_1$ is $(t_1,t_2)$-linear on $[p_1, p_1+\epsilon]$; due to concavity, it is also $(t_1,t_2)$-linear on $[0, p_1+\epsilon]$.

Since the increase of $p_2'-p_2$ at $e_{s_2}$ has provided $(p_1'+p_2')-(p_1+p_2)-\epsilon$ new assets to the sinks directly, and $p_1-p_1'+\epsilon$ new assets to a $(t_1,t_2)$-linear node $v_1$, it provides altogether $p_2'-p_2$ new assets to the sinks; hence $s_{v_2}$ is also $(t_1,t_2)$-linear on $[p_2, p_2']$.
\end{proof}

\renewcommand{\proofname}{Proof.}

This concludes the proof of Theorem \ref{th:nopos_base}. Note that the theorem also has immediate implications on the more sophisticated operation of portfolio swapping. 

\begin{theorem} \label{th:noport_base}
There is no positive portfolio swap in the base model.
\end{theorem}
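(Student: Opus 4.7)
The plan is to reduce the portfolio swap to a single-contract swap by inserting two auxiliary pass-through nodes, and then invoke Theorem~\ref{th:nopos_base}. Given the sets $U_1, U_2$ with $\sum_{u \in U_1} l_{u,v_1} = \sum_{u \in U_2} l_{u,v_2} = d$, I would build an augmented network $\tilde G$ from $G$ by adding two new nodes $w_1, w_2$ with $e_{w_1} = e_{w_2} = 0$, rerouting each contract $(u,v_1)$ with $u \in U_1$ to become $(u,w_1)$ of the same weight, rerouting each $(u,v_2)$ with $u \in U_2$ to become $(u,w_2)$, and adding the two single contracts $l_{w_1, v_1} = d$ and $l_{w_2, v_2} = d$.

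The first step is to verify that $\tilde G$ is equivalent to $G$ from the viewpoint of every original node. Each auxiliary $w_i$ has no funds, a single outgoing contract of weight $d$, and incoming contracts of total weight $d$, so $a_{w_i} \leq d$ automatically; hence $p_{w_i, v_i} = r_{w_i} \cdot d = a_{w_i} = \sum_{u \in U_i} p_{u, w_i}$, i.e.\ $w_i$ losslessly forwards its entire income to $v_i$. From this one checks that the maximal clearing vector of $G$ extends uniquely to a clearing vector of $\tilde G$ (setting $r_{w_i} = a_{w_i}/d$), and that this extension is again maximal; in particular $a_{v_1}$ and $a_{v_2}$ take the same values in $G$ and in $\tilde G$.

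The second step is the observation that the portfolio swap of $U_1, U_2$ between $v_1$ and $v_2$ in $G$ corresponds precisely to the single-contract swap of $(w_1, v_1)$ with $(w_2, v_2)$ in $\tilde G$: in both post-operation networks the aggregated debt from $U_1$ ends up flowing to $v_2$ and that from $U_2$ to $v_1$, while every other contract is unchanged. Applying the same pass-through equivalence after the operation, the new assets $a'_{v_1}, a'_{v_2}$ also coincide between the two views. Thus a positive portfolio swap in $G$ would immediately yield a positive single swap in $\tilde G$, contradicting Theorem~\ref{th:nopos_base}.

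The main obstacle I expect is rigorously justifying the pass-through equivalence --- in particular verifying that the induced clearing vector on $\tilde G$ is the \emph{maximal} Eisenberg--Noe equilibrium and not some smaller fixed point, and conversely that every clearing vector of $\tilde G$ restricts to one of $G$ with matching values on the original nodes. This is routine given the uniqueness/maximality theory of clearing vectors already invoked in the paper, but the bookkeeping is where I would spend most of the care.
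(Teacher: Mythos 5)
Your proof follows essentially the same route as the paper's own argument: both introduce two auxiliary zero-funds pass-through nodes that collect the debts from $U_1$ and $U_2$ and relay them to $v_1$ and $v_2$ respectively, so that the portfolio swap becomes a single-contract swap of the two relay edges, and then invoke Theorem~\ref{th:nopos_base}. The only cosmetic difference is that the paper sets the relay liabilities to $\infty$ (i.e.\ a sufficiently large constant) while you use exactly $d$; both choices keep the relay node lossless because its incoming assets never exceed $d$, and both satisfy the equal-weight requirement of the swap definition, so the argument is identical in substance. The ``bookkeeping'' you flag — that the maximal clearing vector of $G$ extends to the maximal one of $\tilde G$ and restricts back — is indeed routine and the paper leaves it implicit; your explicit mention of it is careful but does not change the approach.
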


\begin{proof}
Given a positive portfolio swap from debtor sets $U_1$ and $U_2$, we could introduce two auxiliary nodes $\hat{v_1}$ and $\hat{v_2}$ that simply collect and relay these contracts: for all $u \in U_i$ (with $i \in \{1,2\}$), we replace the debt $(u, v_i)$ by a debt $(u, \hat{v_i})$, and we assign $e_{\hat{v_i}}=0$, $l_{\hat{v_i}, v_i}=\infty$. The portfolio swap in the original network now provides a positive swap in this system, which contradicts Theorem \ref{th:nopos_base}. 
\end{proof}

Finally, we point out that positive swaps can easily become possible in more complex models of financial networks. For example, one popular extension of the model is to consider a default cost parameter $\beta \in (0,1)$, and assume that whenever a bank $v$ is in default, then its assets are multiplied by this reduction factor $\beta$ to account for the administrative costs of a default in practice. If we consider this extended model with a choice of $\beta=\frac{1}{2}$, then Figure \ref{fig:semipos} already turns into an example of a positive swap: one can compute that it provides $a_{v_1}=\frac{1}{4}$ and $a_{v_2}=\frac{9}{8}$, but $a_{v_1}'=1$, $a_{v_2}'=\frac{5}{4}$.

\subsection{Implications for the proportional shock model} \label{sec:prop}

These results in the base model have immediate implications for the case of the proportional shock model.

\begin{lemma}
There is no swap that is positive for a specific $\lambda$ in the proportional shock model, i.e. a swap that provides $f'_{v_1}(\lambda)>f_{v_1}(\lambda)$ and $f'_{v_2}(\lambda)>f_{v_2}(\lambda)$.
\end{lemma}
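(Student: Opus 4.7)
The plan is to reduce this claim directly to Theorem \ref{th:nopos_base} (no positive swap in the base model). The key observation is that applying a proportional shock is a pure modification of the funds vector: it scales every $e_u$ by $(1-\lambda)$ but does not touch the liability structure at all. In particular, the operation of swapping contracts commutes with the operation of applying a proportional shock, because swapping only rewires liabilities $l_{u_i,v_i}$ and leaves every $e_u$ untouched.

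More precisely, I would fix an arbitrary $\lambda \in [0,1]$ and let $G_\lambda$ denote the network obtained from $G$ by replacing each $e_u$ with $(1-\lambda)\cdot e_u$. By definition, $f_{v_i}(\lambda) = a_{v_i}^{\,(G_\lambda)}$. Now consider a swap of $v_1$ and $v_2$ turning $G$ into $G'$; applying the same swap to $G_\lambda$ produces a network which is identical to $(G')_\lambda$, since both routes yield the same funds vector $(1-\lambda)\cdot e$ and the same (post-swap) liability structure. Hence $f'_{v_i}(\lambda) = a_{v_i}^{\,(G'_\lambda)}$ equals the assets of $v_i$ in the network obtained by performing the swap on $G_\lambda$ in the sense of the base model.

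At this point the statement "the swap is positive for this specific $\lambda$" reads exactly as: the swap, viewed as an operation on the (perfectly valid) financial network $G_\lambda$, strictly increases the assets of both acting banks in the solution of $G_\lambda$. This is precisely a positive swap in the base model applied to $G_\lambda$, which Theorem \ref{th:nopos_base} rules out. Contradiction, and we are done.

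The only thing that needs a sentence of care is checking that the swap is still a legal swap on $G_\lambda$ — i.e., that the four distinct nodes $v_1,v_2,u_1,u_2$ still exist and that $l_{u_1,v_1}=l_{u_2,v_2}=d$ and $l_{u_1,v_2}=l_{u_2,v_1}=0$ still hold in $G_\lambda$. This is immediate since $G_\lambda$ and $G$ have identical liability matrices. No computation with the shock function, monotonicity, or linearity machinery is needed; the whole argument is a one-line reduction once the commutativity of swapping with proportional scaling of funds is spelled out. I expect no real obstacle here — the result is a direct corollary of Theorem \ref{th:nopos_base}.
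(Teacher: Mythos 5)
Your proposal is correct and is essentially identical to the paper's one-line proof: the paper also argues that scaling all funds by $(1-\lambda)$ yields a network in which the hypothesized swap would contradict Theorem \ref{th:nopos_base}. You simply spell out the commutativity of swapping and fund-scaling in more detail, which the paper leaves implicit.
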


\renewcommand{\proofname}{Proof.}

\begin{proof}
Given an example for such a swap, multiplying the funds of all nodes by $(1-\lambda)$ would give us a network that contradicts Theorem \ref{th:nopos_base} in the base model. 
\end{proof}

However, this does not immediately imply that there is no positive swap in the proportional model at all. Since semi-positive swaps are possible in the base model, it could still happen that a swap is positive because it provides $f'_{v_1}(\lambda_1)>f_{v_1}(\lambda_1)$ and $f'_{v_2}(\lambda_1)=f_{v_2}(\lambda_1)$ for a specific $\lambda_1$, and $f'_{v_1}(\lambda_2)=f_{v_1}(\lambda_2)$ and $f'_{v_2}(\lambda_2)>f_{v_2}(\lambda_2)$ for some other $\lambda_2$. However, one can show that this is also not possible, and hence there is no positive swap in the proportional model at all.

\renewcommand{\proofname}{Proof sketch.}

\begin{theorem} \label{th:nopos_prop}
There is no positive swap in the proportional shock model.
\end{theorem}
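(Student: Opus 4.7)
The plan is a contradiction argument that lifts the preceding lemma (no swap is positive at a single $\lambda$) to the functional statement of the theorem via continuity of the shock functions.

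Suppose a positive swap exists; define the gain functions
\begin{displaymath}
 g(\lambda) := f'_{v_1}(\lambda) - f_{v_1}(\lambda), \qquad h(\lambda) := f'_{v_2}(\lambda) - f_{v_2}(\lambda), \qquad \lambda \in [0,1].
\end{displaymath}
Semi-positivity yields $g, h \geq 0$ on $[0,1]$, and positivity yields that each is strictly positive somewhere. For every fixed $\lambda$, the network whose funds are $(1-\lambda)\,e_v$ is itself a valid base-model instance, so the preceding lemma (applied to that scaled instance) gives $g(\lambda)\cdot h(\lambda)=0$ at every $\lambda \in [0,1]$. Since the maximal clearing vector is continuous (indeed piecewise linear) in the funds, and funds depend linearly on $\lambda$, both $g$ and $h$ are continuous piecewise-linear functions, so their supports $\{g>0\}$ and $\{h>0\}$ are disjoint, non-empty, relatively open subsets of $[0,1]$.

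The main step is to rule out the coexistence of two disjoint non-empty supports. I would pick a boundary point $\lambda^\ast$ separating them, at which $g(\lambda^\ast)=h(\lambda^\ast)=0$ while every neighbourhood of $\lambda^\ast$ contains points with $g>0$ and points with $h>0$. Translating this into the open-variant language of Section~\ref{sec:nopos}, the payments $p_1(\lambda),p_2(\lambda),p'_1(\lambda),p'_2(\lambda)$ on the swapped contracts depend continuously on $\lambda$, and crossing $\lambda^\ast$ corresponds to a flip in which of the two acting banks benefits. The plan is to combine the monotonicity, non-expansivity, and concavity properties of Section~\ref{sec:base} with the open-variant linearity arguments behind Lemma~\ref{lem:hard} to show that such a flip, in a network whose topology is fixed and whose funds are merely rescaled by a common factor $(1-\lambda)$, is impossible without producing an intermediate $\lambda$ at which both $v_1$ and $v_2$ strictly benefit -- which would directly violate the preceding lemma.

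The main obstacle is exactly this last step: piecewise linearity and disjointness of supports are not intrinsically contradictory, so the contradiction must come from the specific geometry of the proportional shock rather than from generic analytic facts. I expect the technical core to be a case analysis on which banks transition in and out of default as $\lambda$ crosses $\lambda^\ast$, exploiting the fact that uniform rescaling of funds makes the open-variant $(t_1,t_2)$-linearities of Section~\ref{sec:nopos} vary coherently with $\lambda$, so that a $v_1$-favourable and a $v_2$-favourable regime cannot be cleanly separated along the $\lambda$-axis without an overlap.
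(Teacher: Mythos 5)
Your setup is sound and matches the paper's: you correctly reduce to the case where $g$ and $h$ have disjoint, non-empty supports via the preceding pointwise lemma, and you correctly locate the needed technology in the open-variant linearity arguments from Section~\ref{sec:nopos}. But the specific mechanism you propose — a boundary-point argument at some $\lambda^\ast$ separating the two supports — is not the one that works, and you acknowledge yourself that you cannot see how to close the gap from there. At $\lambda^\ast$ both gains vanish, so you would be trying to extract the linearity constraints at a degenerate, purely semi-positive point, which is considerably harder than what the paper actually does.

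The key idea the proposal is missing is that the relevant linearity properties are \emph{monotone in $\lambda$} in one direction: if a semi-positive swap at $\lambda_1$ forces, say, $s_2$ to be $(t_1,t_2)$-linear (possibly except for a flow to $v_1$), this property is preserved whenever all funds are scaled down further, because scaling down can only keep cyclic paths unsaturated. Since a proportional shock of size $\lambda_2 > \lambda_1$ is obtained from the $\lambda_1$-shock by a further uniform rescaling by $\tfrac{1-\lambda_2}{1-\lambda_1}$, the linearity established at $\lambda_1$ automatically persists at $\lambda_2$. The paper therefore never fixes a separating boundary point; it picks $\lambda_1 < \lambda_2$ with $g(\lambda_1)>0$ and $h(\lambda_2)>0$ (WLOG), transports the $\lambda_1$-linearity for $s_1$ to the network at $\lambda_2$, and combines it there with the $\lambda_2$-linearity for $s_2$. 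Having both linearities live in the \emph{same} scaled network is exactly what feeds into the Lemma~\ref{lem:bothinc}-style argument producing a strictly larger clearing vector, the contradiction. Your phrase ``vary coherently with $\lambda$'' gestures at this, but the precise and load-bearing statement — preservation of $(t_1,t_2)$-linearity (up to a direct flow to the other acting bank) under uniform downward rescaling, and hence its one-sided persistence in $\lambda$ — needs to be formulated and proved, and the two-point (rather than boundary-point) framing is what lets you avoid the degenerate situation entirely.
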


\begin{proof}
Similarly to the proof of Theorem \ref{th:nopos_base}, one can show that a semi-positive swap also implies that specific linearities must hold. In particular, if we have $p_1' = p_1$ besides $p_2' > p_2$, then $s_2$ must be $(t_1,t_2)$-linear on $[p_2, p_2']$; and if we have $p_1' < p_1$ besides $p_2' > p_2$, then a ``flow'' of $p_1-p_1'$ funds must go directly to $v_1$, and the remaining part of the new funds is $(t_1,t_2)$-linear.

Due to our semi-positivity assumptions, these properties must hold for bank $s_1$ when we multiply the funds of each bank by $(1-\lambda_1)$, and must hold for $s_2$ when we multiply the funds by $(1-\lambda_2)$. Furthermore, one can observe that these linearities also remain true if we scale down the funds of all banks in a system; as such, if we have e.g. $\lambda_1<\lambda_2$, then the property established for $\lambda_1$ must also hold for $\lambda_2$, since the shock of size $\lambda_2$ can be obtained from the shock of size $\lambda_1$ with a further scaling of $\frac{1-\lambda_2}{1-\lambda_1}$.

This means that in the financial network with a shock of size $\lambda_2$, both of these properties hold: $s_1$ is $(t_1,t_2)$-linear except for (possibly) a flow towards $v_2$, and $s_2$ is $(t_1,t_2)$-linear except for (possibly) a flow towards $v_1$. Hence intuitively, a small $\delta$ amount of extra funds injected at $e_{s_1}$ (or $e_{s_2}$) will all arrive at $t_1$, $t_2$ and $v_2$ (or $t_1$, $t_2$ and $v_1$, respectively). Similarly to the proof of Lemma \ref{lem:bothinc}, this allows us to present a slightly larger clearing vector in the system, which is a contradiction.
\end{proof}

\renewcommand{\proofname}{Proof.}

\noindent Similarly to Theorem \ref{th:noport_base}, this negative result can be carried over to the case of portfolio swapping.

\section{Worst-case shock models} \label{sec:worstcase}

We now analyze swapping operations in the worst-case shock models. Our example in Section \ref{sec:motive} has already shown that we can indeed have swaps in these models that are beneficial for both acting nodes. However, this raises a range of interesting questions to study, e.g. about the effects of the swaps on the rest of the system, or the algorithmic aspects of finding such swaps.

In this section, we focus on the properties of the worst-set shock model. The worst-sum model is similar to this in many aspects, so most of our results also carry over to that model; however, the proofs in this case become somewhat more technical. We discuss the details of adapting our proofs to the worst-sum model in Appendix \ref{App:D}.

\subsection{Effect on other banks}

We first explore the most fundamental properties of swapping in this worst-case setting. In particular, we show that even though a positive swap is beneficial for both of the swapping parties, it can result in a strictly worse situation for some other banks in the network.

\begin{lemma} \label{lem:badforw}
If two acting nodes $v_1$, $v_2$ execute a positive swap, then this can result in a new shock function $f_w' < f_w$ for some other bank $w$.
\end{lemma}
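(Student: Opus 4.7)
The statement is an existence claim, so I would prove it by constructing a specific small network that simultaneously witnesses (i) a positive swap between $v_1,v_2$ in the worst-set model and (ii) a strictly worsened shock function for some third-party bank $w$. The natural starting point is the motivational skeleton of Figure~\ref{fig:motive}, which already provides the positive swap via source diversification; the challenge is to augment this skeleton with a bank $w$ and extra contracts so that some $k$-shock becomes strictly worse for $w$ after the swap while no $k$-shock becomes strictly better.

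A ``pure downstream'' augmentation in which $w$ is a sink receiving only from $v_1,v_2$ does not suffice. In that setting the payment reaching $w$ through $v_i$ is the concave map $g_i(x) = \min(1, x/l_{v_i}) \cdot l_{v_i, w}$ of $v_i$'s incoming $x$, and a Jensen-type argument gives $g_1(1)+g_2(1) \geq \tfrac{1}{2}(g_1(2)+g_2(0)) + \tfrac{1}{2}(g_1(0)+g_2(2))$. Since the swap replaces the concentrated pre-swap incoming profiles $(2,0)$ and $(0,2)$ by the balanced post-swap profile $(1,1)$, this inequality forces the post-swap worst case for $w$ to be bounded below by the average of the pre-swap worst cases, so $f_w$ is not pointwise dominated by $f_w'$ in any such construction.

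The construction must therefore include a mechanism whose effect on $w$ is not mediated by these concave $g_i$'s, such as a \emph{cycle} through one of the swapped contracts. Concretely, I would take the Figure~\ref{fig:motive} skeleton and add a cycle $u_1 \to v_1 \to y \to u_1$ (analogous to the cyclic example of Figure~\ref{fig:invariants}) together with a side liability from $y$ to $w$. Before the swap the cycle sustains a nontrivial circulation that propagates to $w$ via the side liability; after the swap the edge $u_1 \to v_1$ is rewired to $u_1 \to v_2$, so the cycle is broken and the circulation drops, reducing the payment reaching $w$ at a specific $k$-shock. Verification is a direct computation of the clearing vectors, both pre- and post-swap, under every relevant $k$-shock configuration, checking that $f'_{v_i}>f_{v_i}$ for $i=1,2$ and that $f'_w(k) \leq f_w(k)$ for every $k$ with strict inequality at the distinguished $k$.

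The main obstacle lies in reconciling these two effects: the cycle-disruption must cause a strict loss to $w$ at one shock configuration, while the source-diversification effect (which typically benefits downstream nodes) must not dominate at any other configuration. This requires carefully tuned contract weights and source funds---for example, making $v_1$'s outgoing profile asymmetric from $v_2$'s so that the Jensen argument is broken in the right direction, and arranging the cycle so that its disruption hurts $w$ precisely at a shock that was already mild for $w$ pre-swap. Once the parameters are chosen, correctness reduces to a straightforward arithmetic check on a small number of $k$-shock scenarios.
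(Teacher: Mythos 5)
There is a genuine gap: the conclusion you draw from ruling out one narrow case is too strong, and it leads you to a more complicated construction than the one the paper actually uses. Your Jensen argument only addresses the case where $w$ receives payments \emph{exclusively} from $v_1$ and $v_2$. You then conclude that ``the construction must therefore include a mechanism whose effect on $w$ is not mediated by these concave $g_i$'s, such as a cycle.'' That inference does not hold: ruling out one family of acyclic constructions does not force you to a cyclic one. In particular, the paper's witness (Figure~\ref{fig:badforw}) is a DAG with no cycle at all. It augments the Figure~\ref{fig:motive} skeleton by letting $w$ receive a weight-$2$ debt from $v_1$ \emph{and} a weight-$2$ debt from a new node $u_4$, where $u_4$ is in turn a new creditor of $s_2$ (weight $2$), with $e_{s_2}$ kept at $4$ so $s_2$ just covers all of its obligations when unshocked. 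Before the swap, $w$'s two income channels depend on disjoint source sets ($\{s_1\}$ via $v_1$, and $\{s_2\}$ via $u_4$), so a single-source $k=1$ shock wipes out only one channel and $f_w(1)=2$. After the swap, $v_1$ depends on \emph{both} $s_1$ and $s_2$, so the shock at $s_2$ simultaneously removes the $u_4$ channel and halves the $v_1$ channel, giving $f_w'(1)=1<2$. The key idea you are missing is not a cycle but a \emph{second income stream for $w$ that is correlated with one of the sources the swap reattaches $v_1$ to}; the swap then newly correlates $w$'s two channels under the same shock target, which is exactly what your Jensen bound cannot see once $w$'s income is not confined to $v_1$ and $v_2$.

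Beyond the misdirected route, your proposed cycle construction is only sketched and is not easy to repair as stated: swapping rewires $u_1 \to v_1$ into $u_1 \to v_2$ and simultaneously adds $u_2 \to v_1$, so the post-swap system does not simply ``break'' the cycle but reconnects it in a way that also changes what $v_1$ (and hence $w$) receives, and you would still need to verify that $f_w'(k)\le f_w(k)$ for \emph{every} $k$ simultaneously with the positivity for $v_1,v_2$. That verification is deferred to ``straightforward arithmetic'' over unspecified parameters, so the argument as written is incomplete. The paper's acyclic construction avoids all of this and closes the proof with a short, fully explicit computation.
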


\begin{proof}
Consider Figure \ref{fig:badforw}, which is an extension of our motivational example by two new banks $u_4$ and $w$. The modifications do not affect the worst-set shock function in the original part of the system, so the previously studied swap is still positive for $v_1$, $v_2$. The new contracts after swapping are now only indicated by dashed arrows in the figure.

Regardless of the swap, the new bank $w$ has $f_w(0)=f'_w(0)=4$ and $f_w(2)=f'_w(2)=0$, so the only interesting case is when $k=1$. Before the swap, a shock at either of the source nodes results in a loss of $2$, so $f_w(1)=2$. However, after $v_1$ and $v_2$ swaps, a shock at $s_2$ (setting $e_{s_2}'=0$) means that $w$ will only receive $1$ unit of money through $v_1$, so $f'_w(1)=1$.
\end{proof}

\begin{figure}
\hspace{0.01\textwidth}
\minipage{0.42\textwidth}
	\centering
	\definecolor{dgray}{gray}{0.35}

\begin{tikzpicture}

	\draw[very thick, dgray, arrows=-latex] (0pt,15pt) -- (28pt,2pt);
	\draw[very thick, dgray, arrows=-latex] (0pt,15pt) -- (28pt,28pt);
	\draw[very thick, dgray, arrows=-latex] (0pt,70pt) -- (28pt,57pt);
	\draw[very thick, dgray, arrows=-latex] (0pt,70pt) -- (28pt,83pt);
	\draw[very thick, dgray, arrows=-latex] (35pt,0pt) -- (73pt,18pt);
	\draw[very thick, dgray, arrows=-latex] (35pt,30pt) -- (73pt,26pt);
	\draw[very thick, dgray, arrows=-latex] (35pt,55pt) -- (73pt,59pt);
	\draw[very thick, dgray, arrows=-latex] (35pt,85pt) -- (73pt,67pt);
	
	\draw[dashed, arrows=-latex] (35pt,30pt) -- (75pt,57pt);
	\draw[dashed, arrows=-latex] (35pt,55pt) -- (75pt,28pt);
	
	\draw[very thick, dgray, arrows=-latex] (0pt,15pt) -- (28pt,-26pt);
	\draw[very thick, dgray, arrows=-latex] (80pt,63pt) -- (125pt,26pt);
	\draw[very thick, dgray, arrows=-latex] (35pt,-30pt) -- (75pt,-30pt) -- (125pt,18pt);

	\draw[black, fill=white] (0pt,15pt) circle (8pt);
	\draw[black, fill=white] (0pt,70pt) circle (8pt);
	\draw[black, fill=white] (35pt,0pt) circle (8pt);
	\draw[black, fill=white] (35pt,30pt) circle (8pt);
	\draw[black, fill=white] (35pt,55pt) circle (8pt);
	\draw[black, fill=white] (35pt,85pt) circle (8pt);
	\draw[black, fill=white] (80pt,22pt) circle (8pt);
	\draw[black, fill=white] (80pt,63pt) circle (8pt);
	
	\draw[black, fill=white] (35pt,-30pt) circle (8pt);
	\draw[black, fill=white] (130pt,22pt) circle (8pt);
	
	\node[anchor=center] at (0.5pt,14.5pt) {\normalsize $s_2$};
	\node[anchor=center] at (0.5pt,69.5pt) {\normalsize $s_1$};
	\node[anchor=center] at (35.5pt,-0.5pt) {\normalsize $u_3$};
	\node[anchor=center] at (35.5pt,29.5pt) {\normalsize $u_2$};
	\node[anchor=center] at (35.5pt,54.5pt) {\normalsize $u_1$};
	\node[anchor=center] at (35.5pt,84.5pt) {\normalsize $u_0$};
	\node[anchor=center] at (80.5pt,21.5pt) {\normalsize $v_2$};
	\node[anchor=center] at (80.5pt,62.5pt) {\normalsize $v_1$};
	
	\node[anchor=center] at (35.5pt,-30.5pt) {\normalsize $u_4$};
	\node[anchor=center] at (130.5pt,21.5pt) {\normalsize $w$};
	
	\node[anchor=center] at (104pt,51pt) {\footnotesize $2$};
	\node[anchor=center] at (104pt,-10pt) {\footnotesize $2$};
	\node[anchor=center] at (12pt,-12pt) {\footnotesize $2$};
	
	\draw [fill=white] (3.5pt,11.5pt) rectangle (9.5pt,1.5pt);
	\node[anchor=center] at (6.5pt,7pt) {\small $4$};
	\draw [fill=white] (3.5pt,66.5pt) rectangle (9.5pt,56.5pt);
	\node[anchor=center] at (6.5pt,61.5pt) {\small $4$};

\end{tikzpicture}
	\caption{Example when a positive swap creates a worse situation for a bank $w$.}
	\label{fig:badforw}
\endminipage\hfill
\hspace{0.09\textwidth}
\minipage{0.45\textwidth}
	\centering
	\vspace{15pt}
	\definecolor{dgray}{gray}{0.35}

\begin{tikzpicture}

	\draw[very thick, dgray, arrows=-latex] (0pt,15pt) -- (28pt,2pt);
	\draw[very thick, dgray, arrows=-latex] (0pt,15pt) -- (28pt,28pt);
	\draw[very thick, dgray, arrows=-latex] (0pt,70pt) -- (28pt,57pt);
	\draw[very thick, dgray, arrows=-latex] (0pt,70pt) -- (28pt,83pt);
	\draw[very thick, dgray, arrows=-latex] (35pt,0pt) -- (73pt,18pt);
	\draw[very thick, dgray, arrows=-latex] (35pt,30pt) -- (73pt,26pt);
	\draw[very thick, dgray, arrows=-latex] (35pt,55pt) -- (73pt,59pt);
	\draw[very thick, dgray, arrows=-latex] (35pt,85pt) -- (73pt,67pt);
	
	\draw[dashed, arrows=-latex] (35pt,30pt) -- (75pt,57pt);
	\draw[dashed, arrows=-latex] (35pt,55pt) -- (75pt,28pt);
	
	\draw[very thick, dgray, arrows=-latex] (80pt,63pt) -- (117pt,63pt);
	\draw[very thick, dgray, arrows=-latex] (80pt,63pt) -- (80pt,98pt) -- (75pt,103pt) -- (-25pt,103pt) -- (-30pt,98pt) -- (-30pt,20pt) -- (-25pt,15pt) -- (-8pt,15pt);

	\draw[black, fill=white] (0pt,15pt) circle (8pt);
	\draw[black, fill=white] (0pt,70pt) circle (8pt);
	\draw[black, fill=white] (35pt,0pt) circle (8pt);
	\draw[black, fill=white] (35pt,30pt) circle (8pt);
	\draw[black, fill=white] (35pt,55pt) circle (8pt);
	\draw[black, fill=white] (35pt,85pt) circle (8pt);
	\draw[black, fill=white] (80pt,22pt) circle (8pt);
	\draw[black, fill=white] (80pt,63pt) circle (8pt);
	
	\draw[black, fill=white] (125pt,63pt) circle (8pt);
	
	\node[anchor=center] at (0.5pt,14.5pt) {\normalsize $s_2$};
	\node[anchor=center] at (0.5pt,69.5pt) {\normalsize $s_1$};
	\node[anchor=center] at (35.5pt,-0.5pt) {\normalsize $u_3$};
	\node[anchor=center] at (35.5pt,29.5pt) {\normalsize $u_2$};
	\node[anchor=center] at (35.5pt,54.5pt) {\normalsize $u_1$};
	\node[anchor=center] at (35.5pt,84.5pt) {\normalsize $u_0$};
	\node[anchor=center] at (80.5pt,21.5pt) {\normalsize $v_2$};
	\node[anchor=center] at (80.5pt,62.5pt) {\normalsize $v_1$};
	
	\node[anchor=center] at (125.5pt,62.5pt) {\normalsize $t_1$};
	
	\node[anchor=center] at (100pt,69pt) {\footnotesize $2$};
	
	\draw [fill=white] (3.5pt,11.5pt) rectangle (9.5pt,1.5pt);
	\node[anchor=center] at (6.5pt,7pt) {\small $2$};
	\draw [fill=white] (3.5pt,66.5pt) rectangle (9.5pt,56.5pt);
	\node[anchor=center] at (6.5pt,61.5pt) {\small $2$};

\end{tikzpicture}
	\vspace{5pt}
	\caption{Example when a positive swap creates a worse situation for one of the debtors $u_2$.}
	\label{fig:badforu}
\endminipage\hfill
\hspace{0.01\textwidth}
\end{figure}

Furthermore, when our network is not a DAG, a positive swap can even result in a strictly worse shock function for the nodes $u_1$ and $u_2$ that are the debtors of the swapped contracts. This shows that in fact, the debtor nodes $u_1$, $u_2$ might indeed have a reason to object to the swapping operation in this model.

\begin{lemma} \label{lem:badforu}
A positive swap of contracts $(u_1, v_1)$ and $(u_2, v_2)$ can result in a new shock function of $f_{u_2}' < f_{u_2}$ for $u_2$.
\end{lemma}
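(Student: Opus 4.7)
The plan is to exhibit a concrete witness, namely the network in Figure \ref{fig:badforu}, and verify two things on it: that swapping $(u_1,v_1)$ with $(u_2,v_2)$ is positive for $v_1,v_2$ in the worst-set shock model, and that it strictly worsens the shock function of $u_2$. The only nodes with positive funds are $s_1$ and $s_2$ (both with $e=2$), so the $k=1$ worst-set shock reduces to comparing the two cases ``shock at $s_1$'' and ``shock at $s_2$''; for $k=0$ and $k=2$ the assets of $u_2$ are the same before and after (either $1$ in both networks or $0$ in both), so strict inequality can only come from $k=1$.

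First I would compute the pre-swap equilibrium for the $k=1$ case. When $s_1$ is shocked, both $u_0,u_1$ pay $0$ to $v_1$, so $v_1$ makes no payment back to $s_2$; then $s_2$ pays $u_2,u_3$ in full and $u_2$ retains assets $1$. When $s_2$ is shocked, $u_0,u_1$ still feed $v_1$ with total $2$, and $v_1$ (with liability $3$) passes $2/3$ back to $s_2$, giving $r_{s_2}=(2/3)/2=1/3$ and hence $a_{u_2}=1/3$. Taking the minimum, $f_{u_2}(1)=1/3$.

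Next I would compute the post-swap equilibrium, where the new edge $u_2 \to v_1$ closes a short cycle $s_2 \to u_2 \to v_1 \to s_2$. The interesting case is again the shock at $s_2$: setting $r = r_{s_2}$ one has $v_1$ receiving $1 + r$ from $u_0$ and $u_2$, in default with rate $(1+r)/3$, and passing $(1+r)/3$ back to $s_2$; the fixed-point equation $2r=(1+r)/3$ yields $r=1/5$ and hence $a_{u_2}=1/5$. For the shock at $s_1$, one checks that $s_2$ remains solvent (it still gets back $1/3$ from $v_1$ through the cycle fed by $u_2$), so $u_2$ retains $1$. Thus $f'_{u_2}(1)=1/5<1/3=f_{u_2}(1)$, with equality at $k=0$ and $k=2$, giving $f'_{u_2} < f_{u_2}$ as required.

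Finally, I would verify positivity for $v_1,v_2$ by a similar case analysis. Before the swap, shocking $s_1$ leaves $v_1$ with $0$ assets, so $f_{v_1}(1)=0$ and symmetrically $f_{v_2}(1)=2/3$ (the asymmetry comes from the feedback edge, which matters only for $v_1$). After the swap, the diversification argument from Section \ref{sec:motive} carries over: shocking either source now leaves each of $v_1,v_2$ with one intact debtor chain plus the feedback contribution, giving $f'_{v_1}(1)\geq 1$ and $f'_{v_2}(1)\geq 1$, with equality at $k=0$ and $k=2$. Hence $f'_{v_i}>f_{v_i}$ for $i=1,2$, the swap is positive, and $u_2$ is strictly worse off. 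The main obstacle is the fixed-point solve for the post-swap cycle when $s_2$ is shocked; the rest is a routine case check, made simpler by the observation that only $s_1,s_2$ carry funds, so the worst-set minimum over $k=1$ is over only two scenarios.
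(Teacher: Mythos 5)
Your proposal is correct and follows essentially the same route as the paper: you use the same network (Figure \ref{fig:badforu}), observe that only $s_1$ and $s_2$ carry funds so the $k=1$ case reduces to two scenarios, solve the same cycle fixed point when $s_2$ is shocked (your equation $2r=(1+r)/3$ in terms of $r_{s_2}$ is equivalent to the paper's $a'_{v_1}=1+\frac{1}{3}\cdot\frac{1}{2}\cdot a'_{v_1}$), and arrive at $f'_{u_2}(1)=\frac{1}{5}<\frac{1}{3}=f_{u_2}(1)$ with positivity for $v_1,v_2$ exactly as the paper does. The only stylistic difference is your parameterization of the fixed point; the substance is identical.
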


\begin{proof}
Consider the system in Figure \ref{fig:badforu}, which is again a variation of our motivational example. In the original state of this system, a shock of $k=1$ can remove all the funds of $v_1$ (if it hits $s_1$); on the other hand, it still leaves assets of $\frac{2}{3}$ and $\frac{1}{3}$ at banks $v_2$ and $u_2$, respectively (for these two nodes, the worst case is when the shock hits $s_2$). This means that the original shock functions are $f_{v_1}=(2, 0, 0)$, $f_{v_2}=(2, \frac{2}{3}, 0)$ and $f_{u_2}=(1, \frac{1}{3}, 0)$ for the values $k=(0, 1, 2)$.

Now let us assume that banks $v_1$ and $v_2$ swap their contracts as shown in the dashed arrows. In this new system, the shock function values for $k=2$ and $k=0$ remain the same for each bank. Let us consider the two possible shocks of size $k=1$. If $s_1$ is the one to lose its funds, then $s_2$ can still fulfill its obligations from its own funds, which results in a (direct or indirect) payment of $1$ for all of $v_1$, $v_2$ and $u_2$. On the other hand, if $s_2$ is the one to lose its funds, then $v_1$ receives a payment of $1$ from $u_0$, and it is also a part of the directed cycle in the network; hence the assets $a'_{v_1}$ must satisfy $a'_{v_1} = 1 + \frac{1}{3} \cdot \frac{1}{2} \cdot a'_{v_1}$, giving $a'_{v_1}=\frac{6}{5}$. This means that $s_2$ receives a payment of $\frac{2}{5}$ through the backwards edge, resulting in $a'_{u_2}=\frac{1}{5}$, and $a'_{v_2}=\frac{6}{5}$ for bank $v_2$ which also receives a payment of $1$ from $u_1$.

Since we assume the worst case for each $k$, this results in the shock functions $f'_{v_1}=(2, 1, 0)$, $f'_{v_2}=(2, 1, 0)$ and $f'_{u_2}=(1, \frac{1}{5}, 0)$. Hence we have $f_{v_1}' > f_{v_1}$ and $f_{v_2}' > f_{v_2}$ (so the swap is positive), but $f_{u_2}' < f_{u_2}$.
\end{proof}

\subsection{Complexity analysis}

We now analyze the model from an algorithmic perspective. In this sense, the worst-case models pose a much more serious challenge than the base case of the system or a system with proportional shocks: in particular, it is already computationally hard to find how much a worst-case shock of size $k$ affects a specific bank.

\begin{theorem} \label{th:hardness}
In the worst-set shock model, it is NP-hard to compute $f_v(k)$ for a specific value $k$.
\end{theorem}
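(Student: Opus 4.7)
My plan is to establish NP-hardness via a reduction from the \textsc{Clique} problem (equivalently, \textsc{Densest-}$k$-\textsc{Subgraph}). I will work with the decision version of the problem: given a network, a bank $v$, an integer $k$, and a threshold $\tau$, decide whether $f_v(k) \leq \tau$. The intuition is that the adversary choosing which $k$ banks to hit corresponds naturally to choosing $k$ vertices in a graph, and I want the damage to $v$ to be directly proportional to the internal edges of this induced subgraph.

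Concretely, given a \textsc{Clique} instance $(G=(V,E), k)$, I would build a three-layer DAG financial network as follows. Pick a large constant $M$ (e.g.\ $M=1$). For each vertex $u \in V$, introduce a ``vertex bank'' $x_u$ with funds $e_{x_u} = \deg(u) \cdot M$. For each edge $\{u,w\} \in E$, introduce an ``edge bank'' $y_{uw}$ with $e_{y_{uw}} = 0$. Add a single sink bank $v$ with no funds and no outgoing liabilities. The contracts are $l_{x_u, y_{uw}} = M$ for every incident edge of $u$, and $l_{y_{uw}, v} = M$ for every edge bank. By construction, in the unshocked network each $x_u$ has assets exactly equal to its liabilities, so it pays $M$ to each incident edge bank; each $y_{uw}$ then receives $2M$ and pays its full liability $M$ to $v$, yielding $a_v = |E| \cdot M$.

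Next I would analyze what happens after a shock. Since edge banks and $v$ have zero funds, shocking them has no effect, so without loss of generality the adversary shocks a subset $S \subseteq V$ of vertex banks with $|S|=k$ (if $k > |V|$ the \textsc{Clique} instance is trivial). For any $u \in S$, the bank $x_u$ has $0$ funds and pays $0$ to each incident edge bank. Hence each edge bank $y_{uw}$ receives $2M$, $M$, or $0$ according to whether $0$, $1$, or $2$ of its endpoints lie in $S$, and since its liability to $v$ is exactly $M$, it forwards $\min(\text{receipts}, M)$ to $v$. Therefore $y_{uw}$ pays $0$ to $v$ iff both endpoints lie in $S$, and pays $M$ otherwise, giving
\begin{displaymath}
a_v^{(G_S)} = M \cdot \bigl(|E| - |E(G[S])|\bigr).
\end{displaymath}
Taking the minimum over all size-$k$ subsets, $f_v(k) = M \cdot (|E| - \mathrm{OPT})$, where $\mathrm{OPT}$ is the maximum number of edges in any $k$-vertex induced subgraph of $G$. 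Consequently, $G$ contains a $k$-clique iff $f_v(k) \leq M \cdot (|E| - \binom{k}{2})$, completing the reduction.

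The main obstacle is really just the verification step: I need to confirm that the maximal clearing vector of the constructed network truly realizes the simple payment pattern described above, so that no pathological solution of the Eisenberg–Noe system distorts the correspondence with $|E(G[S])|$. This is straightforward here because the network is acyclic, every non-source bank has receipts that are either at least as large as its liability (and thus it pays in full without default) or exactly $0$, and the target is a sink; hence the unique solution is the one computed above. A secondary, very minor obstacle is justifying that shocking edge banks or $v$ is never more useful to the adversary than shocking vertex banks, which follows immediately from those banks having zero funds.
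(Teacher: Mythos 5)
Your reduction is correct and is essentially identical to the one in the paper: the paper also builds a three-layer DAG with a vertex bank of funds $\deg(z)$ per vertex, a zero-fund edge bank per edge that forwards one unit to a single sink $v$ unless both endpoints are shocked, and then invokes hardness of densest-$k$-subgraph (of which \textsc{Clique} is the special case you threshold against). Your extra remarks on acyclicity/uniqueness and on zero-fund banks being useless shock targets are the same observations the paper makes implicitly, so the two arguments coincide in substance.
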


One can show this by a reduction from the densest $k$ subgraph problem, which is known to be NP-complete \cite{garey}. Similar reductions have already been shown before for the worst-sum model, and for some other related problems in financial networks \cite{worsttotal}. However, since these earlier results were shown in different model variants, we include a proof of this theorem for completeness.

\begin{proof}
Given an input graph $H$ for the densest $k$ subgraph problem, let us create a separate bank $s$ for each node $z$ of $H$, and select $e_s = \text{deg}(z)$ (the degree of $z$ in $H$). Furthermore, for each edge $(z_1, z_2)$ of $H$, we create another bank $u_{z_1, z_2}$ to represent this edge. Bank $u_{z_1, z_2}$ will have no funds, but we add an incoming debt of $1$ from both of the banks representing $z_1$ and $z_2$. Finally, we add a single sink node $v$ to our financial system (with $e_v=0$), and add a debt of $1$ from each edge node to $v$. Since all banks can fulfill their payment obligations in this system (if no shock happens), the assets of $v$ is equal to the number of edges in $H$ in this network. 

Now let us consider a worst-set shock of size $k$. Since only the source nodes have any funds in this network, any worst shock of size $k$ will hit $k$ source nodes that correspond to $k$ vertices of the original graph $H$. Furthermore, any intermediate bank $u_{z_1, z_2}$ has a simple behavior in this system: if at least one of $z_1$ and $z_2$ is spared by the shock, then $u_{z_1, z_2}$ still provides a payment of $1$ to $v$, but if both $z_1$ and $z_2$ are hit by the shock, then $u_{z_1, z_2}$ makes no payment at all.

This means that for any worst-set shock in this system (i.e. any subset of the nodes selected in $H$), the loss of assets at $v$ is exactly the number of edges $(z_1, z_2)$ in $H$ such that both $z_1$ and $z_2$ are included in the subset, i.e. the number of edges in the corresponding chosen subgraph. As such, finding $f_v(k)$ is equivalent to solving the densest $k$ subgraph problem in the original graph $H$. This completes the reduction.
\end{proof}

Note that this hardness result is for a general $k$ value; for some special cases, e.g. for $k=0$ or $k=n$, the shock function value can easily be computed in polynomial time. In particular, whenever $k$ is a small constant value, the problem also becomes polynomially solvable.

\begin{lemma}
Given a constant value $k$, we can compute $f_v(k)$ in polynomial time.
\end{lemma}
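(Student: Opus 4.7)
The plan is to simply enumerate all possible worst-case shock configurations. By the definition of the worst-set shock model, $f_v(k)$ is the minimum of $a_v^{\,(G_k)}$ taken over all networks $G_k$ obtained from $G$ by selecting a subset $U \subseteq B$ of size exactly $k$ and setting $e_u^{\,(G_k)} = 0$ for each $u \in U$, while leaving the other funds unchanged. Since there are only $\binom{n}{k}$ such subsets, and $k$ is a fixed constant, this collection has size $O(n^k)$, which is polynomial in $n$.

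The algorithm is then: iterate over every $k$-subset $U$ of $B$; for each choice, construct the modified network $G_U$ and compute its maximal clearing vector, reading off the value $a_v^{\,(G_U)}$; finally return the minimum of these values. The inner step is polynomial by the result cited in Section \ref{sec:def} that the maximal clearing vector of any financial network can be computed in polynomial time \cite{veraart}. Multiplying $O(n^k)$ iterations by this polynomial per-iteration cost yields an overall polynomial running time, establishing the claim.

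There is no real obstacle here beyond noting that brute-force enumeration becomes efficient once $k$ is constant; the hardness established in Theorem \ref{th:hardness} only kicks in when $k$ is allowed to grow with $n$, since in that regime $\binom{n}{k}$ may be superpolynomial. One could, if desired, also note the analogous statement for the limited worst-set variant with parameter $K$: whenever $K$ is constant we may enumerate $\bigcup_{j=0}^{K} \binom{B}{j}$ in polynomial time as well.
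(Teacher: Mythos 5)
Your proof is correct and follows exactly the same approach as the paper: brute-force enumeration of all $\binom{n}{k} \leq n^k$ subsets, solving the clearing problem for each in polynomial time, and taking the minimum. The extra remarks about why Theorem \ref{th:hardness} is not contradicted and about the limited worst-set variant are accurate but not needed for the argument.
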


\begin{proof}
The number of possible subsets to be hit by the shock is ${n \choose k} \leq n^k$, which is polynomial in $n$ if $k$ is a constant. We can enumerate all these cases, find the equilibrium (and thus $a_v$) for each case in polynomial time, and simply store the subset that gives the smallest $a_v$ value in order to find $f_v(k)$.
\end{proof}

This also means that the problem becomes polynomially solvable if we only expect a small shock to the network, i.e. we are in the limited worst-set model for some constant limit $K$. In this case, executing the method above for each $k \in \{0, ..., K\}$ still takes polynomial time altogether, and it allows us to describe the entire shock function $f_v$ for this shock model.

These hardness result mean that if we want to analyze a bank's opportunities for swapping in a financial system, then the main computational difficulty in fact lies in this question of evaluating a specific network configuration. If we were given an oracle that returns the shock function of a bank in a specific network, then we could easily evaluate whether a specific swap is positive (or semi-positive), by simply comparing the shock functions in the network before and after the swap. Furthermore, since bank has at most $n$ incoming contracts, such an oracle would also allow us to decide (in polynomial time) whether specific nodes $v_1$, $v_2$ can execute a positive swap in the network, by enumerating each pair of their incoming contracts. As the total number of contracts in the network is also only $O(n^2)$, we could even decide whether there exists a positive swap between \textit{any pair} of acting nodes in the system.

Since the limited worst-set model allows us to compute shock functions efficiently, this also means that these questions are all computationally tractable in this model when $K$ is a small constant.

\begin{corollary} \label{cor:constsmall}
In the limited worst-set model with a constant limit $K$, we can decide in polynomial time whether there is a positive swap in the network.
\end{corollary}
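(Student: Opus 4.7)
The plan is to reduce the decision problem to a polynomial enumeration of candidate swaps, each of which can be evaluated using the efficient shock function computation established in the preceding lemma. In other words, since the lemma gives us an oracle for shock functions, and the paragraph before the corollary already sketches how such an oracle suffices, the job is just to verify that everything composes to a polynomial running time.

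First, I would make precise what counts as a candidate swap. Every swap is determined by an unordered pair of distinct debt contracts $(u_1,v_1)$ and $(u_2,v_2)$ of equal weight, with all four endpoints distinct and with $l_{u_1,v_2}=l_{u_2,v_1}=0$ in the original network. Since the network has at most $O(n^2)$ contracts, the set of candidate pairs has size $O(n^4)$, which is polynomial in $n$. Each candidate can be checked for admissibility (equal weight, four distinct endpoints, no conflicting contracts already present) in constant time, and the modified network $G'$ can be built explicitly in $O(n^2)$ time.

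Next, for each fixed candidate, I would invoke the preceding lemma on both of the acting banks $v_1,v_2$, in both the original network $G$ and the modified network $G'$, for every shock size $k\in\{0,1,\ldots,K\}$. By the lemma, each value $f_v(k)$ can be computed in time $O(n^K\cdot\mathrm{poly}(n))$, since we enumerate the ${n\choose k}\le n^K$ possible shock sets and solve a clearing problem for each. Because $K$ is a constant, obtaining the entire shock function for a bank therefore takes polynomial time. With the four functions $f_{v_1},f'_{v_1},f_{v_2},f'_{v_2}$ in hand, checking the positivity condition $f'_{v_1}>f_{v_1}$ and $f'_{v_2}>f_{v_2}$ simply amounts to comparing $O(K)$ pairs of numbers. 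The algorithm outputs ``yes'' as soon as a positive candidate is found, and ``no'' if no candidate satisfies the condition.

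The main obstacle, to the extent there is one, is purely bookkeeping: confirming that the composition remains polynomial. The per-candidate cost is bounded by $O(K\cdot n^K\cdot\mathrm{poly}(n))$, and multiplying by $O(n^4)$ candidates yields an overall running time polynomial in $n$ for any fixed constant $K$. I would conclude by pointing out that the same argument works unchanged for semi-positive swaps, and would even let us enumerate all positive swaps in the network within the same time bound, not merely decide existence.
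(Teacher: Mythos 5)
Your proposal is correct and follows exactly the same approach the paper sketches in the text preceding the corollary: enumerate the $O(n^4)$ candidate pairs of contracts, compute the shock functions before and after each candidate swap for all $k\in\{0,\dots,K\}$ via the preceding lemma (each evaluation costs $O(n^K\cdot\mathrm{poly}(n))$, which is polynomial for constant $K$), and compare. You have merely filled in the bookkeeping that the paper leaves implicit; nothing is missing.
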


\subsection{Computationally tractable cases}

Since we have seen that deciding whether a swap is positive is NP-hard in the worst-set model, it is a natural question whether the problem is still solvable efficiently in some special class of financial systems. This would imply that we can indeed find positive swap opportunities for some specific financial system structures in practice.

One rather special case of network structures is a \textit{tree network}, i.e. when the undirected version of the debt contacts gives a graph that is a tree.

\begin{theorem} \label{th:treedp}
In a tree network, we can compute $f_v(k)$ for any node $v$ in polynomial time.
\end{theorem}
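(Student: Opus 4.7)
The plan is to design a bottom-up dynamic program on the tree rooted at $v$ that exploits the absence of debt cycles. First I would root the tree at $v$ and, for every non-root node $u$, classify the edge between $u$ and its parent as \emph{upward} (if $u$ owes to its parent, so that payment flows toward $v$) or \emph{downward} (if the parent owes to $u$). The structural observation driving the DP is that for any $u$ whose parent edge is upward, the payment $p_{u, \mathrm{par}(u)}$ depends only on the restriction of the network to the subtree $T_u$ together with the shock placement inside $T_u$: no money flows into $T_u$ across this edge, and because the underlying graph is a tree, nothing circulates back into $T_u$ from outside. Thus the worst-case outflow from each subtree is an intrinsic quantity of that subtree.

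Next I would define, for each node $u$ and each $k' \in \{0, 1, \dots, |T_u|\}$, the quantity $\alpha_u(k') = \min a_u$, minimized over all placements of exactly $k'$ shocked banks inside $T_u$. Since the outflow on an upward parent edge, namely $\frac{l_{u, \mathrm{par}(u)}}{l_u} \cdot \min(l_u, a_u)$, is non-decreasing in $a_u$, the minimum outflow $\sigma_u(k')$ on that edge is recovered directly from $\alpha_u(k')$; when the parent edge is downward we set $\sigma_u(k') = 0$. Letting $C^+(u)$ denote the children of $u$ whose connecting edge feeds into $u$ and $C^-(u)$ the remaining children, the recursion reads
\[
\alpha_u(k') = \min_{\epsilon \in \{0,1\}} \Bigl( (1 - \epsilon) \cdot e_u + M_u(k' - \epsilon) \Bigr),
\]
where $\epsilon$ indicates whether $u$ itself is shocked, and $M_u(K)$ is the minimum of $\sum_{c \in C^+(u)} \sigma_c(k_c)$ over integer allocations $k_c \in \{0, \dots, |T_c|\}$ to all children of $u$ summing to $K$; shocks placed in $C^-(u)$ subtrees contribute $0$ to $a_u$ but still consume budget. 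The quantity $M_u$ is computed by a standard knapsack-style convolution of the children's $\sigma_c$ vectors, and a routine tree-DP accounting yields polynomial total running time. The answer is $f_v(k) = \alpha_v(k)$, obtained by applying the same recursion at the root.

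The main step to justify carefully is the \emph{separability} of the optimization across subtrees: that within each $T_c$ the adversary's best placement depends only on the allocation $k_c$ and not on the global configuration. This rests on two facts specific to tree networks. First, $T_c$ communicates with the rest of the network only through the single parent edge of $c$, so disjoint subtrees are coupled only through the shock budget. Second, $p_{c, u}$ is a monotone function of $a_c$, so the locally optimal choice (minimize $a_c$) always agrees with the globally optimal one (minimize the contribution of $T_c$ to $a_u$). Both ingredients fail in the presence of cycles---consistent with Theorem~\ref{th:hardness} ruling out efficient computation in general networks---but in a tree they combine to make the DP correct.
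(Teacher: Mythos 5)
Your proof is correct and takes essentially the same route as the paper: both decompose the worst-case shock across the disjoint feeding subtrees of a node's debtors (possible precisely because the underlying graph is a tree), and both rely on the monotonicity of the payment $p_{w,u}$ in $a_w$ to conclude that the shock placement minimizing $a_w$ locally also minimizes its contribution to $a_u$. The differences are only in bookkeeping — you root the tree at $v$, distinguish upward/downward parent edges, handle arbitrary arity with a knapsack-style convolution over children, and account for shocking $u$ itself via an $\epsilon$ indicator, while the paper processes nodes in a topological ordering of the directed (acyclic) tree, reduces indegree to $2$ via fictitious intermediate banks, and models the shock at $u$ by an auxiliary source node with liability $\infty$ to $u$.
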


\begin{proof}
For each specific value $k$, one can compute $f_v(k)$ with a dynamic programming approach, i.e. by computing $f_u(k)$ for each bank $u \in B$ according to an arbitrary topological ordering of the tree, with the help of previously computed values. Initially, for all source nodes $s$ of the tree, $f_s(k)$ can easily be initialized with $f_s(0) = e_s$, and $f_s(k) = 0$ for each $k \geq 1$.

Assume for simplicity first that $u$ is a non-source node in the tree with only two incoming debts from banks $w_1$ and $w_2$, and $e_u=0$. Since the network is a tree, the payments $p_{w_1, u}$ and $p_{w_2, u}$ are independent from each other; as such, the worst shock of size $k$ for $u$ is obtained by splitting the shock into two smaller shocks of size $k_1+k_2=k$, and applying the worst shock of size $k_1$ and $k_2$ on the subtrees rooted at $w_1$ and $w_2$, respectively. Let us briefly introduce $p_{w_i,u}(a_{w_i})$ to denote the payment $p_{w_i, u}$ as a function of $a_{w_i}$, i.e. a simplified notation for the expression $p_{w_i,u}(a_{w_i})=\min(a_{w_i} \! \cdot l_{w_i, u} / l_{w_i},^{\,} l_{w_i, u})$. Then altogether, the shock function of $u$ can be expressed as
\begin{displaymath}
f_u(k) = \min_{k_1 \in [0, k]} \: p_{w_1,u}(^{\,} f_{w_1}(k_1) ^{\,}) + p_{w_2,u}(^{\,} f_{w_2}(k-k_1) ^{\,}) \, .
\end{displaymath}
Since the topological ordering ensures that $f_{w_1}$ and $f_{w_2}$ have already been computed, we can find the value of $f_u$ efficiently for every $k \in [0, n]$.

If $u$ has more than $2$ incoming debts, then we can use this method repeatedly to first distribute the shock between the debtors $w_1$ and $w_2$, then between $\{ w_1, w_2\}$ and $w_3$, and so on, essentially inserting fictitious intermediate banks to limit the indegree to $2$. Since the number of incoming debts is at most $n$, this only adds a linear factor to the computation of $f_u(k)$ at each bank $u$.

Similarly, if we have $e_u>0$, then $u$ can also be hit by the shock; we can cover this case with the previous method by introducing an extra new debtor $s$ which has $e_u$ funds and $l_{s,u}=\infty$.  
\end{proof}

This means that if the network is a tree both before and after the swapping operation, then we can decide in polynomial time whether the given swap is positive (or semi-positive).

Although trees are a very restricted subclass of financial systems, we cannot expect a similar result for a much broader class of networks. Intuitively, whenever the subsets of dependencies of two debtor banks can overlap, this already allows us to express hard combinatorial problems with the network. In fact, the construction in the reduction of Theorem \ref{th:hardness} is a DAG, which shows that the problem is already NP-hard for the slightly more general case when we allow multiple directed paths between two nodes, but not cycles.

Since we have mainly motivated swapping with the idea of removing duplicate dependencies on a specific part of the network, one might wonder if it even makes sense to study swapping in trees. However, it turns out that swapping can still result in an strict improvement for the acting nodes even in trees. Intuitively, while the operation cannot remove multiple dependencies in this case, it can still recombine the parts of the tree into a better configuration which is less vulnerable to shocks of a specific size, resulting in a better shock function in the end. For completeness, we provide such an example in the limited worst-set model.

\begin{theorem} \label{th:treepos}
In the limited worst-set model, there can exist a positive swap in a tree network.
\end{theorem}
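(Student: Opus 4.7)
My plan is to prove the theorem by an explicit construction: I exhibit a tree $T$, a swap $(u_1, v_1) \leftrightarrow (u_2, v_2)$, and a limit $K = 2$ such that the swap is positive under the limited worst-set model.

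I build $T$ so that $v_1$ and $v_2$ are joined to a common sink $t$ by unit-weight outgoing debts, and each $v_i$ has two incoming unit-weight contracts: the swap edge from $u_i$ and a second edge from an auxiliary creditor $w_i$. Attached to $u_1, u_2, w_1, w_2$ are four source subtrees of two types. The ``chunky'' subtrees (of $u_1$ and $w_2$) use $2$ and $1$ unit-funded sources respectively, so their payments are preserved under a single shock and collapse to $0$ only when every source is hit. The ``graceful'' subtrees (of $u_2$ and $w_1$) each have three sources of funds $\tfrac{1}{3}$ paying weight-$\tfrac{1}{3}$ debts, so their payment degrades smoothly through $(1, \tfrac{2}{3}, \tfrac{1}{3}, 0)$ as more sources are hit. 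Before the swap each $v_i$ mixes one chunky and one graceful contribution, while after the swap $v_1$ receives two graceful contributions and $v_2$ receives two chunky ones.

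The routine step is checking that $T$ remains a tree after the swap: removing the two swap edges partitions $T$ into the $u_1$-subtree, the $u_2$-subtree, and a central component containing $v_1, v_2, t$ and the two $w_i$-subtrees, and the two new edges $(u_1, v_2)$ and $(u_2, v_1)$ are bridges that reassemble the pieces. The main computational step exploits acyclicity: each $v_i$'s worst-case income is the infimal convolution
\[
f_{v_i}(k) = \min_{j_1 + j_2 = k} P^{(1)}_i(j_1) + P^{(2)}_i(j_2)
\]
of the two payment profiles on its incoming edges. Direct evaluation for $k \in \{0, 1, 2\}$ gives $f_{v_1} = (2, \tfrac{5}{3}, 1)$ before and $(2, \tfrac{5}{3}, \tfrac{4}{3})$ after, and $f_{v_2} = (2, 1, \tfrac{2}{3})$ before and $(2, 1, 1)$ after. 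Both acting nodes gain strictly at $k = 2$ and are weakly no worse at $k = 0, 1$, so the swap is positive.

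The main conceptual obstacle is the rigidity of trees: $v_1$ and $v_2$ share no ancestors, so a swap merely interchanges which ancestor subtree feeds each $v_i$, and if the two $P_{u_i}$ were pointwise comparable the swap would necessarily help one acting node and hurt the other. The chunky-versus-graceful asymmetry is specifically designed to produce incomparable profiles whose infimal convolutions with the matching $P_{w_i}$ partners move in the same favorable direction at $k = 2$. The restriction to the \emph{limited} worst-set model is also essential: at $k = 3$ the graceful $u_2$-subtree still pays $\tfrac{1}{3}$ whereas the chunky $u_1$-subtree already pays $0$, so the swap strictly hurts $v_2$ at $k = 3$ and would not be positive in the unrestricted worst-set model.
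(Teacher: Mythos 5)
Your construction is correct, and it is in the same spirit as the paper's: build a tree where each acting node has two subtree ``gadgets'' whose shock profiles are deliberately incomparable, so that regrouping them by a swap improves both infimal-convolution shock functions. The paper uses $d$-boolean gadgets with parameters $\{3,4,5,6\}$ for $v_1$ and $\{3,4,6,8\}$ for $v_2$ under $K=10$, together with an auxiliary ``$1$-fix gadget'' to keep the $k=0$ totals balanced after trading a $4$-gadget for a $3$-gadget. Your construction is substantially leaner: $K=2$, roughly sixteen nodes, and no compensation gadget is needed because all four subtree payment profiles start at $1$, so the $k=0$ assets of $v_1$ and $v_2$ are automatically preserved by the swap. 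I verified your numbers: before the swap $f_{v_1}=(2,\tfrac{5}{3},1)$ and $f_{v_2}=(2,1,\tfrac{2}{3})$; after, $f_{v_1}'=(2,\tfrac{5}{3},\tfrac{4}{3})$ and $f_{v_2}'=(2,1,1)$, a strict improvement for both at $k=2$ and weak improvement at $k=0,1$. Your remark that the $K$-limit is essential (at $k=3$ the swap hurts $v_2$, dropping from $\tfrac13$ to $0$) is a good observation that the paper does not spell out.

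Two small points of imprecision in the write-up that you should tighten if you formalize this: first, the claim that the chunky subtrees' ``payments are preserved under a single shock'' is true only for the two-source $u_1$-subtree, not the one-source $w_2$-subtree, whose payment profile is $(1,0,0)$; the uniform statement that applies to both is ``collapses to $0$ once every source in the subtree is hit.'' Second, you do not explicitly state the weights of the debts from the chunky sources to $u_1$; you need these to be at least $1$ (unit weight suffices) so that one surviving source still lets $u_1$ fully cover its unit liability to $v_1$.
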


\renewcommand{\proofname}{Proof.}

\begin{proof}
The main tool for the proof is a so-called $d$-boolean gadget. Given an integer parameter $d$, this gadget consists of $d$ distinct source nodes $s_1, ..., s_d$ that each have funds of $e_{s_1}\!=...=e_{s_d}\!=d$, and a debt of $l_{s_1, w}\!=...=l_{s_d, w}\!=d$ towards a common intermediate node $w$. This bank $w$ will then have $e_w=0$, and an outgoing debt of weight $d$ to one of our acting nodes. The gadget is useful since it exhibits a threshold behavior: in case of a shock for any $k \leq d-1$, bank $w$ will still make a payment of $d$, but as soon as $k=d$, the payment from $w$ immediately drops to $0$.

Our tree construction consists of a range of such gadgets: $v_1$ has a debt from such gadgets with $d$-values of $3, 4, 5$ and $6$, while $v_2$ has a debt from such gadgets with $d$-values of $3, 4, 6$ and $8$. Consider this network with a shock size limit of $K=10$. Due to the behavior of the gadgets, the loss of $v_1$ for any specific integer $k$ can be computed as the largest subset of $\{3,4,5,6\}$ where the sum is still not larger than $k$. By selecting the appropriate subsets, we will ensure that the subsets received after the swapping are less heterogeneous, which reduces the number of possible sums that can be formed from the subsets.

In particular, for $v_1$, we can create losses of $(0, 0, 0, 3, 4, 5, 6, 7, 8, 9, 10)$ for the values $k=(0, ..., 10)$. Similarly, for $v_2$, the subset $\{3, 4, 6, 8\}$ allows for losses of $(0, 0, 0, 3, 4, 4, 6, 7, 8, 9, 10)$ for $k=(0, ..., 10)$.

Now assume that $v_1$ trades its $4$-boolean gadget for the $3$-boolean gadget of $v_2$. For $v_2$, this provides a new set (well, in fact multiset) of numbers $\{3,3,5,6\}$, allowing for worst-case losses of $(0, 0, 0, 3, 3, 5, 6, 6, 8, 9, 9)$, which is strictly better than before for $k=4, 7$ or $10$. For $v_2$, we get a new multiset of $\{4,4,6,8\}$, producing losses of $(0, 0, 0, 0, 4, 4, 6, 6, 8, 8, 10)$, which is again an improvement for $k=3, 7$ or $9$.

There is one more important technical detail: when $v_1$ exchanges its $4$-boolean gadget for the $3$-boolean gadget of $v_2$, we must also transfer a fixed payment of $1$ from $v_2$ to $v_1$ to keep the total unchanged. That is, if the acting nodes simply swapped two $d$-boolean gadgets with $d=3$ and $d=4$, then this would already decrease the assets of $v_1$ by $1$ in the base case of $k=0$. As such, we need to ensure that $v_1$ also receives an extra asset of $1$ in this swap, and this extra asset has to be ``stable'' in the sense that it is not affected by any shock of size $k \leq K$.

We achieve this by also adding a so-called $1$-fix gadget to our construction. This will be somewhat similar to our $d$-boolean gadgets: given our shock limit $K$, a $1$-fix gadget consists of $K+1$ distinct source nodes that all have funds of $1$, and an outgoing debt of $1$ towards a node $u$. This node $u$ then has a debt of weight $1$ towards some other bank. Since our model assumes that at most $K$ nodes are affected in the shock, this gadget guarantees a payment of $1$ on the outgoing contract of $u$ in any case.

In our construction, we only need to add a single such $1$-fix gadget to the $3$-boolean gadget that $v_2$ is going to swap. More specifically, we add a new intermediate node $w$ that becomes the recipient of the outgoing debt of both the $3$-boolean gadget of $v_2$ and the $1$-fix gadget of $v_2$, and we create an outgoing debt of weight $4$ from this bank $w$ towards $v_2$. As such, this bank $w$ has a shock function of $(4, 4, 4, 1, 1, ..., 1)$ for $k=(0, 1, 2, 3, 4, ..., K)$, and $v_2$ can only swap this part of the tree if it redirects the contract from $w$ towards $v_1$, which gives both the payments from the $3$-boolean gadget and the $1$-fix gadget to $v_1$. Note that the network still remains a tree after this technical step.

Given this extended network, the total assets of the acting nodes (in case of no shock) becomes the same before and after the swap: we have $a_{v_1}=a_{v_1}'=18$ and $a_{v_2}=a_{v_2}'=22$. Given the subset of numbers $\{3,4,5,6\}$ for $v_1$, the shock function of $v_1$ becomes $(18, 18, 18, 15, 14, 13, 12, 11, 10, 9, 8)$ for the parameters $k=(0, ..., 10)$ before the swap. Similarly, with the values $\{3, 4, 6, 8\}$, the shock function of $v_2$ is $(22, 22, 22, 19, 18, 18, 16, 15, 14, 13, 12)$ for $k=(0, ..., 10)$ initially.

Now assume that $v_1$ trades a $4$-boolean gadget for the $3$-boolean gadget of $v_2$ which also has the $1$-fix gadget attached. Then the new subset of $v_1$ is $\{3,3,5,6\}$, giving a shock function of $(18, 18, 18, 15, 15, 13, 12, 12, 10, 9, 9)$ for $k=(0, ..., 10)$, which is indeed an improvement for the values $k=4, 7$ and $10$. Similarly, the new subset of $v_2$ is $\{4,4,6,8\}$, producing a shock function of $(22, 22, 22, 22, 18, 18, 16, 16, 14, 14, 12)$ for $k=(0, ..., 10)$; again a strict improvement for $k=3, 7$ and $9$.
\end{proof}

\renewcommand{\proofname}{Proof.}

\subsection{An optimization perspective}

Finally, we look at the swapping operation from an optimization perspective, i.e. as an improvement step in the search space of acting banks trying to find a better network configuration. In this section, we show that swapping is a local operation in the sense that it can easily get stuck in a local minimum. In particular, there are cases where the situation of banks cannot be improved with a simple swap, but it can be improved with more sophisticated operations like portfolio swapping or debt reorganization. 

\begin{theorem} \label{th:portfolio}
It is possible that for two acting nodes $v_1$ and $v_2$, there is not even a semi-positive swap in the network, but there is a positive portfolio swap.
\end{theorem}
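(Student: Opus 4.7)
The plan is to construct a concrete example in the worst-set shock model for which (a) no single swap between any pair of incoming contracts of $v_1$ and $v_2$ is semi-positive, and (b) some portfolio swap that exchanges several contracts on each side is strictly positive for both acting banks. The example will build on the boolean-gadget mechanism used in the proof of Theorem~\ref{th:treepos}, supplemented if necessary by $1$-fix gadgets to normalize the total unshocked assets. Recall that a $d$-boolean gadget contributes $d$ to the assets of its recipient as long as fewer than $d$ of its sources are hit, and drops to $0$ once all $d$ sources are hit. Consequently, when a bank's incoming contracts all come from disjoint boolean gadgets, its worst-set shock function is determined entirely by the multiset of gadget sizes: the loss at shock size $k$ equals the largest subset-sum of those sizes not exceeding $k$.

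First I would attach to $v_1$ a multiset $A$ of gadget sizes and to $v_2$ a multiset $B$ of gadget sizes with the following two properties. Whenever a size $d$ appears in both $A$ and $B$, the corresponding $d$-gadgets at $v_1$ and $v_2$ are built from disjoint source banks but are otherwise structurally isomorphic. Then any single swap of such a pair of $d$-gadgets is a pure relabeling of sources that leaves both shock functions unchanged; hence it cannot be semi-positive. For sizes that appear in only one of $A$ or $B$, the same-weight condition in the definition of swapping simply forbids any single swap involving that gadget. Together these ensure that no single swap between $v_1$ and $v_2$ is semi-positive.

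For the portfolio swap, I would choose sub-multisets $A' \subseteq A$ and $B' \subseteq B$ with $\sum A' = \sum B'$ but $A' \neq B'$ as multisets, so that exchanging $A'$ and $B'$ yields new multisets $(A \setminus A') \cup B'$ at $v_1$ and $(B \setminus B') \cup A'$ at $v_2$ whose subset-sum profiles are strictly finer (i.e. contain strictly more gaps) than the originals. The key point is that the retained portions $A \setminus A'$ and $B \setminus B'$ are structurally different, so the same exchange can simultaneously create favorable subset-sum gaps on both sides. Verifying positivity then reduces to a finite check at each shock size $k \in \{0, \ldots, K\}$, comparing the largest achievable subset-sum $\leq k$ before and after the swap on each side. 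Concretely, I would let the retained portions at $v_1$ and $v_2$ have different granularities (e.g.\ $v_1$ keeps several small sizes and $v_2$ keeps fewer large sizes), and pick $A', B'$ as two distinct equal-sum decompositions, such as $A' = \{d_1, d_2\}$ versus $B' = \{d\}$ with $d = d_1 + d_2$, so that the combined multisets on the two sides acquire gaps at complementary values of $k$.

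The main obstacle I expect is precisely step~(b): exchanging equal-sum bundles preserves the total unshocked assets at each bank, so any strict gain for $v_1$ at some shock level $k$ must come from a rearrangement that creates a new subset-sum gap inside $(A \setminus A') \cup B'$ at that $k$, and symmetrically a new gap inside $(B \setminus B') \cup A'$ must appear for $v_2$ at some (possibly different) $k$, while no $k$ is made worse on either side. This combinatorial constraint is delicate; the natural attack is to use retained portions at $v_1$ and $v_2$ whose subset sums complement each other, and to select $A', B'$ so that their interactions with these residual multisets produce strictly monotone gap improvements on both sides. The remaining verification is a finite subset-sum comparison at each $k \in \{0,\ldots,K\}$.
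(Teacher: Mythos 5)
Your approach is genuinely different from the paper's. The paper does not use boolean gadgets at all; it builds a small two-source layered network in which each intermediate node $u$ satisfies $a_u = c_u^{(1)} r_{s_1} + c_u^{(2)} r_{s_2}$ with $c_u^{(1)}+c_u^{(2)}=19$, so each acting node's assets are a linear combination $x_1 r_{s_1} + x_2 r_{s_2}$ and the worst-set shock function at $k=1$ is simply $\min(x_1,x_2)$. Starting from $(37,39)$ and $(39,37)$, any single swap unbalances one side below $37$, while a two-contract portfolio swap balances both to $(38,38)$. This is a clean, almost purely arithmetic argument, and it crucially uses \emph{shared} sources so that a single shocked bank affects both acting nodes simultaneously.

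The genuine gap in your proposal is that you never exhibit the construction; you only sketch a search heuristic and explicitly concede that the combinatorial step ``is delicate.'' That concern is warranted: exchanging equal-sum bundles between disjoint boolean-gadget families is a zero-sum game in ``fineness,'' and in most natural choices it improves one side while worsening the other. Moreover, your suggested heuristic of making the \emph{retained} portions at $v_1$ and $v_2$ ``complement each other'' points in the wrong direction. Your framework can in fact be made to work, but one viable instance actually uses \emph{identical} initial multisets: take $A=B=\{1,2,3,4\}$ (so each bank has four boolean gadgets of sizes $1,2,3,4$, all with pairwise disjoint sources). Every single swap must match weights, hence swaps two isomorphic $d$-gadgets, which is a pure relabeling and thus not semi-positive. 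Now let $v_1$ give away $A'=\{2,3\}$ and $v_2$ give away $B'=\{1,4\}$; both bundles have total weight $5$, so the portfolio swap is admissible. Afterwards $v_1$ holds $\{1,1,4,4\}$ (achievable subset sums $\{0,1,2,4,5,6,8,9,10\}$, losing the values $3$ and $7$) and $v_2$ holds $\{2,2,3,3\}$ (achievable sums $\{0,2,3,4,5,6,7,8,10\}$, losing $1$ and $9$). Before the swap the loss function was $\min(k,10)$ for both; afterwards $v_1$'s loss strictly drops at $k\in\{3,7\}$ and $v_2$'s at $k\in\{1,9\}$, with no deterioration anywhere, so the portfolio swap is positive. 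Once you supply such a concrete instance, your route is a valid alternative proof, and it has the small extra benefit of working with an unbounded shock size $k$ rather than via the coefficient-balancing trick; but as written, the absence of the example leaves the argument unfinished, and the heuristic you propose for finding one would have led you astray.
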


\begin{proof}
Consider two source nodes $s_1$ and $s_2$ that both have funds of $e_{s_1}=e_{s_2}=76$. Then consider a second layer of $2 \times 4$ intermediate nodes that have the following pairs of debts from $s_1$ and $s_2$, respectively: $(2,17)$, $(7,12)$, $(12,7)$, $(16,3)$ for the first $4$ nodes, and $(5,14)$, $(5,14)$, $(10,9)$, $(19,0)$ for the second $4$ nodes. Note that $s_1$ and $s_2$ can fulfill all of these obligations unless they are hit by the shock; as such, the assets of e.g. the first intermediate node can be expressed as $2 \cdot r_{s_1} + 17 \cdot r_{s_2}$.

Finally, let acting node $v_1$ have incoming debts of weight $19$ from the first $4$ intermediate nodes, and $v_2$ have incoming debts of weight $19$ from the second $4$ intermediate nodes. In this system, the assets of $v_1$ can be computed as $a_{v_1}=37 \cdot r_{s_1} + 39 \cdot r_{s_2}$, while the assets of $v_2$ are $a_{v_2}= 39 \cdot r_{s_1} + 37 \cdot r_{s_2}$. This means that both $v_1$ and $v_2$ have a shock function of $(76, 37, 0)$ for the values $k=(0, 1, 2)$, respectively.

One can check that after swapping any single pair of contracts, the situation of $v_1$ and $v_2$ gets worse: for both banks, the coefficient of dependency on either $r_{s_1}$ or $r_{s_2}$ becomes strictly larger than 39, so their remaining assets in the $k=1$ case fall strictly below $37$.

However, if $v_1$ swaps its contracts $(7, 12)$ and $(16,3)$ for the contracts $(5, 14)$ and $(19, 0)$ owned by $v_2$, then both of them will have $a_{v_1}'=a_{v_2}'=38 \cdot r_{s_1} + 38 \cdot r_{s_2}$. This results in a shock function of $(76, 38, 0)$ for $k=(0, 1, 2)$, thus indeed providing an improvement for the $k=1$ case.
\end{proof}

\begin{theorem} \label{th:reorg}
Given a triplet of nodes $v_1$, $v_2$, $v_3$, it is possible that there is no positive swap for any pair of these banks, but there is a positive debt reorganization for the $3$ banks together.
\end{theorem}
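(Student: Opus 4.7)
The plan is to construct an example paralleling Theorem~\ref{th:portfolio}, but with three source nodes $s_1, s_2, s_3$ and three acting nodes $v_1, v_2, v_3$, so that the only useful operation is a simultaneous cyclic three-way exchange. I would place intermediary banks between the sources and the acting nodes so that every incoming contract of an acting node carries a \emph{profile vector} $(x,y,z)$ with $x+y+z$ equal to the contract weight, describing how much of the contract's payment originates from each $s_j$. In the worst-set model, for each $v_i$ with aggregate profile $(A_i,B_i,C_i)$ and total $T=A_i+B_i+C_i$, a $k=1$ shock then leaves $v_i$ with exactly $T - \max(A_i,B_i,C_i)$ assets and a $k=2$ shock leaves exactly $\min(A_i,B_i,C_i)$, while $k=0$ and $k\geq 3$ are trivial.

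Targeting the balanced post-reorganization profile $(3,3,3)$ at each $v_i$, I would reverse-engineer three same-weight contracts $c_1, c_2, c_3$ such that cyclically permuting them flattens all three profiles. Solving the resulting linear constraints gives, for instance, $c_1=(3,1,1)$, $c_2=(1,2,2)$, $c_3=(2,0,3)$ of common weight~$5$. I would then add one extra ``residual'' contract of weight~$4$ per acting node so that the initial profiles become $(4,4,1)$, $(1,4,4)$, $(4,1,4)$, each with shock function $(9,5,1,0,\ldots)$. After the three-way debt reorganization all three profiles collapse to $(3,3,3)$ with shock function $(9,6,3,0,\ldots)$, which is a strict improvement at every $v_i$, so the reorganization is positive.

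The main obstacle is verifying that no pair of these banks admits a positive pairwise swap. Since the swap definition requires equal-weight contracts and each $v_i$ holds exactly one weight-$5$ and one weight-$4$ contract, there are exactly six legal pairwise swaps to check: three exchanges among the $c_i$'s and three among the residual contracts. I would carry out a direct case analysis, exploiting the cyclic symmetry, and show that every such swap sends one of the two affected banks to the balanced profile $(3,3,3)$ while pushing the other into an over-corrected profile with max entry equal to~$5$, for instance $(2,5,2)$ or $(5,2,2)$. Since a single strict decrease of the shock function at $k=1$ already destroys positivity, every pairwise swap fails. The delicate part is the calibration: the perturbations around $(3,3,3)$ must be small enough that the three-way cycle cancels at all banks simultaneously, yet large enough that every individual pairwise exchange overshoots at one side, which is what pins down the specific arithmetic of the construction.
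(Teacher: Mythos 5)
Your proposal is correct and follows essentially the same strategy as the paper: three sources $s_1,s_2,s_3$, three acting nodes $v_1,v_2,v_3$, profile vectors describing each acting node's dependency on the sources, and a cyclic debt reorganization that balances all three profiles simultaneously while every pairwise swap fails. The paper's actual construction is substantially leaner. It uses three unit-weight contracts per acting node, giving profiles $(2,1,0)$, $(0,2,1)$, $(1,0,2)$ and shock functions $(3,1,0,0)$; a single cyclic reorganization of one contract per node produces $(1,1,1)$ everywhere, improving the shock function to $(3,2,1,0)$. The pairwise-swap case analysis is brief because, by symmetry, any swap improving $v_i$ merely permutes the other bank's profile, yielding a semi-positive (not positive) swap. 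Your version adds a layer of machinery: you use two contracts of distinct weights ($5$ and $4$) per acting node expressly to cut the number of legal swaps to six, then verify that each one \emph{strictly} degrades the shock function of one participant at $k=1$. This buys a cleaner elimination argument (strict worsening rather than mere semi-positivity), at the cost of a more intricate calibration of the profile vectors. Both arguments are sound; the paper's is more economical, yours is a bit more defensively engineered. One small caveat worth making explicit in your version: you should state the source funds (e.g.\ $e_{s_i}=9$) and confirm that each intermediary's incoming payments exactly equal its single outgoing liability, so that payments pass through linearly in the recovery rates $r_{s_i}$; this is what justifies reading off the shock function directly from the profile vectors.
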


\begin{proof}
Consider $3$ source nodes $s_1, s_2, s_3$ with $e_{s_1}=e_{s_2}=e_{s_3} = 3$. For each of them, we add 3 intermediate nodes with a debt of $1$ from the respective source; e.g. for $s_1$, this gives us $3$ banks $u_{1,1}, u_{1,2}, u_{1,3}$ that receive $a_{u_{1,1}}=a_{u_{1,2}}=a_{u_{1,3}}=r_{s_1}$ from the system.

We then ensure that each acting node has a unit-weight debt from $3$ intermediate nodes: $v_1$ from $u_{1,1}$, $u_{1,2}$ and $u_{2,3}$; $v_2$ from $u_{2,1}$, $u_{2,2}$ and $u_{3,3}$; and $v_3$ from $u_{3,1}$, $u_{3,2}$ and $u_{1,3}$. This way, the acting nodes express the following asset functions: $a_{v_1} = 2 \cdot r_{s_1} + r_{s_2}$, $a_{v_2} = 2 \cdot r_{s_2} + r_{s_3}$, $a_{v_3} = 2 \cdot r_{s_3} + r_{s_1}$. As such, the shock function of the banks is $(3, 1, 0, 0)$ for $k=(0, 1, 2, 3)$, respectively.

One can check that no two banks have a positive swap for any single pair of contracts. For example, $v_1$ and $v_2$ have incoming payments of $r_{s_1}, r_{s_1}, r_{s_2}$ and $r_{s_2}, r_{s_2}, r_{s_3}$, respectively. The shock function of $v_1$ is only improved if it trades an $r_{s_1}$ for getting an $r_{s_3}$ from $v_2$; however, then $v_2$ ends up with $r_{s_2}, r_{s_2}, r_{s_1}$, having the same shock function $(3, 1, 0, 0)$ as in the original case.

However, one can easily reorganize the whole system such that each node has an intermediate debtor from each source (e.g. $v_1$ has debts from $u_{1,1}$, $u_{2,2}$ and $u_{3,3}$), resulting in $a_{v_1}'=a_{v_2}'=a_{v_3}'= r_{s_1} + r_{s_2} + r_{s_3}$. This provides a shock function of $(3, 2, 1, 0)$ for $k=(0, 1, 2, 3)$, which is indeed an improvement.
\end{proof}

\section{Conclusion}

In this paper we have studied whether banks can gain more assets or mitigate the effects of external shocks by executing a debt swap in a financial network. Swapping is a very simple local operation that keeps the incoming and outgoing liabilities unchanged, so it is a natural choice for a fundamental reorganization step in the network. Our results show that the base model and the proportional shock model do not allow for a swap that is beneficial for both of the acting nodes; on the other hand, a positive swap is often possible in models where the goal of banks is to mitigate their losses in the worst possible case. However, these models also raise some difficult questions: other banks in the network might be affected by the swap negatively, and it also becomes hard to analyze the system from a computational perspective.

We point out that while our result for the base and proportional models is mostly negative, there are various ways to extend these models by further practical aspects that make a positive swap possible. We have already noted at the end of Section \ref{sec:nopos} that one such case is when we also account for the administrative costs of a default. In another approach, the acting banks could also take advantage of semi-positive swaps where $a_{v_1}'>a_{v_1}$ and $a_{v_2}'=a_{v_2}$: if $v_1$ pays a small fee to $v_2$ (with or without the knowledge of regulators) in order to motivate $v_2$ to agree to the trade, then the transaction suddenly becomes beneficial for both parties. Similarly, banks in the proportional model could have confident but different expectations about the future of the market: if $v_1$ wants to maximize its assets for a shock of size $\lambda_1$, and $v_2$ wants to maximize its assets for a shock of some other size $\lambda_2$, then there can again be a swap that is beneficial for both of them. We leave it to future work to study different network operations in these more general settings.

\bibliographystyle{ACM-Reference-Format}
\bibliography{references}

\appendix

\section{Formal definitions and basic properties} \label{App:A}

For completeness, we now give a more formal definition of the more sophisticated swapping operations.

\begin{definition}[Portfolio Swapping]
Assume $G$ has two nodes $v_1$, $v_2$ and two disjoint node sets $U_1$, $U_2$ that do not contain $v_1$ or $v_2$. Assume that $\sum_{u_i \in U_1} l_{u_{i}, v_1}=\sum_{u_i \in U_2} l_{u_{i}, v_2}$, and also $l_{u_i, v_2}=0$ for all $u_i \in U_1$, $l_{u_i, v_1}=0$ for all $u_i \in U_2$.

A \emph{portfolio swapping} produces a new network $G'$ that only differs from $G$ in the following:
\begin{itemize}
\setlength{\itemsep}{1pt}
\setlength{\topsep}{1pt}
 \item for all $u_i \in U_1$, we have $l_{u_i, v_2}'=l_{u_i, v_1}$ and $l_{u_i, v_1}'=0$,
 \item for all $u_i \in U_2$, we have $l_{u_i, v_1}'=l_{u_i, v_2}$ and $l_{u_i, v_2}'=0$.
\end{itemize}
\end{definition}

\begin{definition}[Debt Reorganization]
Assume $G$ has a set of nodes $u_1$, $u_2$, ..., $u_m$ and $v_1$, $v_2$, ..., $v_m$ such that there is a specific value $d$ such that $l_{u_i, v_i}=d$ for all $i \in \{ 1, ..., m \}$, and $l_{u_i, v_j}=0$ for all combinations of $i,j \in \{ 1, ..., m \}$, $i \neq j$.

Given a permutation $\Pi: \{ 1, ..., m \} \rightarrow \{ 1, ..., m \}$ of the numbers $1,...,m$, a \emph{debt reorganization} according to this permutation produces a new network $G'$ that only differs from $G$ in the following:
\begin{itemize}
\setlength{\itemsep}{1pt}
\setlength{\topsep}{1pt}
 \item for all $i \in \{ 1, ..., m \}$, we have $l_{u_i, v_{\Pi(i)}}'=d$,
 \item for all $i, j \in \{ 1, ..., m \}$ with $j \neq \Pi(i)$, we have $l_{u_i, v_j}'=0$.
\end{itemize}
We usually assume that the underlying permutation $\Pi$ is fixed-point free, i.e. $\Pi(i) \neq i$ for all $i \in \{ 1, ..., m \}$.
\end{definition}

We continue by discussing the basic properties of financial systems that are used throughout Section \ref{sec:base}. We once again point out that some of these properties have already been discussed (in our network model or a related one) in previous works, e.g. in \cite{model1} or \cite{veraart}.

\paragraph*{Monotonicity.} One of the most fundamental properties of our networks is monotonicity, or in other words, the fact that banks have \textit{long positions} on each other. Intuitively, this means that if a specific node is in a better situation, then this can only result in a better (or the same) situation for other nodes, but never a worse one. Lemma \ref{lem:monot} formulates this property for the special case of a source $s$ and a sink $t$, but it also holds for any pair of banks in the system.

One can prove this property by defining a new financial network $G_{\text{diff}}$ that corresponds to the ``difference'' of the two systems (before and after the increase). Given the same set of nodes and contracts, we can consider the remaining liability $l_{u,v}\,\!^{(G_{\text{diff}})} = l_{u,v} - p_{u,v}$ on each contract of this difference system (with $l_{u,v}$ and $p_{u,v}$ understood on the edge in the original state of the system). For convenience, contracts with $l_{u,v}\,\!^{(G_{\text{diff}})} = 0$ can be dropped. We then assign $e_u\,\!^{(G_{\text{diff}})} = 0$ to each bank, apart from the source $s$, where we choose $e_s\,\!^{(G_{\text{diff}})}=\hat{e}_s-e_s$. 

Since this system describes the remaining payment obligations that are unsatisfied in the original system, the solution of the increased system with $\hat{e}_s$ is obtained as the sum of the solution of the original system, plus the solution of this difference system.  More formally, one can observe that any such sum provides a clearing vector in the increased system, and any clearing vector in the increased system defines a clearing vector in the difference system, so the extra funds of $s$ will indeed be distributed according to the solution of the difference system. Throughout the proofs of Section \ref{sec:base}, we often analyze such difference systems to study the result of increasing the funds of a specific bank.

Monotonicity already follows from this observation, since we have $\hat{a}_u = a_u + a_u\,\!^{(G_{\text{diff}})}$ where $a_u\,\!^{(G_{\text{diff}})} \geq 0$. We can also generalize this monotonicity property for the case when the funds of multiple banks are increased, simply by executing these increases one after the other.

\paragraph*{Indirect monotonicity.}
As a technical detail, we note that we also use a slightly different version of monotonicity when we have two sources $s_1$ and $s_2$, and the funds are changed to $\hat{e}_{s_1} < e_{s_1}$ and $\hat{e}_{s_2} > e_{s_2}$. In this case, the funds of a source $s_1$ are decreased; however, if we know for a fact that for every debtor $u$ of $s_1$ we have $\hat{a}_u \geq a_u$ (i.e. their losses from $s_1$ are compensated by the increase of funds at $s_2$), then the decrease at $s_1$ can have no negative impact on the rest of the network: $\hat{a}_w \geq a_w$ still holds for each bank $w \neq s_1$.

Formally, the proof of this property is as follows. For simplicity, we focus on the case when $s_1$ only has a single debtor $u$. If $u$ has no liabilities at all, then the decrease can only affect $u$, so the assumption of $\hat{a}_u \geq a_u$ already settles the claim. Otherwise, we can slightly modify the difference system by (i) increasing the funds of $s_1$ back to $e_{s_1}$, and (ii) adding a new sink $\hat{t}$ and a debt from $u$ to $\hat{t}$ such that the payment on this debt is exactly $e_{s_1} - \hat{e}_{s_1}$ (one can easily compute the exact liability $l_{u, \hat{t}}$ required for this). This ensures that the incoming and outgoing payments of $u$ are both increased by $e_{s_1} - \hat{e}_{s_1}$, so the payment remains unchanged on every other outgoing debt of $u$. As such, every other bank $v$ will have the same assets in this new difference system as before. This modified system also provides non-negative $a_v\,\!^{(G_{\text{diff}})} \geq 0$, which again implies $\hat{a}_v \geq a_v$.

\paragraph*{Non-expansivity.}
The non-expansive property of Lemma \ref{lem:nonexp} follows from the fact that in the difference system corresponding to an increase of $\Delta$, the source node $s$ only has $\Delta$ new assets, every other node has at least as much incoming as outgoing payment, and there can be no outgoing payment from $t$.

More formally, if we use the notation $p_{u, \text{in}} = \sum_{v \in B} \, p_{v, u}$ and $p_{u, \text{out}} = \sum_{v \in B} \, p_{u,v}$ in the difference system, then $\sum_{u \in B} \, p_{u, \text{in}} = \sum_{u \in B} \, p_{u, \text{out}}$, since we are simply counting the same contracts in two different ways. Every intermediate node (apart from $s$) has $p_{u, \text{in}} \geq p_{u, \text{out}}$ because none of them have any funds in the difference system; as such, we can subtract these inequalities to obtain $p_{s, \text{in}} + p_{t, \text{in}} \leq p_{s, \text{out}} + p_{t, \text{out}}$. Since $s$ and $t$ are source and sink nodes, respectively, this simplifies to $p_{t, \text{in}} \leq p_{s, \text{out}}$. Since we know that $p_{s, \text{out}} \leq e_s = \Delta$, the property follows.

\paragraph*{Linearity.}
We have already discussed that linearity is the special case when all the extra units of funds injected at $s$ will end up as an asset of $t$ after traversing some route through the network. This also implies that all the new assets received by any intermediate node $u$ in this increase will be relayed in the system to some other node (a creditor of $u$), with the exception of the sink $t$. This in turn also means that all such intermediate banks $u$ are in default, and there are still unpaid liabilities on every contract that is contained in a directed path starting from $s$.

We point out that given a set $T$ of sink nodes, our proofs often use the fact that a bank $s_1$ is $T$-linear on some interval $[x_1, x_1 + \Delta_1]$ and another bank $s_2$ is $T$-linear on some other interval $[x_2, x_2 + \Delta_2]$. However, formally, this is not precise, as the $T$-linearity of $s_1$ also depends on the current value of $e_{s_2}$, and vice versa: a higher amount of funds at $s_2$ can result in higher payments in the network, which might mean that there are no more remaining liabilities on some of the directed paths from $s_1$ to $T$. Our observations in the proofs are always about the fact that $s_1$ is $T$-linear on $[x_1, x_1 + \Delta_1]$ assuming that $e_{s_2}=x_2$, and $s_2$ is $T$-linear on $[x_2, x_2 + \Delta_2]$ assuming that $e_{s_1}=x_1$, but as such, this does not generally imply that $s_1$ and $s_2$ are linear on the whole joint interval $[x_1, x_1 + \Delta_1] \times [x_2, x_2 + \Delta_2]$.

This is not a problem for the proofs, though: in each case, we are only using the fact that $s_1$ and $s_2$ are linear on a joint interval $[x_1, x_1 + \delta] \times [x_2, x_2 + \delta]$ for some small enough $\delta$. Fortunately, this indeed follows from the separate linearities on $[x_1, x_1 + \Delta_1]$ and $[x_2, x_2 + \Delta_2]$ for a choice of $\delta= \frac{\Delta_1+\Delta_2}{2}$. For any debt contract contained in a directed path from $s_1$ or $s_2$ to $T$, if $\hat{e}_{s_1}=x_1 + \Delta_1$ increases the payment on this path by $\eta_1$, and $\hat{e}_{s_2}=x_2 + \Delta_2$ increases the payment on this contract by $\eta_2$, then linearity implies that there are at still at least $\max(\eta_1, \eta_2)$ unpaid liabilities on this edge in the original system with $e_{s_1}=x_1$, $e_{s_2}=x_2$. However, in this case an increase of $\hat{e}_{s_1}=x_1 + \frac{\Delta_1}{2}$ and $\hat{e}_{s_2}=x_2 + \frac{\Delta_2}{2}$ at the source nodes only increases the payment on the contract by $\frac{\eta_1}{2}+\frac{\eta_2}{2} \leq \max(\eta_1, \eta_2)$, so it indeed does not exceed the liability of the edge.

\paragraph*{Concavity.}
The variant of the concavity property formulated in Lemma \ref{lem:conc} is a consequence of a more general property of the network model: given two banks $s$ and $t$, if we consider the funds $e_{s}$ as a variable, then the assets $a_t$ are a concave function of $e_{s}$. This follows from the properties of the network model: if we take two parameters $e_1$ and $e_2$ for the choice of $e_s$ (with $e_1 < e_2$), and assume we add a small $\delta$ amount of extra funds to $s$ in both cases, then these extra funds will follow the same paths through the network until they reach an outgoing liability that has already been paid in full. Due to monotonicity, the payment on each contract with $e_s=e_2$ is at least as much as with $e_s=e_1$; therefore, an increase of $\delta$ at $e_1$ will trigger a larger (or the same) payment increase on each contract than an increase of $\delta$ at $e_2$. This implies that $e_s=e_1+\delta$ increases the value of $a_t$ at least as much as an increase of $e_s=e_2+\delta$. The property also generalizes for arbitrary banks in the system that are not sources or sinks.

Looking at the same property from a shock-based perspective, this means that if a bank $s$ is hit by a shock that removes a specific $\gamma$ amount of funds from this bank (e.g. in the proportional or worst-sum model), then the \textit{loss function} of another bank $t$, i.e. the amount of assets $t$ loses due to this shock, is a convex function of this parameter $\gamma$.

Altogether, we can conclude that the dependency of $a_t$ on the funds $e_s$ of another banks can only happen in a rather restricted fashion in our financial network model: such a dependency is always described by a monotonically increasing, piecewise linear function that is also concave.

\section{No positive swap in the base model} \label{App:B}

In this section, we discuss the details of the proof of Theorem \ref{th:nopos_base} and its corollaries, i.e. that there is no positive swap in the base and proportional models.

The main idea of the proof was already outlined in Section \ref{sec:base}: we consider the open variant of the financial system, and study separate cases based on the relations of the payments before and after the swap. The most straightforward case, i.e. when $p_1' \leq p_1$ and $p_2' \leq p_2$, was already settled in the proof of Lemma \ref{lem:monoton}.

\subsection{Proof of Lemma \ref{lem:bothinc}} \label{sec:l_limits}
We begin with  some technical details omitted from the proof of Lemma \ref{lem:bothinc}, which addresses the case when $p_1' > p_1$ and $p_2' > p_2$.

\renewcommand{\proofname}{Details for the proof of Lemma \ref{lem:bothinc}.}

\begin{proof}
The base idea was already outlined in Section \ref{sec:base}: if we select $e_{s_1}=p_1 + \Delta_1$ and $e_{s_2}=p_2 + \Delta_2$ for a carefully chosen $\Delta_1$ and $\Delta_2$, then this results in $a_{t_1}=p_1 + \Delta_1$ and $a_{t_2}=p_2 + \Delta_2$ due to the linearities in the system. Then these extra assets result in further payments through the swapped edges, thus giving $\Delta_1$ and $\Delta_2$ new assets to $s_1$ and $s_2$, respectively, and therefore producing a larger clearing vector in the original system. 

However, this argument also uses the fact that there are still unpaid liabilities on these contracts; that is, $l_{u_1, v_1} \geq p_1 + \Delta_1$ and $l_{u_2, v_2} \geq p_2 + \Delta_2$. Since the parameter $\delta$ (which defines $\Delta_1$ and $\Delta_2$) can be chosen arbitrarily small, we only need $l_{u_1, v_1} > p_1$ and $l_{u_2, v_2} > p_2$ to satisfy this. It is not hard to see that at least one of these two inequalities certainly holds: having $l_{u_1, v_1} \leq p_1$ and $l_{u_2, v_2} \leq p_2$ would imply $l_{u_1, v_1}+l_{u_2, v_2} \leq p_1 + p_2 < p_1'+p_2'$. However, the same contracts must carry the payments $p_2'$ and $p_1'$ after the swap, so we have $p_2' \leq l_{u_1, v_1}$ and $p_1' \leq l_{u_2, v_2}$, which is a contradiction. Therefore, let us assume w.l.o.g. that $l_{u_1, v_1} > p_1$.

If $l_{u_2, v_2} > p_2$ also holds, then we are ready, so let us assume that $l_{u_2, v_2} = p_2$. We now show that this implies $a_{t_2}' = a_{t_2}$. First of all, due to monotonicity, we have $a_{t_2}' \geq a_{t_2}$. Furthermore, since there is a payment of $p_1'$ from $u_2$ to $v_1$ after swapping in the closed system, we have $p_1' \leq l_{u_2, v_2}$. Recall that we have $a_{t_2}=p_2$ and $a'_{t_2}=p_1'$ from the definition of the open system, so this implies $a'_{t_2} \leq l_{u_2, v_2} = a_{t_2}$.

In order words, $a_{t_2}' = a_{t_2}$ means that all the newly added funds will end up at $t_1$, i.e. both $s_1$ and $s_2$ are $t_1$-linear. However, in this case, setting $e_{s_1}=p_1 + \delta$ would provide $a_{t_1}=p_1 + \delta$, and since $l_{u_1, v_1} \geq p_1 + \delta$ for $\delta$ small enough, this again provides a higher clearing vector in the system.
\end{proof}

\renewcommand{\proofname}{Proof.}

\subsection{Proof of Lemma \ref{lem:hard}}
The more involved part of the proof is to settle the case of Lemma \ref{lem:hard}, i.e. when $p_1' \leq p_1$ and $p_2' > p_2$, but the increased assets at $v_2$ also result in an extra payment for $v_1$ in the system, and this ensures $a_{v_1}' > a_{v_1}$ in the end. Note that we have already seen a similar situation in the semi-positivity example of Figure \ref{fig:semipos}, where $v_2$ received strictly less payment on the swapped edge, but the increased assets of $v_1$ still ensured that $a_{v_2}' = a_{v_2}$. However, in case of a positive swap, we would need $a_{v_1}'$ to be strictly larger than $a_{v_1}$; we show that this is not possible.

We prove Lemma \ref{lem:hard} by starting with a \textit{baseline} open system of $e_{s_1}=p_1'$ and $e_{s_2}=p_2$, i.e. when both banks have the lower amount of funds from the two cases. We then add an extra $p_2'-p_2$ funds to bank $s_2$, and analyze the resulting payment increase in the system; more formally, we analyze the solution $\phi_d$ of the difference system $G_{\text{diff}}$ when injecting these extra funds at $s_2$.

Our analysis will split $\phi_d$ into two different payment configurations (i.e. solutions of artificially defined systems) that sum up to the original payment configuration $\phi_d$. To define this splitting, we first need to find out how many so-called \textit{raw assets} $\mu$ are contributed to $v_1$ due to our increase step. More specifically, we consider a modified variant $G_{\text{raw}}$ of our difference system where we remove all the liabilities of $v_1$, thus making $v_1$ a sink node, and we define the raw assets of $v_1$ as $\mu=a_{v_1} ^{\, (G_{\text{raw}})}$.

Intuitively, $\mu$ does not describe the total amount of new assets that $v_1$ gains due to the increase; similarly to the example of Figure \ref{fig:expansive}, if $v_1$ indirectly receives $\mu$ new assets from $s_2$, then $a_{v_1}$ can increase by more than $\mu$ if $v_1$ is contained in some cycles in the network. Instead, the concept of raw assets aims to capture the fact that from the perspective of $v_1$, increasing $e_{s_2}$ by $p_2'-p_2$ is more or less equivalent to increasing $e_{s_1}$ by $\mu$, since these $\mu$ new funds introduced at $s_1$ would also go through the same cycles.

Let us use $\phi_2$ to denote the payment configuration in the solution of $G_{\text{raw}}$. Note that $\phi_2$ describes all the effects of the increase at $s_2$, except for the behavior of the $\mu$ raw assets after arriving at $v_1$. As such, the remaining effects of the increase are identical to the effects of injecting $\mu$ new funds at $v_1$. More formally, we can form another difference system $G_{\text{rest}}$ by subtracting all the payments in $\phi_2$ from the liabilities on the edges of $G_{\text{diff}}$. We then set the funds of $v_1$ in this system to $e_{v_1}^{\,(G_{\text{rest}})} = \mu$ (and the funds of all other banks $w$ to $e_w \,^{\,(G_{\text{rest}})} = 0$), and denote the payment configuration in the solution of $G_{\text{rest}}$ by $\phi_1$.

Note that this is indeed a partitioning of all our payments: $\phi_1$ and $\phi_2$ sum up to the solution $\phi_d$ of the difference system $G_{\text{diff}}$. Furthermore, recall that $s_1$ has an infinite liability to $v_1$; as such, an alternative (and more convenient) interpretation of $\phi_1$ is that these $\mu$ new funds are instead introduced at the source $s_1$, and $v_1$ receives them from $s_1$. Since the only difference between the two variants is the amount of assets at $s_1$, this two-fold interpretation creates no confusion in our proof.

We now analyze the payment configurations $\phi_1$ and $\phi_2$ in detail. We first begin with a natural observation that if a difference system $G_1$ has less liabilities than another difference system $G_2$, then no bank $w$ can obtain more assets in $G_1$ than in $G_2$.

\begin{lemma} \label{lem:inferior}
Consider two financial networks $G_1$ and $G_2$ on the same banks, such that we have $l_{w_1,w_2}\,\!^{(G_1)} \leq l_{w_1,w_2}\,\!^{(G_2)}$ for any banks $w_1, w_2$. Furthermore assume that there is source node $s$ such that $e_s^{\, (G_1)} = e_s^{\, (G_2)}$, and for any other bank $w \neq s$ we have $e_w^{\, (G_1)} = e_w^{\, (G_2)}=0$. Then for any bank $w$, we have $a_w ^ {\, (G_1)} \leq a_w ^ {\, (G_2)}$.
\end{lemma}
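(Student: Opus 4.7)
The plan is to prove Lemma \ref{lem:inferior} by comparing the greatest clearing vectors in $G_1$ and $G_2$ via a simultaneous fictitious-default iteration, exploiting the hypothesis that both networks have the same external funds concentrated at a single source $s$. The idea is to use the monotonicity of the Eisenberg--Noe operator together with $l_{w_1,w_2}^{(G_1)} \leq l_{w_1,w_2}^{(G_2)}$ to show that the clearing assets in $G_1$ are dominated by those in $G_2$ coordinate-by-coordinate, then pass to the limit.

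Concretely, I would recall that the maximal clearing vector in each $G_i$ is the greatest fixed point of the operator $\Phi_{G_i}$ sending a recovery-rate vector $r$ to $\Phi_{G_i}(r)_u = \min(1,\, (e_u + \sum_v r_v \cdot l_{v,u}^{(G_i)})/l_u^{(G_i)})$. Starting from $r^{(0)} = (1,\ldots,1)$ and iterating $r^{(k+1)} = \Phi_{G_i}(r^{(k)})$ produces a monotonically decreasing sequence that converges to the maximal clearing vector, and the induced assets $a^{(k)}_w = e_w + \sum_v r_v^{(k)} \cdot l_{v,w}^{(G_i)}$ converge to the clearing assets. I would run this iteration in parallel in $G_1$ and $G_2$ and show by induction on $k$ that $a^{(k),G_1}_w \leq a^{(k),G_2}_w$ holds for every bank $w$.

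For the base case, after one round the assets are $a^{(1)}_w = e_w + \sum_v l_{v,w}$, so the desired inequality follows directly from the componentwise hypothesis $l^{(G_1)}_{v,w} \leq l^{(G_2)}_{v,w}$ and the fact that $e_w$ is the same in both systems. For the inductive step, the crucial input is the structural assumption that $s$ is the only bank with nonzero funds, so every bank's assets at round $k+1$ arise purely from the propagation of $e_s$ through a cascade of proportional splits whose magnitudes are controlled by the round-$k$ recovery rates. Combined with the inductive hypothesis, this is intended to give that any increase in the payment reaching $w$ inside $G_2$ relative to $G_1$ can only come from the componentwise gap between $L^{(G_1)}$ and $L^{(G_2)}$, which has the right sign.

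The hard part, I expect, is the proportional redistribution step: if a bank $u$ has strictly smaller total liabilities $l_u^{(G_1)} < l_u^{(G_2)}$, its recovery rate may be higher in $G_1$, so on a single outgoing edge the payment $r_u^{(G_1)} \cdot l_{u,v}^{(G_1)}$ can exceed $r_u^{(G_2)} \cdot l_{u,v}^{(G_2)}$, breaking naive edge-wise domination. To sidestep this, I would aggregate at the level of each bank's total outflow, which equals $\min(a_u, l_u)$ and is therefore globally bounded by its incoming assets, and then invoke non-expansivity (Lemma \ref{lem:nonexp}) applied to the unique source $s$ to propagate the asset bound forward through the network. The delicate point is that edge payments need not be individually monotone, so the inductive argument has to be carried out in terms of how the total flow from $s$ aggregates at each downstream bank, rather than edge by edge.
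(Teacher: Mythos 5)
Your instinct that the proportional-redistribution step is the crux is correct, and I would push on it harder: it is not merely a technical wrinkle to sidestep but a genuine obstruction to the lemma \emph{as stated}. The proposed fix --- aggregating to each bank's total outflow $\min(a_u, l_u)$ and invoking non-expansivity --- cannot recover bank-by-bank domination: knowing $u$ sends out at least as much in total in $G_2$ says nothing about how that outflow is split across its creditors, and Lemma~\ref{lem:nonexp} only bounds \emph{aggregate} asset growth relative to what is injected at $s$, never a per-bank comparison. In fact the lemma fails under the componentwise hypothesis alone. Take $B=\{s,u,w_1,w_2\}$ with $e_s=2$, $l_{s,u}$ large, and in $G_1$ set $l_{u,w_1}=l_{u,w_2}=1$ while in $G_2$ set $l_{u,w_1}=1$, $l_{u,w_2}=3$. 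All hypotheses of the lemma hold, yet $a_{w_1}^{(G_1)}=1 > \frac{1}{2} = a_{w_1}^{(G_2)}$: the larger total liability of $u$ in $G_2$ tilts the proportional split so that a smaller fraction of $u$'s outflow reaches $w_1$, and no iterative or flow-tracking argument can close the induction at $w_1$, because the conclusion itself is false there.

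What actually rescues the lemma is extra structure in the application that your argument should use explicitly. In Lemma~\ref{lem:mu} the two systems compared are $G_{\text{diff}}$ and $G_{\text{rest}}$, where $G_{\text{rest}}$ is obtained by subtracting the \emph{proportional} payment configuration $\phi_2$ from $G_{\text{diff}}$; this makes each bank's liability vector in $G_1$ a scalar multiple $\beta_u \in [0,1]$ of its vector in $G_2$, so the per-edge proportions $l_{u,v}/l_u$ coincide in the two systems. Under this stronger ``proportionally scaled'' hypothesis your fictitious-default iteration does close: from $a_u^{(k),G_1} \leq a_u^{(k),G_2}$ and $l_u^{(G_1)} = \beta_u \, l_u^{(G_2)} \leq l_u^{(G_2)}$ one gets $\min(a_u^{(k),G_1}, l_u^{(G_1)}) \leq \min(a_u^{(k),G_2}, l_u^{(G_2)})$, and multiplying by the common ratio $l_{u,v}/l_u$ gives edge-by-edge payment domination, which closes the induction. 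The paper's own argument (funds ``follow the same path'' until hitting a node with strictly smaller total liabilities) is an informal flow-tracking version of the same observation and relies silently on this proportional scaling; so either the lemma's hypothesis should be strengthened to assume it, or the proof should record that the application supplies it.
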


\begin{proof}
Since $s$ is the only node to have any funds in both $G_1$ and $G_2$, each unit of funds exhibits the same behavior (i.e. follows the same path from $s$) in $G_1$ and $G_2$ until it arrives at a node $w_1$ such that $l_{w_1} \,\! ^{(G_1)} < l_{w_1} \,\! ^{(G_2)}$, i.e. a node $w_1$ which has less liabilities in $G_1$ than in $G_2$. At such a node $w_1$, it is possible that in $G_1$ a specific $x$ amount of funds remain at $w_1$, while in $G_2$ these $x$ funds continue traversing the network; however, due to monotonicity, this can only further increase the assets of other banks. As such, any bank $w$ has at least as many assets in $G_2$ as in $G_1$.
\end{proof}

\noindent This already allows us to draw conclusions about the value of $\mu$.

\begin{lemma} \label{lem:mu}
Having $a_{v_1}' > a_{v_1}$ implies that $\mu > p_1-p_1'$.
\end{lemma}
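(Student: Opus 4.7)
The plan is to introduce a third comparison from the baseline state and sandwich $\mu$ between two $s_1$-injections of different sizes. Let $a_{v_1}^B$ denote $v_1$'s assets in the baseline $(e_{s_1}, e_{s_2}) = (p_1', p_2)$, and let $\Delta^{(1)}$ be the resulting gain at $v_1$ when one adds $p_1 - p_1'$ at $s_1$, so that $a_{v_1} = a_{v_1}^B + \Delta^{(1)}$ (this brings the open system into its original state). By construction of $\phi_d$ we also have $a_{v_1}' = a_{v_1}^B + a_{v_1}^{\phi_d}$, so the hypothesis $a_{v_1}' > a_{v_1}$ rewrites as $a_{v_1}^{\phi_d} > \Delta^{(1)}$.

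The key step is to introduce the auxiliary solution $\psi$ of $G_{\text{diff}}$ with $e_{s_1} = \mu$ (and no funds elsewhere), and to show that $a_{v_1}^{\phi_d} \leq a_{v_1}^{\psi}$. I would argue this directly from the decomposition $\phi_d = \phi_1 + \phi_2$. By construction, the incoming payments at $v_1$ under $\phi_2$ total exactly $\mu$, so the incoming payments under $\phi_d$ total $\mu$ plus those under $\phi_1$; this is precisely the value of $a_{v_1}^{\phi_1}$ computed in $G_{\text{rest}}$ with $e_{v_1} = \mu$, giving $a_{v_1}^{\phi_d} = a_{v_1}^{\phi_1}$. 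Reinterpreting the $\mu$ initial funds of $v_1$ as being injected at $s_1$ instead (and flowing through $s_1$'s infinite-capacity debt to $v_1$, which is untouched by $\phi_2$), $\phi_1$ becomes the solution of $G_{\text{rest}}$ with $e_{s_1} = \mu$ while $\psi$ is the corresponding solution in $G_{\text{diff}}$ with $e_{s_1} = \mu$. Since $G_{\text{rest}}$ has pointwise smaller liabilities than $G_{\text{diff}}$, Lemma \ref{lem:inferior} applies and yields $a_{v_1}^{\phi_1} \leq a_{v_1}^{\psi}$, hence $a_{v_1}^{\phi_d} \leq a_{v_1}^{\psi}$.

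Combining the two inequalities gives $a_{v_1}^{\psi} > \Delta^{(1)}$, i.e.\ injecting $\mu$ at $s_1$ from the baseline yields strictly more gain at $v_1$ than injecting $p_1 - p_1'$ at the same source from the same baseline. The generalized form of monotonicity noted after Lemma \ref{lem:monot} (extending that property to non-sink targets) asserts that $v_1$'s gain is a non-decreasing function of the amount injected at $s_1$, so the strict inequality forces $\mu > p_1 - p_1'$, as required.

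The only non-routine step is the middle inequality $a_{v_1}^{\phi_d} \leq a_{v_1}^{\psi}$, and this is exactly what the decomposition $\phi_1 + \phi_2$ was engineered for: it separates the effect of the $s_2$-injection into a raw delivery of $\mu$ units to $v_1$ followed by propagation through the capacity-reduced network $G_{\text{rest}}$, to be contrasted with the identical propagation carried out inside the full $G_{\text{diff}}$. The only bookkeeping subtlety is to identify the $\mu$ initial funds of $v_1$ in the $\phi_1$-system with the $\mu$ payments collected via $\phi_2$, which is precisely what makes $a_{v_1}^{\phi_d} = a_{v_1}^{\phi_1}$ and allows Lemma \ref{lem:inferior} to drop in cleanly.
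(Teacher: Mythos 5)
Your proof is correct and is essentially the paper's own argument: both rely on the decomposition $\phi_d = \phi_1 + \phi_2$, the observation that $a_{v_1}' = a_{v_1}^B + a_{v_1}^{\phi_1}$ with $\phi_1$ living in the liability-reduced system $G_{\text{rest}}$, an application of Lemma~\ref{lem:inferior} to move to $G_{\text{diff}}$, and monotonicity of the gain at $v_1$ in the injected amount. The only cosmetic difference is that you introduce the auxiliary solution $\psi$ and argue directly via a sandwich, whereas the paper argues by contradiction from $\mu = p_1 - p_1'$ and then treats $\mu < p_1 - p_1'$ as a separate case by monotonicity; your packaging avoids that case split but the content is the same.
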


\begin{proof}
Assume first for contradiction that $\mu = p_1-p_1'$. In order to use Lemma \ref{lem:inferior}, we will express both $a_{v_1}$ and $a_{v_1}'$ as the result of taking the assets of $v_1$ in the baseline system first, and then adding $\mu$ extra funds to $s_1$ in a specific difference system.

We can obtain the assets $a_{v_1}$ before swapping in the following way: we start with the baseline system (recall that we have $e_{s_1}=p_1'$, $e_{s_2}=p_2$ here), and we introduce an extra $p_1-p_1'$ funds at $s_1$. Note that the difference system corresponding to this increase has the same liabilities as $G_{\text{diff}}$, it is only the extra funds that are now placed at $s_1$ instead of $s_2$.

On the other hand, after swapping, $v_1$ receives exactly $\mu$ assets in $\phi_2$, which are then reintroduced in $\phi_1$ again; as such, we can obtain $a_{v_1}'$ as the sum of the assets of $v_1$ in the baseline system and the assets of $v_1$ in $\phi_1$. Recall that $\phi_1$ is obtained in the difference system $G_{\text{rest}}$ where the liabilities of $G_{\text{diff}}$ are first further reduced by $\phi_2$.

As the liabilities in $G_{\text{diff}}$ are larger (or the same) than in $G_{\text{rest}}$, Lemma \ref{lem:inferior} implies that we must have $a_{v_1}' \leq a_{v_1}$, which is a contradiction. If we have $\mu < p_1-p_1'$ instead of $\mu = p_1-p_1'$, then the same argument holds, but $\phi_1$ provides even less assets to $a_{v_1}'$ due to monotonicity. As such, for $a_{v_1}' > a_{v_1}$ to hold, we must have $\mu > p_1-p_1'$.
\end{proof}

This implies that there must exist an $\epsilon>0$ such that $\mu = (p_1-p_1') + \epsilon$. As a side note, observe that non-expansivity then implies $p_2'-p_2 \geq \mu = (p_1-p_1') + \epsilon$, which shows that even in this setting, the payments must satisfy $p_1'+p_2' > p_1 + p_2$.

With $\mu = (p_1-p_1') + \epsilon$, let us further partition the configuration $\phi_1$ into two parts, which will intuitively correspond to the first $p_1-p_1'$ funds of the increase at $s_1$, and the remaining $\epsilon$ funds of the increase. More formally, consider the same difference system $G_{\text{rest}}$ again, and let $\phi_{1,1}$ denote the payment configuration in the solution of this system after setting $e_{s_1}=p_1-p_1'$ (as opposed to $e_{s_1}=\mu=(p_1-p_1') + \epsilon$, as in $\phi_1$). Then let us form another difference system $G_{\epsilon}$ by subtracting the payments $\phi_{1,1}$ from the liabilities in $G_{\text{rest}}$ and setting $e_{s_1}=\epsilon$, and denote the payment configuration in the solution of $G_{\epsilon}$ by $\phi_{1,2}$. The definition implies that this is indeed a partitioning of $\phi_1$, i.e. $\phi_1=\phi_{1,1}+\phi_{1,2}$.

The next idea is then to separately analyze how $\phi_2$, $\phi_{1,1}$ and $\phi_{1,2}$ increases the assets of the sink nodes $t_1$ and $t_2$. Recall again that these $3$ configurations together sum up to $\phi$, i.e. they contain all new payments that result from introducing $p_2'-p_2$ extra funds at $s_2$.

\begin{lemma} \label{lem:s1linear}
Bank $s_1$ is $(t_1, t_2)$-linear on $[p_1, p_1+\epsilon]$.
\end{lemma}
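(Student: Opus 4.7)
My plan is to track the sink-side contribution of each of the three pieces $\phi_2$, $\phi_{1,1}$, $\phi_{1,2}$ in the decomposition $\phi_d = \phi_2 + \phi_{1,1} + \phi_{1,2}$, and to show that the $\phi_{1,2}$-piece by itself delivers exactly $\epsilon$ units to the sinks $t_1,t_2$. Translating this into $(t_1,t_2)$-linearity of $s_1$ on $[p_1,p_1+\epsilon]$ will then conclude the argument. Throughout, I write $S(x,y)$ for the value of $a_{t_1}+a_{t_2}$ in the open system at source funds $(e_{s_1},e_{s_2})=(x,y)$; by the construction of the open system, $S(p_1,p_2)=p_1+p_2$ and $S(p_1',p_2')=p_1'+p_2'$.

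First, I apply non-expansivity inside $G_{\text{raw}}$: there $s_2$ is injected with $p_2'-p_2$ fresh funds and the sink set is $\{t_1,t_2,v_1\}$. The total sink mass is at most $p_2'-p_2$, and since $v_1$ collects exactly $\mu=(p_1-p_1')+\epsilon$ by the definition of $\mu$, the $\phi_2$-piece contributes at most $(p_2'-p_2)-\mu=(p_1'+p_2')-(p_1+p_2)-\epsilon$ to $t_1,t_2$. Second, I argue that $\phi_2+\phi_{1,1}$ together contributes at most $(p_1+p_2)-S(p_1',p_2)$ to $t_1,t_2$: the first $p_1-p_1'$ units of $\mu$ that are captured by $v_1$ in $\phi_2$ and then re-released by $\phi_{1,1}$ cascade through exactly the same outgoing edges of $v_1$ as the cascade triggered by raising $e_{s_1}$ from $p_1'$ to $p_1$ in the baseline state $(p_1',p_2)$, so in the spirit of Lemma~\ref{lem:inferior} their combined sink mass is dominated by $S(p_1,p_2)-S(p_1',p_2)$. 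The bound $S(p_1',p_2)\leq p_1+p_2$ itself follows from monotonicity applied to the decrease $p_1\to p_1'$ at $s_1$.

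The total sink contribution of $\phi_d$ equals $S(p_1',p_2')-S(p_1',p_2)=(p_1'+p_2')-S(p_1',p_2)$, so subtracting the bound from the previous step leaves a lower bound of $(p_1'+p_2')-(p_1+p_2)$ on the sink contribution of $\phi_{1,2}$. Since $p_2'-p_2\geq\mu=(p_1-p_1')+\epsilon$ from Step~1, this is at least $\epsilon$, and combined with the immediate non-expansive upper bound of $\epsilon$ inside $G_\epsilon$, the $\phi_{1,2}$-piece deposits exactly $\epsilon$ at $t_1,t_2$. Finally, I would transfer this equality back to the original state: tightness of all the above inequalities forces every intermediate bank supporting $\phi_{1,2}$ to be in default and to still have unpaid liability on its relevant outgoing edges already in state $(p_1,p_2)$, since otherwise the Step~2 upper bound would be strictly smaller and the chain of equalities would break. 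Hence injecting $\epsilon$ at $s_1$ directly in state $(p_1,p_2)$ follows the same cascade and delivers the full $\epsilon$ to $t_1,t_2$, i.e.\ $S(p_1+\epsilon,p_2)-S(p_1,p_2)=\epsilon$, which is precisely $(t_1,t_2)$-linearity of $s_1$ on $[p_1,p_1+\epsilon]$.

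The main obstacle is Step~2: justifying rigorously that the hybrid configuration $\phi_2+\phi_{1,1}$, which does not itself arise from any single source injection (since $v_1$ receives $\mu$ in $\phi_2$ but re-releases only $p_1-p_1'$ in $\phi_{1,1}$), is dominated in its sink contribution by the cascade generated by raising $e_{s_1}$ from $p_1'$ to $p_1$ in the baseline. This requires a careful matching of unpaid-liability profiles on $v_1$'s outgoing edges via Lemma~\ref{lem:inferior}, and is the technical heart of the argument; the closing step, transferring the tightness of $\phi_{1,2}$ back to state $(p_1,p_2)$, similarly relies on this matching being saturated.
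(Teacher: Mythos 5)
Your decomposition $\phi_d = \phi_2 + \phi_{1,1} + \phi_{1,2}$ and the strategy of balancing sink contributions are exactly the paper's, and your Step~1 non-expansivity bound on $\phi_2$ is correct. However, Step~2 contains a genuine error that propagates to the conclusion. You claim that $\phi_2 + \phi_{1,1}$ \emph{together} contribute at most $(p_1+p_2)-S(p_1',p_2)$ to $t_1,t_2$. But the argument you give (matching the first $p_1-p_1'$ units re-released by $\phi_{1,1}$ against the cascade of raising $e_{s_1}$ from $p_1'$ to $p_1$, via Lemma~\ref{lem:inferior}) only bounds the contribution of $\phi_{1,1}$ \emph{alone}, namely $\phi_{1,1}$ contributes at most $(p_1+p_2)-S(p_1',p_2)$. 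The configuration $\phi_2$ independently pushes flow from $s_2$ through $v_2$ and the rest of the network towards $u_1, u_2$ (and hence to $t_1,t_2$) \emph{without ever passing through $v_1$}; this direct contribution is bounded by $(p_2'-p_2)-\mu$ (your Step~1) but is generically strictly positive, and nothing in your Step~2 accounts for it.

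The symptom of this error: your residual lower bound for $\phi_{1,2}$ becomes $(p_1'+p_2')-(p_1+p_2) = \bigl((p_2'-p_2)-\mu\bigr) + \epsilon$. Combined with the non-expansivity upper bound of $\epsilon$, this would force $(p_2'-p_2)-\mu=0$, i.e.\ that $\phi_2$ deposits nothing at $t_1,t_2$ directly, which is false in general. The fix is simple and is what the paper does: subtract \emph{both} the Step~1 bound $(p_2'-p_2)-\mu$ on $\phi_2$ \emph{and} the corrected Step~2 bound $(p_1+p_2)-S(p_1',p_2)$ on $\phi_{1,1}$ from the total sink increase $(p_1'+p_2')-S(p_1',p_2)$, which yields a residual of exactly $\mu-(p_1-p_1')=\epsilon$ for $\phi_{1,2}$. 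Your closing observation — that transferring the resulting $G_\epsilon$-linearity back to the state $(p_1,p_2)$ deserves an explicit argument about remaining liabilities — is a legitimate subtlety that the paper handles only implicitly (via the concavity/monotonicity machinery of Appendix~\ref{App:A}), but as written your justification for that step leans on the tightness of the faulty Step~2 inequality, so it also needs to be rebuilt once Step~2 is corrected.
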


\begin{proof}
The configuration $\phi_{1,1}$ provides $(p_1-p_1')$ new funds to $s_1$, thus increasing the funds of $s_1$ to $p_1$. Note that $s_1$ also has the same amount of funds in the system before swapping; in fact, the state before swapping is obtained by adding $(p_1-p_1')$ new funds to $s_1$ in the baseline system. However, $\phi_{1,1}$ is defined in a system where the payments of $\phi_2$ are already subtracted from $G_{\text{diff}}$ first. As such, according to Lemma \ref{lem:inferior}, the funds of $\phi_{1,1}$ can only raise the assets of $t_1$, $t_2$ to $a_{t_1} = p_1$ and $a_{t_2} = p_2$ at most. Hence the remaining increase of assets at the sink nodes (which sums up to $p_1'+p_2'-p_1-p_2$ altogether) must be provided by the payments in $\phi_2$ and $\phi_{1,2}$.

Recall that $G_{\text{raw}}$ has $e_{s_2}=p_2'-p_2$, and nodes $v_1$, $t_1$ and $t_2$ are all sinks in this system; as such, due to non-expansivity, we must have $a_{v_1} + a_{t_1} + a_{t_2} \leq p_2'-p_2$ in the configuration $\phi_2$. Since $\mu$ was defined such that $a_{v_1}=\mu$ in this system, we have $a_{t_1} + a_{t_2} \leq p_2'-p_2 - \mu$ in $\phi_2$.

This implies that $\phi_{1,2}$ must contribute at least $(p_1'+p_2'-p_1-p_2) - (p_2'-p_2 - \mu) = \mu -(p_1 - p_1') = \epsilon$ assets to the two sink nodes. Since $e_{s_1}=\epsilon$ in $\phi_{1,2}$, this means that all the assets in $\phi_{1,2}$ must arrive at the sink nodes, and thus $s_1$ is $(t_1, t_2)$-linear on $[p_1, p_1+\epsilon]$.
\end{proof}

\noindent Due to concavity, this also implies that $s_1$ is $(t_1, t_2)$-linear on the preceding interval.

\begin{corollary}
Bank $s_1$ is $(t_1, t_2)$-linear on $[0, p_1+\epsilon]$.
\end{corollary}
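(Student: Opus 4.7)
My plan is to derive this corollary as an almost immediate consequence of Lemma \ref{lem:s1linear} together with the concavity property (Lemma \ref{lem:conc}). The only minor gap is that Lemma \ref{lem:conc} is stated for a single sink $t$, whereas the linearity established in Lemma \ref{lem:s1linear} is with respect to the pair $(t_1, t_2)$. So I first need to explain why concavity extends to linearity on sets of sinks, after which the conclusion follows directly.

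First I would observe that $(t_1, t_2)$-linearity of $s_1$ can be reduced to single-sink linearity by a trivial auxiliary construction: attach a fresh sink $t^*$ to the open system and add two new debt contracts $l_{t_1, t^*} = l_{t_2, t^*} = \infty$. Since $t_1$ and $t_2$ were themselves sinks in the original open system, this modification does not change any payment anywhere else in the network---every unit of assets that previously terminated at $t_1$ or $t_2$ now flows directly onward to $t^*$, and $a_{t^*} = a_{t_1} + a_{t_2}$ in every state. Consequently, $(t_1, t_2)$-linearity of $s_1$ on an interval in the original open system is identical to $t^*$-linearity of $s_1$ on that interval in the modified system.

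Now I would apply Lemma \ref{lem:s1linear}, which gives that $s_1$ is $(t_1, t_2)$-linear on $[p_1, p_1 + \epsilon]$ and hence, via the reduction above, $t^*$-linear on $[p_1, p_1 + \epsilon]$. Invoking Lemma \ref{lem:conc} (concavity) with $x = p_1$ and $\Delta = \epsilon$ then yields that $s_1$ is $t^*$-linear on the whole interval $[0, p_1 + \epsilon]$. Translating back through the auxiliary-sink identification gives that $s_1$ is $(t_1, t_2)$-linear on $[0, p_1 + \epsilon]$, as claimed.

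There is no real obstacle in this argument; the only thing to be careful about is that the concavity lemma is applied in the same baseline state of the open system used throughout this subsection (with $e_{s_2} = p_2$ held fixed, as discussed in the linearity paragraph of Appendix \ref{App:A}). Since Lemma \ref{lem:s1linear} is itself established under precisely that state, concavity applies without any further qualification, and no new technical machinery is required.
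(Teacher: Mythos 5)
Your proposal is correct and follows the same route as the paper: invoke Lemma~\ref{lem:s1linear} and then extend the linearity interval backward using concavity (Lemma~\ref{lem:conc}). The paper applies concavity directly, implicitly treating $(t_1,t_2)$-linearity as a natural generalization of single-sink $t$-linearity; your auxiliary-sink reduction ($t^*$ collecting all of $a_{t_1}+a_{t_2}$) is a clean and valid way to make that implicit step explicit, but it is a technical refinement rather than a different argument.
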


Note that the definition of the open system ensures that all the funds of $s_1$ go directly to $v_1$ in the first step, so this also implies that $v_1$ is also $(t_1, t_2)$-linear.

Finally, this implies that all the new assets at $s_2$ will arrive at the sink nodes, too.

\begin{lemma}
Bank $s_2$ is $(t_1, t_2)$-linear on $[p_2, p_2']$.
\end{lemma}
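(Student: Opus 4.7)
My plan is to leverage the tight budget equalities already forced by the proof of Lemma~\ref{lem:s1linear}, together with one concavity step, to show that $\phi$ delivers exactly $p_2'-p_2$ units of assets to the pair $t_1, t_2$, which is precisely the definition of $(t_1,t_2)$-linearity for $s_2$ on $[p_2,p_2']$.

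First I would record what the equality-tracking inside Lemma~\ref{lem:s1linear} actually pins down. The three upper bounds $\phi_{1,1}(a_{t_1}+a_{t_2}) \leq p_1+p_2-B$ (from Lemma~\ref{lem:inferior}), $\phi_2(a_{t_1}+a_{t_2}) \leq p_2'-p_2-\mu$ (from non-expansivity applied to $G_{\text{raw}}$), and $\phi_{1,2}(a_{t_1}+a_{t_2}) \leq \epsilon$ (from non-expansivity applied to $G_\epsilon$) already sum to the exact contribution $\phi(a_{t_1}+a_{t_2})$, using $\mu = (p_1-p_1')+\epsilon$. Hence each of the three inequalities must be an equality, not merely the $\epsilon$-bound that was needed to finish Lemma~\ref{lem:s1linear}. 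In particular, $\phi_2$ delivers exactly $p_2'-p_2-\mu$ units of assets to the two sinks.

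Next I would promote the $(t_1,t_2)$-linearity of $s_1$ obtained in $G_{\text{rest}}$ on the interval $[p_1-p_1',\mu]$ (witnessed by the tight contribution of $\phi_{1,2}$) to the full interval $[0,\mu]$ by applying concavity (Lemma~\ref{lem:conc}) inside $G_{\text{rest}}$. This yields $(\phi_{1,1}+\phi_{1,2})(a_{t_1}+a_{t_2}) = \mu$, since injecting the full $\mu$ units at $s_1$ into $G_{\text{rest}}$ is precisely the combined effect of $\phi_{1,1}$ and $\phi_{1,2}$, and linearity on $[0,\mu]$ guarantees that all of these $\mu$ units reach the sinks.

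Adding the two contributions then yields $\phi(a_{t_1}+a_{t_2}) = (p_2'-p_2-\mu) + \mu = p_2'-p_2$. Since $\phi$ is the difference-system response to injecting exactly $p_2'-p_2$ new funds at $s_2$ in the baseline (with $e_{s_1}=p_1'$ held fixed), and every one of those units reaches $t_1$ or $t_2$, the node $s_2$ is $(t_1,t_2)$-linear on $[p_2,p_2']$, as claimed. The one delicate point I expect is the concavity step: the linearity statement for $s_1$ produced by the equality tracking lives in the auxiliary network $G_{\text{rest}}$ rather than in the open system itself, so I must carefully check that Lemma~\ref{lem:conc} is being invoked on a legitimate financial network in the role of ``the system''. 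Beyond that, the argument is pure bookkeeping of the three sink-side contributions.
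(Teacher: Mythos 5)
Your proposal is correct and follows essentially the same route as the paper: the paper's two-sentence proof cites the equality forced in the budget accounting of Lemma~\ref{lem:s1linear} (so $\phi_2$ contributes exactly $p_2'-p_2-\mu$ to the sinks) and then invokes the preceding Corollary that $v_1$ is $(t_1,t_2)$-linear, which is precisely your concavity promotion of $s_1$'s linearity from $[p_1-p_1',\mu]$ to $[0,\mu]$ inside $G_{\text{rest}}$. The delicate point you flag is fine: $G_{\text{rest}}$ is a legitimate financial network with $s_1$ as a source and $t_1,t_2$ as sinks, so Lemma~\ref{lem:conc} applies directly, and your bookkeeping then yields $\phi_d$-contribution $=(p_2'-p_2-\mu)+\mu=p_2'-p_2$, exactly as the paper concludes.
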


\begin{proof}
The proof of Lemma \ref{lem:s1linear} shows that the total increase of $(p_1'+p_2'-p_1-p_2)$ can only be obtained if $\phi_2$ contributes $(p_2'-p_2) - \mu$ assets directly to $t_1$ and $t_2$, and the remaining $\mu$ assets go to bank $v_1$. Since $v_1$ is $(t_1, t_2)$-linear, this means that all the $(p_2'-p_2)$ new funds of $s_2$ end up at $t_1$ or $t_2$.
\end{proof}

Once again, concavity implies that $s_2$ is $(t_1, t_2)$-linear on $[0, p_2']$.

Given these linearity properties, we can finish our proof of Theorem \ref{th:nopos_base} along the same lines as the proof of Lemma \ref{lem:bothinc}: we can select a small increase of $\Delta_1$ and $\Delta_2$ that results in a larger clearing vector in the closed system before swapping. The arguments of Appendix \ref{sec:l_limits} on the liabilities can be applied in this case, too.

\subsection{Proportional shocks}

Finally, we use a similar line of thought to prove Theorem \ref{th:nopos_prop}, i.e. that there is also no positive swap in the proportional shock model. As already discussed in Section \ref{sec:prop}, the only technically involved case in the proof is when a swap provides $f'_{v_1}(\lambda_1)>f_{v_1}(\lambda_1)$ and $f'_{v_2}(\lambda_1)=f_{v_2}(\lambda_1)$ for $\lambda_1$, and $f'_{v_1}(\lambda_2)=f_{v_1}(\lambda_2)$ and $f'_{v_2}(\lambda_2)>f_{v_2}(\lambda_2)$ for $\lambda_2$. Let us assume w.l.o.g. that $\lambda_1<\lambda_2$.

\renewcommand{\proofname}{Details for the proof of Theorem \ref{th:nopos_prop}.}

\begin{proof}
The main idea of the proof is to show that while semi-positive swaps are actually possible in the base model (recall the example from Figure \ref{fig:semipos}), they can only happen in rather restricted cases.  We can prove this in an analogous way to the proof of Theorem \ref{th:nopos_prop}.

In particular, we cannot have $p_1' \leq p_1$ and $p_2' \leq p_2$, because then monotonicity would imply that the swap is not beneficial for either of the banks. As such, one of the acting nodes receives strictly more payment on the swapped contract after swapping; assume again w.l.o.g. that this is $v_2$, i.e. $p_2'>p_2$. Note that we cannot have $p_1'>p_1$ simultaneously to this, since this would imply $a_{v_1}' > a_{v_1}$, so the swap would be positive. Hence the only way to have a semi-positive swap is to have $p_1' \leq p_1$ and $p_2' > p_2$, resulting in $a_{v_1}' = a_{v_1}$ and $a_{v_2}' > a_{v_2}$.

Once again, we can define a difference system $G_{\text{diff}}$ when increasing $e_{s_2}$ from $p_2$ to $p_2'$ in the baseline system, and then a system $G_{\text{raw}}$ to identify the amount of raw assets $\mu$ that are sent from $s_2$ to $v_1$ in this increase. We can then use the same argument as in the proof of Lemma \ref{lem:mu} to show that we must have $\mu \geq (p_1 - p_1')$; otherwise, $\mu < (p_1 - p_1')$ would imply $a_{v_1}' < a_{v_1}$.

Note that we also cannot have $\mu > (p_1 - p_1')$: in this case, we could use the proof of Lemma \ref{lem:s1linear} to show that the last $\mu - (p_1 - p_1')$ units of funds would have to contribute $\mu - (p_1 - p_1')$ assets to $t_1$ and $t_2$; this would again make $v_1$ $(t_1,t_2)$-linear, thus leading to the same contradiction as in case of positive swaps. As such, a semi-positive swap can only happen if we have $\mu = p_1 - p_1'$ exactly. This implies that in the solution $\phi$ of $G_{\text{raw}}$, the remaining $p_2' - p_2 - \mu = (p_1' + p_2') - (p_1 + p_2)$ new funds must all arrive at the sink nodes to trigger the desired increase of $(p_1' + p_2') - (p_1 + p_2)$.

As such, semi-positivity does not imply that $s_2$ is $(t_1,t_2)$-linear, but implies something almost as good: that the new funds introduced in $s_2$ can be partitioned into a part that is $(t_1,t_2)$-linear, and into a part that directly ends up in $v_1$ (or in other words: $s_2$ is $(t_1, t_2, v_1)$-linear in $G_{\text{raw}}$ where $v_1$ is artificially turned into a sink node). This will allow for the same kind of proof technique as before, because in the closed version of the system, this still means that essentially any new funds at $v_2$ will arrive at either $v_1$ or $v_2$.

Now let us consider proportional shocks again. Note that all the linearity properties we have discussed (including $(t_1, t_2, v_1)$-linearity in $G_{\text{raw}}$) are preserved if we scale down the funds of each bank in the system proportionally. This implies that these linearites for bank $v_1$ which follow from semi-positivity in case of $\lambda_1$ also carry over to the case of $\lambda_2$. As such, for the larger parameter $\lambda_2$, we have such a linearity in both directions: $v_1$ is $(t_1,t_2)$-linear apart from some assets that go directly to $v_2$, and $v_2$ is $(t_1,t_2)$-linear apart from some assets that go directly to $v_1$.

We again use the technique from the proof of Lemma \ref{lem:bothinc} to show that this is a contradiction in the case when the shock size is $\lambda_2$. Note that in the closed system (before the swap), the assets provided to $t_2$ and the raw assets going directly to $v_2$ will both end up at $v_2$, so we can cover them with the same coefficient. That is, we can just consider the two constants $\alpha_1$ and $\beta_1$ that fulfill the following roles for a small increase $\delta$ in $e_{s_1}$: (i) the assets of $t_1$ increase by $\alpha_1 \cdot \delta$, and (ii) the sum of the asset increase at $t_2$ and the new raw assets provided to $v_2$ in the open system is altogether $\beta_1 \cdot \delta$. Given these constants, we can use the same method to select appropriate values $\Delta_1$ and $\Delta_2$ that provide a slightly larger clearing vector in the system.

Finally, as in Appendix \ref{sec:l_limits} before, let us discuss the edge cases when the swapped contracts do not have any remaining liabilities to relay such an increase. If we have $p_2=p_1'=l_{u_2, v_2}$, then the increase at $s_1$ cannot contribute any assets to $t_2$. In this case, if we have $\beta_1>0$ nonetheless (there are raw assets going directly to $v_2$), then the same proof works for without any modification. Otherwise we have $\alpha_1=1$, and hence a choice of $\Delta_1=\delta$ and $\Delta_2=0$ suffices, assuming that $p_1<l_{u_1, v_1}$. Finally, if both swapped contracts are fully paid (this can indeed happen in this case, just consider Figure \ref{fig:semipos} with the modification $l_{u_1, v_1}=\frac{1}{2}$), then the increase at either of the source nodes cannot contribute to $t_1$ or $t_2$, and hence an increase of $\delta$ at $s_1$ (or $s_2$, respectively) must provide $\delta$ raw assets going to $v_2$ (and $v_1$, respectively). In this case, there is no need for the swapped contracts; a choice of $\Delta_1=\Delta_2=\delta$ provides a larger clearing vector, even in the open version of the system.
\end{proof}

\renewcommand{\proofname}{Proof.}

\section{Adapting our results to the worst-sum model} \label{App:D}

We now discuss how to adapt the proofs in Section \ref{sec:worstcase} to the worst-sum shock model. Since the two models both assume a worst case, the base ideas of the proofs will remain similar, and we mostly only need to execute some technical changes.

We first discuss an important property of the worst-sum model that will be helpful in the proofs of our theorems.

\begin{lemma} \label{lem:nosplit}
Given a worst-sum shock of a specific size $\rho$ for any node $v$, there is at most $1$ bank that loses only a fraction of its funds, i.e. each other bank $w$ either has $e_w=0$ after the shock, or it does not lose any funds in the shock.

If there are multiple possible worst-sum shocks of size $\rho$ for $v$, then there exists at least one of them that fulfills this property.
\end{lemma}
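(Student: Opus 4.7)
The plan is to prove the lemma via an iterative redistribution argument based on the joint concavity of $a_v$ in the vector of external funds. I first would extend the single-variable concavity of Appendix \ref{App:A} to joint concavity: given two fund vectors $e, e'$ with maximal clearing vectors $r, r'$, the pointwise average $\bar{r} = \frac{1}{2}(r+r')$ is a subclearing vector for $\bar{e} = \frac{1}{2}(e+e')$, because the defining inequality $r_u l_u \leq e_u + \sum_w r_w l_{w,u}$ is linear in both $r$ and $e$. By maximality of the Eisenberg--Noe clearing vector, each bank's assets under $\bar{e}$ are at least $\frac{1}{2}(a_u + a'_u)$, so $a_v$ is concave in the full external-funds vector.

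Now take any worst-sum shock $\vec{x}^\ast$ of size $\rho$ with at least two partially hit banks $u_1, u_2$, meaning $0 < x_{u_i}^\ast < e_{u_i}$ for $i=1,2$. I parameterize a one-dimensional family of admissible shocks by a real $t$: set $x_{u_1}(t) := x_{u_1}^\ast + t$ and $x_{u_2}(t) := x_{u_2}^\ast - t$ while keeping every other coordinate fixed, and let $t$ range over $[-t_{\min}, t_{\max}]$, where the endpoints are chosen maximally subject to $x_{u_1}(t) \in [0, e_{u_1}]$ and $x_{u_2}(t) \in [0, e_{u_2}]$ throughout. By the partiality assumption, $t_{\min}, t_{\max} > 0$; and since $x_{u_1}(t) + x_{u_2}(t)$ is constant in $t$, every $t$ in this interval corresponds to a valid shock of total size $\rho$. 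Let $A(t) := a_v^{(G_\rho)}$ under this parameterized shock; as the composition of the concave function $a_v(\cdot)$ with an affine map in $t$, $A$ is concave on $[-t_{\min}, t_{\max}]$.

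Since $\vec{x}^\ast$ is worst-sum, $A(0) \leq A(t)$ for all $t$ in the interval, so $A$ attains its minimum at the interior point $t = 0$. Writing $0 = \lambda(-t_{\min}) + (1-\lambda)t_{\max}$ with $\lambda = t_{\max}/(t_{\min}+t_{\max}) \in (0,1)$, concavity gives $A(0) \geq \lambda A(-t_{\min}) + (1-\lambda) A(t_{\max}) \geq A(0)$, which forces $A(-t_{\min}) = A(0) = A(t_{\max})$. Hence both endpoints also give worst-sum shocks of size $\rho$, and by maximality of $t_{\min}$ and $t_{\max}$, at each endpoint at least one of $u_1, u_2$ has saturated a boundary constraint and is no longer partially hit. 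Iterating this redistribution step at most $|B|$ times yields a worst-sum shock of size $\rho$ with at most one partially hit bank, proving the claim. The main obstacle I foresee is the joint-concavity extension, since Appendix \ref{App:A} only records single-variable concavity; the short subclearing argument above bridges this gap, after which the rest of the proof is a routine convexity-redistribution step, with the total shock size $\rho$ preserved automatically along the one-parameter family.
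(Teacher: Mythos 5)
Your proof is correct, and it takes a genuinely different (and cleaner) technical route than the paper's. The paper works only with the \emph{single-variable} concavity of $a_v$ as a function of one bank's funds $e_w$ (the property recorded in Appendix~\ref{App:A}): it compares the marginal slopes $\varphi_1,\varphi_2$ of the two loss functions at the current operating point, argues that pushing more shock onto the steeper coordinate and pulling it off the shallower one cannot raise $a_v$, and then has to append an explicit two-stage fix-up because the two loss functions interact (decreasing $\rho_1$ changes the shape of the loss function in $\rho_2$ and vice versa). You sidestep that interaction entirely by first establishing \emph{joint} concavity of $a_v$ in the full external-funds vector via the subclearing/Knaster--Tarski observation that $\frac{1}{2}(r+r')$ is a subclearing vector for $\frac{1}{2}(e+e')$, so the maximal clearing vector at the average dominates the average of the two maximal clearing vectors. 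Once joint concavity is in hand, your one-parameter redistribution along the affine segment $x(t)$ is a self-contained convexity argument: $A(t)$ is concave, minimized at the interior point $t=0$, hence constant on $[-t_{\min},t_{\max}]$, and both endpoints are again worst-sum shocks with strictly fewer partially hit banks. Both proofs share the same high-level redistribution idea and the same termination argument; what your version buys is that it replaces the paper's somewhat delicate sequential slope bookkeeping with a single global convexity statement, at the modest cost of the joint-concavity lemma, which you correctly supply.
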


\begin{proof}
Recall from the general properties of our financial systems that the assets $a_v$ can only depend in a specific way on the funds $e_w$ of another bank $w$: as we decrease the funds of $w$ from an original value $e_w$ to $0$, $a_v$ is a concave, monotonically decreasing piecewise linear function.

Now assume that two banks $w_1$ and $w_2$ are both partially hit by the shock, losing $\rho_1$ and $\rho_2$ funds respectively. Consider the slope (first derivative) $\varphi_1$ and $\varphi_2$ of the loss functions at the two specific points (if the points are the breakpoints of the piecewise linear functions, we can simply take the average of the slopes of the line segments before and after the breakpoints).

Assume w.l.o.g. that $\varphi_1 \leq \varphi_2$. Note that due to the convexity of both loss functions, this implies that the loss function with respect to $e_{w_1}$ has a slope of at most $\varphi_1$ for each $\hat{\rho}_1 < \rho_1$, and the loss function with respect to $e_{w_2}$ has a slope of at least $\varphi_2$ for each $\hat{\rho}_2 > \rho_2$. This means that we can redistribute the shock, taking away more funds from $w_2$ and less funds from $w_1$, and any such change can only reduce (or at least not increase) the final value of $a_v$. If we do this until we reach $\rho_1=0$ or $\rho_2=e_{w_2}$, we obtain a new shock with $a_v$ at most as much as before, but a strictly smaller number of nodes are hit partially by the shock. Executing this step repeatedly gives a shock distribution where at most one bank is hit partially.

We note that formally, the proof is slightly more complicated, since the two loss functions of $v$ (as a function of $e_{w_1}$ and as a function of $e_{w_2}$) might also depend on each other. Assuming that both $w_1$ and $w_2$ are partially hit by the shock (losing $\rho_1$ and $\rho_2$ funds), we need to consider (i) the loss of $v$ as a function of $e_{w_1}$ for this fixed value $\rho_2$, and (i) the loss of $v$ as a function of $e_{w_2}$ for this fixed value $\rho_1$. We can once again consider the bank $w_i$ where the loss function is less steep at the chosen point (e.g. $w_1$), and reduce $\rho_1$ until we reach $\rho_1=0$ or $\rho_1=e_{w_2}-\rho_2$. If we keep $\rho_2$ fixed during this operation, then the loss function with respect to $w_1$ does not change, so our arguments about steepness still holds.

Then we consider the second step of increasing $\rho_2$ with regard to this new reduced $\rho_1$. In this second step, there are less funds in the system since $e_{w_1}$ was reduced; this can only make the loss function with respect to $w_2$ steeper at each point, since there might be more banks now that cannot fulfill their obligations. As such, the second step of increasing $\rho_2$ is still valid, since the function slope at every higher point is still larger than it was at $\rho_2$ originally.
\end{proof}

\noindent Now we can consider the specific results from Section \ref{sec:worstcase}.

\begin{theorem}
Lemmas \ref{lem:badforw} and \ref{lem:badforu} also hold in the worst-sum model.
\end{theorem}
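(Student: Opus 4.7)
The plan is to recycle the two networks from Figures \ref{fig:badforw} and \ref{fig:badforu} essentially unchanged, and show that they also serve as witnesses in the worst-sum model. The crucial reduction is Lemma \ref{lem:nosplit}: for any target bank and any shock size $\rho$, there is a worst shock in which at most one bank is hit partially. In both example networks, the only banks with nonzero funds are $s_1$ and $s_2$, so any worst-sum shock of size $\rho \le e_{s_1} + e_{s_2}$ is characterized by a pair $(\rho_1, \rho_2)$ with $\rho_1 + \rho_2 = \rho$, $0 \le \rho_i \le e_{s_i}$, where without loss of generality one of $\rho_1, \rho_2$ is either $0$ or equal to $e_{s_i}$. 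This reduces the continuous optimization to a one-dimensional search over which source is drained first.

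For Lemma \ref{lem:badforw}, I would first verify that the swap remains positive for $v_1$ and $v_2$ in the worst-sum model on the network of Figure \ref{fig:badforw} (possibly after a uniform rescaling of funds to avoid awkward fractions). To do this I would compute the assets $a_{v_1}(\rho_1,\rho_2)$ and $a_{v_2}(\rho_1,\rho_2)$ in closed form as piecewise linear functions, minimize each over the admissible $(\rho_1,\rho_2)$, and check that the minimum strictly increases for every $\rho$ in some interval after the swap while never decreasing. Then I would exhibit an explicit $\rho^*$ (the analogue of $k=1$ in the worst-set argument) at which the worst shock for $w$ before the swap is ``spread'' across both sources (so that $w$ still receives payment through one of the two paths), while after the swap concentrating the same $\rho^*$ on a single source kills both paths to $w$ simultaneously, giving $f_w'(\rho^*) < f_w(\rho^*)$. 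The total shock budget $\rho^*$ should be chosen slightly above the threshold at which a single source starts to default, so that in the swapped (cross-connected) network, putting all of $\rho^*$ on $s_2$ sends both $v_1$ and $v_2$ partially into default and thereby starves $w$.

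For Lemma \ref{lem:badforu}, I would follow the same template on the network of Figure \ref{fig:badforu}, but with the extra care needed because the post-swap network contains a directed cycle through $v_1$. For each choice of $(\rho_1,\rho_2)$ I would solve the fixed-point equation for the recovery rate on the cycle (exactly as in the worst-set proof where $a'_{v_1} = 1 + \tfrac{1}{3}\cdot\tfrac{1}{2}\cdot a'_{v_1}$), obtaining $a'_{u_2}$ as an explicit piecewise linear function of $\rho_1$. I would again verify positivity of the swap for $v_1,v_2$ by direct comparison at every $\rho$, and then identify a value of $\rho$ at which the worst distribution for $u_2$ after the swap forces $u_2$ into the cycle's feedback loop and makes $f_{u_2}'(\rho) < f_{u_2}(\rho)$.

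The main obstacle is not conceptual but computational: one must carefully verify a family of inequalities between piecewise linear functions of $\rho$, and ensure that the worst-sum minimizers are located correctly on each piece. The risk is that one of the two example networks, while working at the integer shock sizes used in the worst-set proof, has some intermediate $\rho$ at which the post-swap shock function accidentally dips below the pre-swap one for $v_1$ or $v_2$, breaking positivity. If this happens, the fix is mild: add a $1$-fix-style padding gadget (as in the proof of Theorem \ref{th:treepos}) to $s_1$ and $s_2$ so that small $\rho$ does nothing, then rescale so that the interesting transitions line up on the same breakpoint before and after the swap. By Lemma \ref{lem:nosplit}, only finitely many breakpoints matter for each $\rho$, so the whole verification reduces to checking a finite list of linear comparisons.
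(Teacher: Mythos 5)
Your use of Lemma~\ref{lem:nosplit} to reduce the worst-sum minimization to a one-dimensional search over ``which source is drained first'' is correct and matches the paper's approach. But the proposal has a genuine gap in its premise and in its fallback fix.

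The premise that the networks of Figures~\ref{fig:badforw} and~\ref{fig:badforu} can be recycled ``essentially unchanged'' is false, and the failure mode is not a subtle intermediate-$\rho$ dip that might or might not occur --- it provably occurs for \emph{both} networks. The paper itself points this out for Figure~\ref{fig:badforu}: for $\rho \in (0,\frac{2}{3})$, the pre-swap shock function of $v_2$ is flat (the shock can only hit $s_2$, which has a cushion), while the post-swap shock function of $v_2$ already decreases (because $v_2$ is now exposed to $s_1$, which has no cushion), so positivity fails immediately. The same asymmetry kills Figure~\ref{fig:badforw}: there $s_1$ has funds $4$ and liabilities $2$ (cushion $2$), while $s_2$ has funds $4$ and liabilities $4$ (cushion $0$). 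Before the swap, $f_{v_1}(\rho)=2$ for $\rho \le 2$; after the swap, the worst shock of any small $\rho$ concentrates on $s_2$, giving $f'_{v_1}(\rho)=2-\rho/4 < 2$. The root cause is that the two source nodes are not symmetric, and a swap exposes the ``protected'' acting node to the unprotected source.

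Your fallback --- ``add a $1$-fix-style padding gadget and rescale'' --- does not repair this asymmetry. A $1$-fix gadget in the paper's sense provides a shock-immune constant inflow to an acting node; that constant is the same before and after the swap, so it shifts both shock functions identically and cannot restore the inequality. Rescaling all funds uniformly likewise preserves the relative slopes and breakpoints and so cannot flip the sign of any comparison. What the paper actually does is \emph{symmetrize the sources}: for Figure~\ref{fig:badforw} it adds a new sink with a weight-$2$ debt from $s_1$ so that $l_{s_1}=l_{s_2}=4$, and raises $e_{s_1}=e_{s_2}=8$, making both sources identical in funds and liabilities; for Figure~\ref{fig:badforu} it raises $e_{s_1}$ to $2+\frac{2}{3}$, giving $s_1$ exactly the same default threshold as $s_2$. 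These are targeted changes to the source-side cushions, not padding of the acting nodes, and they are necessary rather than optional. So while your high-level reduction and your instinct that something could break are both right, the concrete strategy --- reuse the networks as-is and fall back to $1$-fix padding --- would not carry the proof through.
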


\begin{proof}
For Lemma \ref{lem:badforw}, consider the network topology of Figure \ref{fig:badforw} in the worst-sum setting, but with some slight modifications: we add another outgoing debt of weight $2$ from $s_1$ to a new sink node $t$ to make the situation of the two source nodes symmetrical. We also set $e_{s_1}=8$ and $e_{s_2}=8$ in this new system.

One can observe that the shock functions of $v_1$ and $v_2$ in this system are the same as the top-row worst total shock function in Figure \ref{fig:motive}, but scaled to twice the original width along the $x$ axis; this holds both before and after the swap, so the swap is indeed still positive for $v_1$ and $v_2$.

On the other hand, the original shock function of $w$ is the same as the bottom-row worst total shock function in Figure \ref{fig:motive}, now scaled to twice its original size along both axes. After the swap, the shock function of $w$ becomes the piecewise linear function consisting of the segments $(0,4)-(4,4)$, $(4,4)-(8,1)$, $(8,1)-(12,1)$ and $(12,1)-(16,0)$, so it indeed satisfies $f_w'<f_w$.

The construction for Lemma \ref{lem:badforu} is more difficult to adapt to this setting. Note that the swap in the original version of this network is not positive in the worst-sum case, since a shock of size $\rho \in (0, \frac{2}{3})$ is harmless to $v_2$ before the swap ($s_2$ can still fulfill its obligations), but it reduces the assets of $v_2$ after the swap. This value $\frac{2}{3}$ turns out to be a crucial threshold in this system, since this is the amount of loss after which $s_2$ starts making less payments. As such, for our proof, let us consider a modified version of Figure \ref{fig:badforu} where we also increase the assets of $s_1$ by this amount, i.e. we select $e_{s_1}=2+\frac{2}{3}$.

Let us first consider the shock functions before the swap in this new system. Bank $v_1$ loses no funds until $\rho=\frac{2}{3}$, then all funds until $\rho=2+\frac{2}{3}$, giving the piecewise linear function $(0,2)-(\frac{2}{3},2)-(2+\frac{2}{3},0)$. Bank $v_2$ loses no funds until $\rho=\frac{2}{3}$, and then its funds drop linearly until $\rho=2$; at this point, it only receives the $\frac{2}{3}$ units of money that are indirectly coming from $s_1$. From this point, its assets remain fixed for another $\frac{2}{3}$ units of shock (while the extra funds of $s_1$ are depleted), and then they drop linearly to $0$. This defines the function $(0,2)-(\frac{2}{3},2)-(2,\frac{2}{3})-(2+\frac{2}{3},\frac{2}{3})-(4+\frac{2}{3},0)$. The assets of $u_2$ are simply half of the assets of $v_2$ for any $\rho$.

Now consider the system after the swap. This system is more complex to analyze since all $3$ banks indirectly receive assets of $3$ different kinds: those originating from $e_{s_1}$, those originating from $e_{s_2}$, and the ones coming on the backward edge (which, as one can compute, carries a payment of $\frac{1}{5} \cdot \left( e_{s_2} + \min(e_{s_1}, 2) \right)$ from the point where $s_2$ goes into default). Once again it holds that none of our $3$ bank lose assets until $\rho \leq \frac{2}{3}$. One can observe that after this point, the shock is worse for all $3$ banks if we start depleting the funds of $s_2$ first; intuitively, this is because the first $\frac{2}{3}$ units of reduction at $s_2$ also decrease the payment coming on the backward edge, while the first $\frac{2}{3}$ units of reduction at at $s_1$ do not.

As such, the shock functions are as follows. Both $v_1$ and $v_2$ lose no funds until $\rho=\frac{2}{3}$, and when $s_2$ has lost all of its funds, they both have assets of $\frac{6}{5}$. From this point, the extra funds of $s_1$ are removed for another $\frac{2}{3}$ units, and then the remaining assets are lost linearly until $\rho=4+\frac{2}{3}$. This defines the shock function $(0,2)-(\frac{2}{3},2)-(2,\frac{6}{5})-(2+\frac{2}{3},\frac{6}{5})-(4+\frac{2}{3},0)$ for both of the acting nodes. Bank $u_2$, on the other hand, has assets of $\frac{1}{5}$ when the funds of $s_2$ are all removed, so it has the shock function $(0,2)-(\frac{2}{3},2)-(2,\frac{1}{5})-(2+\frac{2}{3},\frac{1}{5})-(4+\frac{2}{3},0)$. This is indeed a strictly better function for $v_1$ and $v_2$, and a strictly worse one for $u_2$.
\end{proof}

\noindent We continue with our results that study the worst-case models from a computational perspective.

\begin{theorem}
Theorem \ref{th:hardness} also holds in the worst-sum model.
\end{theorem}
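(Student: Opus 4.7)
The plan is to adapt the worst-set reduction from densest-$k$-subgraph so that a worst-sum shock of a specific total size $\rho$ is forced to behave exactly like a worst-set shock of size $k$. Given an input graph $H$ for densest $k$-subgraph, I would build essentially the same network as in the proof of Theorem~\ref{th:hardness}: a source $s_z$ for each vertex $z$ of $H$, and an intermediate bank $u_{z_1,z_2}$ for each edge, with unit incoming debts from $s_{z_1}$ and $s_{z_2}$ and a unit outgoing debt to a common sink $v$. The crucial modification is to equalize source funds: set $e_{s_z} = M$ for every $z$, where $M := \max_z \deg(z)$, and compensate by adding a dummy outgoing debt of weight $M - \deg(z)$ from $s_z$ to a private dummy sink $t_z$ with $e_{t_z} = 0$. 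The reduction then queries $f_v(\rho)$ with $\rho = k \cdot M$.

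To verify correctness, I would apply Lemma~\ref{lem:nosplit}: some worst shock of size $\rho$ hits at most one bank fractionally. Since only source banks carry any funds (the dummy sinks, intermediate banks, and $v$ all have $e=0$), the shock $\rho$ must be absorbed entirely by sources. The equation $jM + \alpha = kM$ with $j$ a nonnegative integer and $\alpha \in (0, M)$ has no solution, so in fact no bank is fractionally hit and the worst shock fully zeros out exactly $k$ sources. At this point the analysis reduces verbatim to the worst-set case from the original proof: the loss at $v$ equals the number of edges of $H$ with both endpoints in the chosen $k$-subset, and so computing $f_v(kM)$ recovers the densest $k$-subgraph value, giving the NP-hardness.

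The main technical hazard I anticipate is verifying that the dummy-debt padding does not disturb the reduction. Each dummy sink $t_z$ is a leaf, so it only passively absorbs the $M - \deg(z)$ surplus and has no effect on payments reaching $v$; and each unhit source $s_z$ has total liability $M$ matching its funds $M$, so its recovery rate is $1$ and its unit payment to every incident $u_{z,z'}$ matches the original construction. As a sanity check independent of Lemma~\ref{lem:nosplit}, one can directly verify that the loss at $v$ equals $\sum_{(z_1,z_2) \in E(H)} \max(0,\, x_{z_1} + x_{z_2} - 1)$, where $x_z \in [0,1]$ denotes the fractional loss at $s_z$ and $\sum_z x_z = k$; this is a convex function maximized over a polytope, hence its optimum lies at a $0/1$ vertex, i.e. at a subset of exactly $k$ vertices of $H$, confirming the equivalence with densest $k$-subgraph.
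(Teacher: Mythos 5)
Your proposal is correct and takes essentially the same route as the paper: equalize all source funds to the maximum degree, query $\rho = k \cdot D$, and use Lemma~\ref{lem:nosplit} together with the divisibility constraint to conclude that the worst-sum shock fully zeroes out exactly $k$ sources, reducing the computation to densest $k$-subgraph. The paper dispenses with your dummy-sink padding (Lemma~\ref{lem:nosplit} already handles sources left with surplus funds), but your independent convexity argument --- that the loss at $v$ equals $\sum \max(0,\, x_{z_1}+x_{z_2}-1)$ over all edges $(z_1,z_2)$ of $H$, a convex function maximized over the hypersimplex and hence at a $0/1$ vertex --- is a nice self-contained check that bypasses Lemma~\ref{lem:nosplit} entirely for this particular construction.
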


\begin{proof}
We can essentially apply the same reduction idea as in the worst-set model, with a minor modification: let $D$ denote the maximal degree in the input graph $H$, and instead of setting $e_s=\text{deg}_z$ for a vertex $z$, we set $e_s=D$ for each source bank uniformly.

We then consider the problem with a shock of size $\rho=k \cdot D$. Due to Lemma \ref{lem:nosplit}, we know that the worst shock of size $\rho$ will hit exactly $k$ distinct sources in this network, and remove their funds completely. Hence the loss of assets at $v$ is again exactly the number of edges covered by the densest $k$ subgraph, so we can apply the same reduction as before.
\end{proof}

The following Corollary \ref{cor:constsmall} is not straightforward to adapt to this case, since the shock size $\rho$ does not directly limit the number of banks that are hit by the shock: even with Lemma \ref{lem:nosplit}, it is possible that the shock is distributed among a high number of banks if they all have significantly less funds than $\rho$. However, the corollary does carry over if we exclude such cases, e.g. if it holds for some constant $k$ that all banks $u$ in the system have either $e_u \geq \frac{\rho}{k}$ or $e_u=0$.

Devising a dynamic programming algorithm for the special case of tree networks is also a more complex problem in the worst-sum case. Note that similarly to Theorem \ref{th:treedp}, if have we two debtors $w_1$ and $w_2$ with already known shock functions, then we can still compute $f_u(\rho)$ for a specific value $\rho$. Assume that the worst shock of size $\rho$ can be generated by two shocks $f_{w_1}(\rho_1)$ and $f_{w_2}(\rho_2)$ for some $\rho_1+\rho_2=\rho$. One can observe that either $\rho_1$ or $\rho_2$ must be a breakpoint of the corresponding piecewise linear function, since otherwise we can redistribute some units of shock to the steeper function to obtain a larger total shock (in case of identical slopes, we can also do this until $\rho_1$ or $\rho_2$ becomes a breakpoint). As such, in order to find $f_u(\rho)$, it suffices to consider the sums $f_{w_1}(\rho_1)+f_{w_2}(\rho-\rho_1)$ for all breakpoints $\rho_1$ of $f_{w_1}$ and the sums $f_{w_2}(\rho_2)+f_{w_1}(\rho-\rho_2)$ for all breakpoints $\rho_2$ of $f_{w_2}$, and select the smallest one among them.

The problem with this method, however, is that even though the worst-sum shock functions are piecewise linear, it is non-trivial to prove that they only consist of polynomially many segments; without this, even the representation of these functions becomes problematic. Intuitively, this is because when we merge $f_{w_1}$ and $f_{w_2}$ at $u$, then some breakpoints of the new $f_u$ might not directly relate to breakpoints of $f_{w_1}$ and $f_{w_2}$. For example, the segment following $f_u(\rho)$ may not be the entire next segment of $f_{w_1}(\rho_1)$ or $f_{w_2}(\rho_2)$; it can happen that somewhere within this segment, it becomes more optimal to switch back to a previous breakpoint $\hat{\rho}_1<\rho_1$ of $f_{w_1}$ and redistribute the remaining assets to $f_{w_2}$ instead. Since the number of such intersection points is not straightforward to upper bound, we leave it to future work to conduct a more thorough survey of this shock function from a computational perspective. 

However, we note that if the shock function of every bank consists of e.g. $O(n)$ breakpoints only, then the dynamic programming approach can already be adapted to this setting. If we know the value of $f_u(\rho)$ and the corresponding $\rho_1$ and $\rho_2$, then $f_u$ continues after $\rho$ with the steeper one of the functions $f_{w_1}(\rho_1)$ and $f_{w_2}(\rho_2)$ at the given point. The next breakpoint in $f_u$ is either obtained from the endpoint $\hat{\rho}$ of the respective segment in $f_{w_1}$ or $f_{w_2}$, or it can happen even before $\hat{\rho}$ if it is a previously described intersection point (obtained by redistributing the first $\rho$ units of shock). 
To find the earliest such intersection point, we need to revisit all previous breakpoints of $f_{w_1}$ and $f_{w_2}$: e.g. for a breakpoint $\hat{\rho}_1<\rho_1$ of $f_{w_1}$, we must check whether the shock combined from $f_{w_1}(\hat{\rho}_1)$ and $f_{w_2}(\hat{\rho}-\hat{\rho}_1)$ intersects our segment on the interval $[\rho, \hat{\rho}]$.

On the other hand, the construction idea for a positive swap in tree networks can be adapted to the worst-sum model with some simple modifications.

\begin{theorem}
Theorem \ref{th:treepos} also holds in the worst-sum model.
\end{theorem}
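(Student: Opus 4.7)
My plan is to transplant the tree construction from the proof of Theorem \ref{th:treepos} into the worst-sum model, adjusting only the gadget parameters and the shock-limit constant. The starting observation is that both building blocks still make sense in worst-sum: the $d$-boolean gadget (with $d$ sources of funds $d$ each, all owing $d$ to a common intermediate that in turn owes $d$ to the acting node) contributes a piecewise-linear shock function that stays at $d$ for $\rho \leq d(d-1)$ and then declines linearly to $0$ at $\rho = d^2$; and a modified $1$-fix gadget with $R + 1$ unit-funded sources is fully untouchable on $[0, R]$, because the intermediate's incoming payment remains at least $1$ whenever at most $R$ units of funds are removed. In particular, the bundled $w$-node combining a $d$-boolean gadget and such a $1$-fix (with joint outgoing contract of weight $d + 1$) remains a legitimate weight-matching swap partner for a plain $(d+1)$-boolean gadget on the other side.

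First I would precisely compute the worst-sum shock function of an arbitrary portfolio of $d$-boolean gadgets at a single bank. The key structural fact is that the adversary's optimal strategy reduces to attacking the gadgets in order of ascending setup cost $d(d-1)$, fully destroying each one before starting the next, since the marginal cost of damage during any active phase is $1$ while the setup phases yield no damage. This makes the resulting shock function piecewise linear with breakpoints determined by the sorted setup costs; adding a $1$-fix contributes a constant $+1$ over the whole domain $[0, R]$. Next I would fix concrete $d$-values (analogous to $\{3, 4, 5, 6\}$ and $\{3, 4, 6, 8\}$ in the original proof) together with a suitably large $R$, describe the swap of the bundled $w$-gadget for its plain counterpart, and compare the before and after piecewise-linear shock functions at their finitely many breakpoints to verify pointwise domination with strict improvement at some $\rho$.

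The hard part will be the parameter choice. In worst-sum the adversary can chain two moderate gadgets at total cost $d_1^2 + d_2^2$, so a naive reuse of the worst-set $d$-values can make the post-swap portfolio strictly easier to attack at intermediate $\rho$, even after crediting the $+1$ from the $1$-fix. The remedy is to widen the spacing between the chosen $d$-values so that the quadratic growth of setup costs $d(d-1)$ outpaces any damage the adversary can extract from the new low-setup gadget introduced by the swap, and to pick $R$ large enough that the $1$-fix's constant contribution remains valid over the entire range where the relevant breakpoints of the two shock functions occur. Once these parameters are chosen appropriately, the domination check is routine.
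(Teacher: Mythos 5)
You correctly identify the main obstruction: if the original $d$-boolean gadget (with $d$ sources of $d$ funds each) is reused, its full-destruction cost in the worst-sum model is $d^2$, quadratic in $d$, so the cost structure of a portfolio $\{d_1, \dots, d_4\}$ is no longer a uniform rescaling of the worst-set cost structure $\{d_1, \dots, d_4\}$. But your proposed remedy --- widen the spacing between the $d_i$ and declare ``the domination check is routine'' --- is exactly the gap. With quadratic costs, the adversary's cost-per-unit-damage for fully destroying a $d$-gadget is $d$, so cheaper gadgets give strictly better value; replacing a $4$-gadget by a $3$-gadget, as the swap does, gives $v_1$ two cheap $\{3,3\}$ gadgets (cost $18$ for damage $6$) where before it had $\{3,4\}$ (cost $25$ for damage $7$), a genuinely different and potentially worse tradeoff. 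Widening the spacing does not cancel the $d$-dependent efficiency mismatch; it only changes which $\rho$-range it appears on, and the $1$-fix offset is a flat $+1$, not enough to dominate pointwise without a concrete and nontrivial parameter search that you have not carried out.

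The paper instead eliminates the mismatch at the source by redesigning the gadget: a $d$-boolean gadget in the worst-sum model is a \emph{single} node $w$ with $e_w = D\cdot d$ for a fixed large constant $D > \max_i d_i$, owing $d$ to the acting node. Its shock profile is flat at $d$ on $[0, d(D-1)]$ and linear down to $0$ at $\rho = dD$, so full-destruction cost is $Dd$, \emph{linear} in $d$. Consequently the worst-sum portfolio shock function is, up to a horizontal rescaling by $D$ plus the explicit ramp segments between consecutive achievable subset sums, exactly the worst-set step function, and the positivity of the original swap carries over verbatim with the \emph{same} parameter sets $\{3,4,5,6\}$ and $\{3,4,6,8\}$. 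The $1$-fix gadget is likewise simplified to a single node with more than $K+1$ funds. This is the key idea missing from your argument; your multi-source gadget cannot be salvaged by spacing alone without a separate, nontrivial analysis.
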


\begin{proof}
We can use the same construction idea as is in the original proof; we only need to use a different version of the $d$-boolean gadgets in this model. Consider the parameters $d_i$ of all the boolean gadgets we use in the construction, and let us select a large constant $D$ that satisfies $D>d_i$ for all the values $d_i$. Then our new $d$-boolean gadget will be a rather simple construction: it will consist of a single node $w$ with $e_w=D \cdot d$, and an outgoing debt of $d$ towards the desired acting node.

With this choice of parameters, this single bank essentially implements the same behavior as the original gadget in the worst-set model, with $D$ essentially becoming the new ``unit'' of loss. More specifically, given a multiset of integers $S$ that contains the parameters of boolean gadgets attached to an acting node, for any $\rho = h \cdot D$ for an integer $h$ (i.e. when $\rho$ is a multiple of $D$), the worst possible loss for an acting node can be obtained as the largest sum in $S$ that still does not exceed $h$. As for the values $\rho$ that are not multiples of $D$: if the multiset $S$ allow us to select a subset that sums up to an integer $h_1$, and the largest integer below $h_1$ that can be formed from $S$ is $h_2$, then the loss between $h_2 \cdot D$ and $h_1 \cdot D$ will be described by the segments $(h_2 \cdot D,\, h_2)-(h_1 \cdot D - (h_1-h_2) ,\,h_2)$ and $(h_1 \cdot D - (h_1-h_2) ,\,h_2)-(h_1 \cdot D,\, h_1)$.

Hence if we select the same gadget parameters as in the original construction, then the shock function we receive will essentially be a continuous version of the worst-set shock function: (i) the worst-set function is first scaled $D$-wise wider along to the horizontal axis, (ii) each discrete point is turned into a horizontal line that goes until the $x$ coordinate of the next discrete point (forming a decreasing ``step function''), and (iii) then each vertical drop at the discrete points is replaced by a decrease of slope $-1$ that ends at the given point. As such, the same proof can be applied as in the original case.

The $1$-fix gadget can also be easily adapted: we just create a single node $u$ with more than $K+1$ funds (where $K$ is the upper limit on the shock size), and an outgoing debt of $1$ from $u$.
\end{proof}

\noindent Finally, we revisit our results on portfolio swapping and debt reorganization.

\begin{theorem}
Theorems \ref{th:portfolio} and \ref{th:reorg} also hold in the worst-sum model.
\end{theorem}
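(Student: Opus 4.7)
The plan is to reuse the same network constructions from the worst-set proofs of Theorems \ref{th:portfolio} and \ref{th:reorg} without topological modification, and to translate the worst-set analysis into worst-sum terms via Lemma \ref{lem:nosplit}. The crucial feature of both original constructions is that every source node already has the same amount of funds (namely $e_{s_1}=e_{s_2}=76$ in the first and $e_{s_1}=e_{s_2}=e_{s_3}=3$ in the second). Hence a worst-sum shock whose total size $\rho$ equals an integer multiple of the common source-fund value $F$ is, by Lemma \ref{lem:nosplit}, optimally realized by draining exactly that many whole sources; this immediately identifies the shock-function values at the discrete breakpoints $\rho \in \{0,F,2F,\ldots\}$ with the worst-set values at $k=0,1,2,\ldots$.

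For Theorem \ref{th:portfolio}, I would first observe that each intermediate bank has outgoing liability of weight $19$ equal to its maximum possible incoming payment, so its payment to its acting node is exactly $\alpha\, r_{s_1} + \beta\, r_{s_2}$ for its prescribed weight pair $(\alpha,\beta)$. Thus every acting node's assets are a linear function of $r_{s_1}, r_{s_2}$, and by Lemma \ref{lem:nosplit} combined with the fact that the worst-case adversary always drains the source with the larger coefficient first, the worst-sum shock function of each acting node (before any swap, after any single swap, or after the portfolio swap) is piecewise linear with breakpoints only at $\rho \in \{0,76,152\}$. The original worst-set analysis at $k=1$ now transfers verbatim to $\rho=76$: the initial coefficient pair $(37,39)$ gives remaining assets $37$, any single-contract swap forces one coefficient above $39$ and hence remaining assets strictly below $37$, while the portfolio swap produces the balanced pair $(38,38)$ giving remaining assets $38$. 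Piecewise linearity then extends these strict inequalities from the breakpoint $\rho=76$ across the entire open intervals $(0,152)$.

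For Theorem \ref{th:reorg}, the same approach applies with three equal-funded sources, so by Lemma \ref{lem:nosplit} the worst-sum shocks at $\rho=3,6,9$ correspond exactly to the worst-set cases $k=1,2,3$. The acting nodes' asset expressions $2r_{s_1}+r_{s_2}$, $2r_{s_2}+r_{s_3}$, $2r_{s_3}+r_{s_1}$ are again linear in the recovery rates, so the shock functions are piecewise linear with breakpoints at multiples of $3$, and their values there reproduce the original worst-set vector $(3,1,0,0)$. After the triple reorganization each acting node obtains the symmetric expression $r_{s_1}+r_{s_2}+r_{s_3}$, yielding the breakpoint values $(3,2,1,0)$ and hence a strictly larger piecewise linear shock function everywhere. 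The main obstacle is confirming that no pair swap secretly improves the shock function at some non-breakpoint $\rho$; this is resolved by the same observation as above, namely that the adversary's optimal choice between sources at each $\rho$ is fully determined by comparing coefficients in the (finitely many) candidate post-swap asset expressions, so the negative worst-set check for every single swap extends to the entire continuous domain by linearity.
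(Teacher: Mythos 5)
Your proposal is correct and takes essentially the same approach as the paper: reuse the same constructions, observe that the acting nodes' assets are linear in the source recovery rates so that (via the structure captured by Lemma \ref{lem:nosplit}) their worst-sum shock functions are piecewise linear with breakpoints exactly at integer multiples of the common source funds, and then read off the worst-set comparison directly at those breakpoints. As a minor side note, you correctly place the breakpoints for Theorem \ref{th:portfolio} at $\rho=76$ and $\rho=152$, whereas the paper's stated segments $(0,76)$, $(72,37)$, $(144,0)$ appear to contain an arithmetic slip; this does not affect the validity of the argument.
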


\begin{proof}
Theorem \ref{th:portfolio} can easily be adapted to the worst-sum case with the same construction as in the original proof. Note that a combination of $a_{v_1}=39 \cdot r_{s_1} + 37 \cdot r_{s_2}$ (or vice versa) means that the worst-sum shock function of the acting nodes consists of $2$ linear segments: $(0,76)-(72,37)-(144,0)$. The portfolio swapping in the example improves this to a single segment $(0,76)-(144,0)$. On the other hand, since any simple swap produces a worse combination $x_1 \cdot r_{s_1} + x_2 \cdot r_{s_2}$ where either $x_1>39$ or $x_2>39$, they all create a strictly worse shock function than the original one.

For Theorem \ref{th:reorg}, there is once again no need to change our construction from the original proof. The original configuration describes a shock function composed of $(0,3)-(3,1)-(6,0)$, while the reorganization results in a single segment $(0,3)-(9,0)$. Since any single swap results in a symmetrical variant of the original shock function for one of the acting nodes, there is again no positive swap in the network.
\end{proof}

\end{document}